\documentclass[twocolumn,10pt,notitlepage,nofootinbib,superscriptaddress,numerical]{revtex4-2} 

\usepackage{amsmath,amssymb}
\usepackage{dsfont}
\usepackage{xcolor}
\usepackage{amsthm}
\usepackage[colorlinks=true, pdfstartview=FitV, linkcolor=blue, citecolor=blue, urlcolor=blue]{hyperref}
\usepackage{array}

\newtheorem{theorem}{Theorem}[section]

\newtheorem{corollary}[theorem]{Corollary}
\newtheorem{prop}[theorem]{Proposition}
\newtheorem{lemma}[theorem]{Lemma}

\theoremstyle{definition}
\newtheorem{definition}[theorem]{Definition}
\newtheorem{example}{Example}[section]
\newtheorem{remark}{Remark}

\usepackage{graphicx}
\usepackage[caption=false]{subfig}
\graphicspath{
    {figures/}
}

\usepackage{mathtools} 
\usepackage[shortlabels]{enumitem}
\usepackage{bm}
\usepackage{bbm}
\usepackage{dsfont} 
\usepackage{etoolbox}

\docsvlist{A,B,C,D,E,F,G,H,I,J,K,L,M,N,O,P,Q,R,S,T,U,V,W,X,Y,Z}

\docsvlist{A,B,C,D,E,F,G,H,I,J,K,L,M,N,O,P,Q,R,S,T,U,V,W,X,Y,Z}

\docsvlist{A,B,C,D,E,F,G,H,I,J,K,L,M,N,O,P,Q,R,S,T,U,V,W,X,Y,Z}

\docsvlist{d}

\docsvlist{E,P,R,e,k,p,q,r,g,x,y}

\usepackage{ stmaryrd }
\def\llb{\llbracket}
\def\rrb{\rrbracket}

\def\ket{\rangle}
\def\lr{\leftrightarrow}

\def\Weight{\mathfrak{W}}
\def\weight{\mathfrak{w}}
\def\qubit{\mathfrak{q}}

\usepackage{algorithm}
\usepackage{algpseudocode}

\algnewcommand\algorithmicforeach{\textbf{for each}}
\algdef{S}[FOR]{ForEach}[1]{\algorithmicforeach\ #1\ \algorithmicdo}

\DeclareMathOperator{\im}{im}

\DeclareMathOperator{\cone}{cone}
\DeclareMathOperator{\coker}{coker}

\usepackage{tikz-cd} 
\usetikzlibrary{matrix,arrows,decorations.pathmorphing}
\usetikzlibrary{fit, shapes.geometric,calc}

\usepackage{pifont}

\usepackage{xcolor}

\begin{document}

\onecolumngrid

\clearpage

\title{Parsimonious Quantum Low-Density Parity-Check Code Surgery}
\author{Andrew C.~Yuan}
\affiliation{Iceberg Quantum, Sydney}
\affiliation{Condensed Matter Theory Center and Joint Quantum Institute,
Department of Physics, University of Maryland, College Park, Maryland 20742, USA}
\author{Alexander Cowtan}
\noaffiliation
\author{Zhiyang He}
\affiliation{Department of Mathematics, Massachusetts Institute of Technology, Cambridge, MA 02139, USA}
\author{Ting-Chun Lin}
\affiliation{Department of Physics, University of California San Diego, CA}
\author{Dominic J.~Williamson}
\affiliation{School of Physics, The University of Sydney, Sydney, NSW 2006, Australia}
\date{March 2026}

\begin{abstract}
    Quantum code surgery offers a flexible, low-overhead framework for executing logical measurements within quantum error-correcting codes. 
    It encompasses several fault-tolerant logical computation schemes, including parallel surgery, universal adapters and fast surgery, and serves as the key primitive in extractor architectures. 
    The efficiency of these schemes crucially depends on constructing low-overhead ancilla systems for measuring arbitrary logical operators in general quantum Low-Density Parity-Check (qLDPC) codes. 
    In this work, we introduce a method to construct an ancilla system of qubit size $O(\Weight \log \Weight)$ to measure an arbitrary logical Pauli operator of weight $\Weight$ in any qLDPC stabilizer code.
    This new construction immediately reduces the asymptotic overhead across various quantum code surgery schemes.
\end{abstract}

\maketitle

\section{Introduction}
\label{sec:intro}
\begin{figure}[ht]
\centering
\subfloat[]{%
    \centering
    \includegraphics[width=0.4\columnwidth]{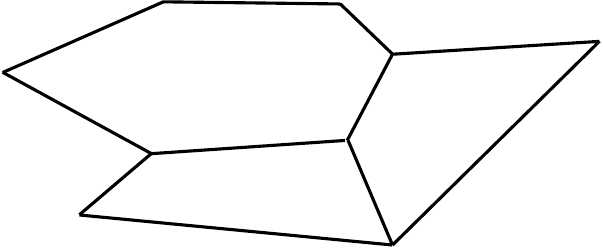}
}
\subfloat[\label{fig:decongestion-ancilla}]{%
    \centering
    \includegraphics[width=0.6\columnwidth]{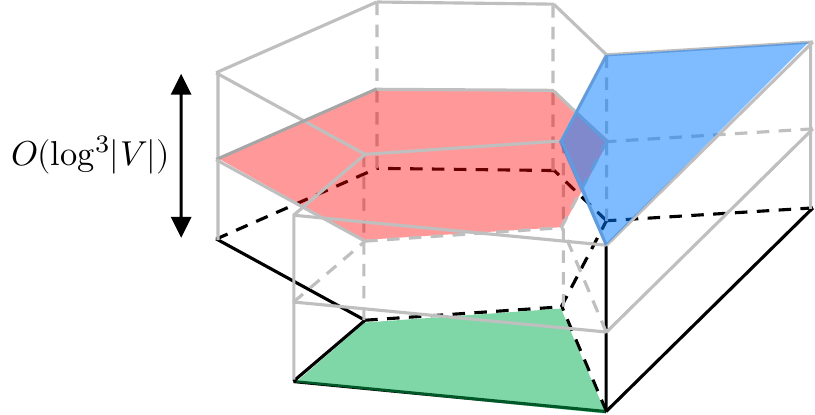}
}
\\
\subfloat[\label{fig:parsimonious-ancilla}]{%
    \centering
    \includegraphics[width=.7\columnwidth]{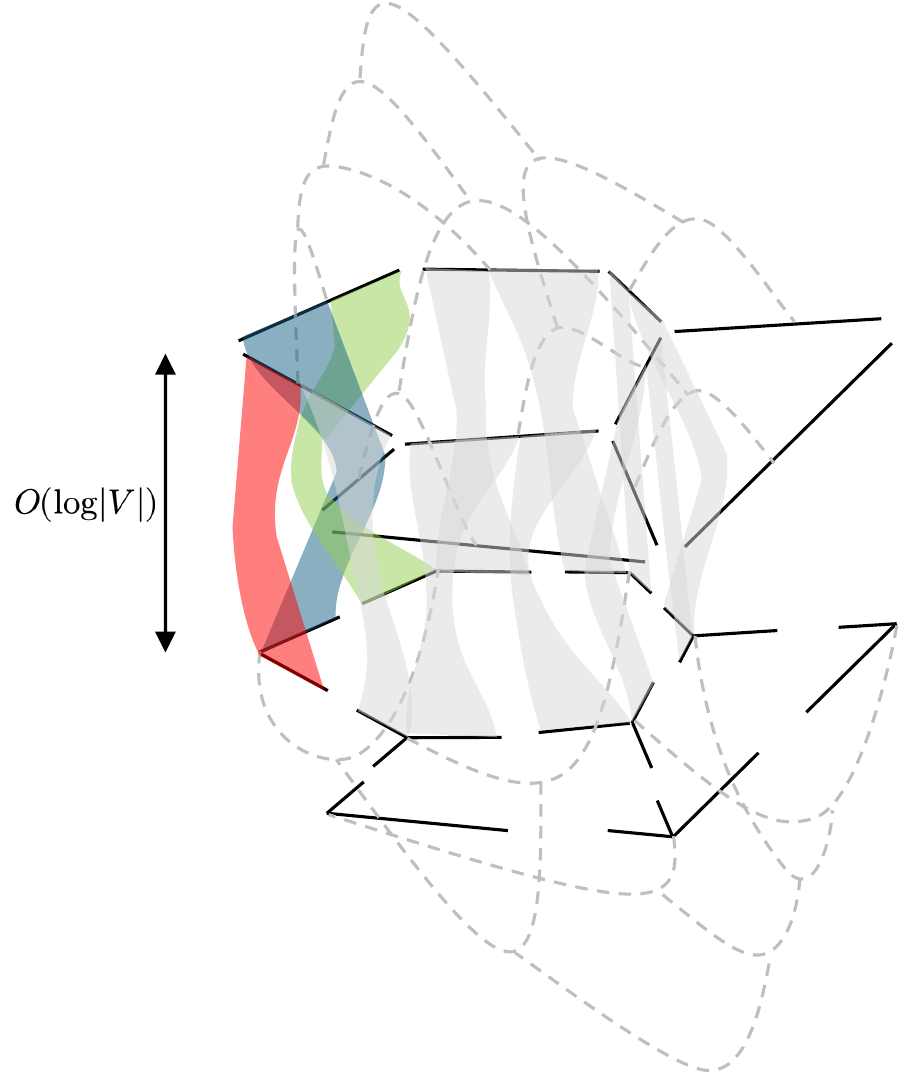}
}
\caption{Parsimonious Ancilla System. (a) An example of an induced graph of a logical operator. (b) The ancilla system obtained via the decongestion lemma + thickening with height $O(\log^3 |V|)$ such that cycles are now contractible due to the faces (colored) and the congestion per edge is controlled to remain constant. (c) The ancilla system obtained via the parsimonious cone. Specifically, two binary trees are constructed, based on the \textit{local structure} of edges (top) and vertices (bottom). \textit{Shuffles} of $O(\log |V|)$ binary trees induce faces (depicted in color and grey) so that the local views are paired up to reveal the \textit{global structure}. {Specifically, any cycle in (a) is embedded in (c) by traversing between the top and bottom binary trees so that the boundaries of the shuffling faces correspond to a cycle in (a) (e.g., colored and grey). Hence, cycles in (a) are products of local, low-density faces in (c).} 
}
\label{fig:cartoon-comparison}
\end{figure}


Quantum error correction forms the essential foundations for reliable, large-scale quantum computing~\cite{shor1995scheme, gottesman1997stabilizer}. 
While the surface code~\cite{bravyi1998quantum, kitaev2003fault} has been the primary focus for decades due to its high thresholds and 2D connectivity, its low encoding rate presents a significant challenge for scaling~\cite{bravyi2010tradeoffs}.
This has led to the development of quantum Low-Density Parity-Check (qLDPC) codes~\cite{breuckmann2021quantum}, which can achieve much higher encoding rates and distances.
Recent breakthroughs~\cite{hastings2021fiber,breuckmann2021balanced,panteleev2021degenerate} have culminated in the discovery of ``good" qLDPC codes~\cite{panteleev2022asymptotically}, in which the quantum dimension and distance scale linearly with the system size, and high-rate qLDPC codes of finite length with high pseudo-thresholds against circuit-level noise~\cite{bravyi2024high,xu2024constant-overhead}.

Although qLDPC codes provide efficient quantum memory, their high rate of encoded logical qubits makes it difficult to perform fault-tolerant computation. 
In high-rate codes, logical qubits are global degrees of freedom rather than localized patches of physical qubits. 
This makes it hard to address and perform computation on individual logical qubits without affecting the rest of the block. 
Standard methods like (fold)-transversal gates~\cite{breuckmann2024fold,eberhardt2024logical} and automorphisms~\cite{calderbank1997quantum, sayginel2025fault} are restricted by no-go theorems~\cite{eastin2009restrictions, chakraborty2026nogo}, which limit their utility for universal fault-tolerant quantum computation.

Quantum code surgery offers a flexible, code-agnostic method for performing logical operations. 
By using an auxiliary ancilla system to temporarily modify the code's stabilizers, surgery allows for the measurement of logical Pauli operators~\cite{cohen2022low, cowtan2024css, cross2024improved,williamson2024low,ide2025fault} and more general operators~\cite{davydova2025universalfaulttolerantquantum,huang2026hybridlatticesurgerynonclifford,zhu2026nonabelianqldpctqftformalism,williamson2026fastmagicstatepreparation,christos2026nonabelianquantumlowdensityparity}. 
This technique, based on principles of lattice surgery~\cite{horsman2012surface} and weight reduction~\cite{hastings2021quantum}, provides a path to universal fault tolerance when combined with magic state injection~\cite{bravyi2016trading,litinski2019game}, and serves as the core primitive in fault-tolerant qLDPC architectures that fall under the umbrella of extractor architectures~\cite{he2025extractors,yoder2025tour,webster2026pinnacle}.

A primary concern in code surgery is the ancilla overhead. 
Since logical operators in qLDPC codes can have high weights, the number of extra qubits needed to perform measurements can be substantial. 
If this overhead is too high, it undermines the physical qubit savings that qLDPC codes are intended to provide.
Fortunately, it was shown that an ancilla system with size $O(\Weight \log^3 \Weight)$ can be constructed for any logical operator of weight $\Weight$ in a general qLDPC stabilizer code~\cite{williamson2024low,he2025extractors}.
This construction, depicted in Fig. \ref{fig:decongestion-ancilla}, was obtained by modifying the weight reduction method pioneered by Hastings \cite{hastings2021quantum} -- specifically, the decongestion and thickening lemmas~\cite{freedman2021building}.
The ancilla system for the gauging logical measurement procedure~\cite{williamson2024low}, which measures a single logical operator, has since been used in the construction of several other surgery primitives, such as universal adapters~\cite{swaroop2026universal}, parallel logical measurement~\cite{cowtan2025parallel},  extractors~\cite{he2025extractors}, as well as the single-shot surgery gadgets of Ref.~\cite{baspin2025fast} and the constant-time surgery gadgets of Ref.~\cite{chang2026constant}.\footnote{We refer readers to Section~IA of Ref.~\cite{chang2026constant} for a brief discussion of the two different formulations of fast quantum surgery, single-shot surgery~\cite{hillmann2025single,tan2025single,baspin2025fast} and constant-time surgery~\cite{cowtan2025fast,chang2026constant}.} 
The spatial overheads of these constructions are tabulated in Table.~\ref{tab:tabulated-improvement}.

\begin{table*}
    \centering
    \begin{tabular}{ |c||c|c| }
    \hline
    Space overhead & Previous best & This work \\
    \hline
    Single measurement & $\mathcal{O}(\Weight \log^3 \Weight)$~\cite{williamson2024low}  &  $\mathcal{O}(\Weight \log \Weight)$\\
    Universal adapters & $\mathcal{O}(t\Weight\log^3\Weight)$~\cite{swaroop2026universal} & $\mathcal{O}(t\Weight \log \Weight)$ \\
    Parallel measurement & $\mathcal{O}(t\Weight (\log t + \log^3\Weight))$~\cite{cowtan2025parallel} & $\mathcal{O}(t\Weight(\log t + \log \Weight))$ \\
    Extractors & $\mathcal{O}(n \log^3 n)$~\cite{he2025extractors} & $\mathcal{O}(n \log n)$ \\
    Single-shot surgery & $\mathcal{O}(nkd(\log k + \log^3 n))$~\cite{baspin2025fast} & $\mathcal{O}(nkd(\log k + \log n))$\\
    Constant-time surgery & $\mathcal{O}(n \log^3 n)$~\cite{chang2026constant} & $\mathcal{O}(n \log n)$ \\
    \hline
    \end{tabular}
    \caption{Comparison of the asymptotic space overheads of different fault-tolerant logical measurement schemes on LDPC codes, prior to this work and after including the construction in this work. In the table, $t$ refers to the number of logical Pauli operators being measured, $\Weight$ is the maximum weight of the logical representatives being measured, $n$ is the blocklength, $k$ is the quantum dimension and $d$ is the code distance. 
    For adapters, it is presumed that $t$ different sparsely overlapping logical operators are measured jointly. 
    The first four gadgets are applicable to all stabilizer codes. 
    The single-shot surgery gadget from Ref.~\cite{baspin2025fast} is applicable to codes that admit single-shot syndrome extraction.
    The constant-time surgery gadget from Ref.~\cite{chang2026constant}, based on work of Ref.~\cite{cowtan2025fast}, is applicable to hypergraph product codes.
    }
    \label{tab:tabulated-improvement}
\end{table*}

In this work, we introduce a method for constructing single-operator ancilla systems with a guaranteed asymptotic overhead that is lower than previous constructions, i.e., ancilla systems of size $O(\Weight\log \Weight)$.
The improved method, depicted in Fig. \ref{fig:parsimonious-ancilla},  relies on the recently introduced quantum weight reduction procedure in Ref.~\cite{hsieh2025simplified} (which is based on the techniques in  Ref.~\cite{berdnikov2022parsimonious}). 
This improved ancilla system has the immediate knock-on effect of reducing the asymptotic space overhead for a variety of surgery schemes which are tabulated in Table~\ref{tab:tabulated-improvement}. 
We expect the $O(\Weight\log \Weight)$ upper bound on the overhead of our scheme for measuring a single operator of weight $\Weight$  to be the optimal result achievable by any general quantum code surgery procedure that is based on an auxiliary graph, which we briefly review in Sec.~\ref{sec:measurement-graph}. 
Due to the technical obscurity of Refs.~\cite{hsieh2025simplified, berdnikov2022parsimonious}, we provide detailed exposition and proofs of various claims, with our main result stated as Theorem \ref{thm:parsimonious-ancilla} (or more specifically, Theorem \ref{thm:cellulated-cone}).\footnote{As an aside, though irrelevant to the ancilla overhead, our analysis in Remark~\ref{rem:minor-difference} shows that a minor issue was not accounted for in the original reference~\cite{hsieh2025simplified}, which may result in weaker guarantees on the constants for reduced check weights and total qubit degree.}

The remaining sections are laid out as follows. 
In Section~\ref{sec:intuitive} we provide a high level overview of our construction.
In Section~\ref{sec:Prelim} we introduce relevant background material.
In Section~\ref{sec:cone} we describe an explicit construction of an ancilla system based on a parsimonious cone.
In Section~\ref{sec:attaching-ancilla-cone} we discuss how an ancilla system based on a parsimonious cone can be attached to a qLDPC code.
In Section~\ref{sec:discuss} we discuss our results and potential future directions.

\section{Overview of Construction}
\label{sec:intuitive}


The main goal of this work is to apply the quantum weight reduction procedure recently introduced in Ref.~\cite{hsieh2025simplified} to quantum code surgery. 
In pursuit of this goal, we provide a detailed exposition of the ideas and techniques in Ref.~\cite{hsieh2025simplified}.
We focus on the setting of measuring logical operators,
where the overhead scaling is dominated by the decongestion and thickening procedure introduced by Freedman and Hastings~\cite{freedman2021building} which has been used widely~\cite{hastings2021quantum,williamson2024low, swaroop2026universal,cowtan2025parallel,he2025extractors,baspin2025fast,chang2026constant}. 
Here, we describe a more efficient coning method that replaces the decongestion and thickening procedure and hence leads to more efficient quantum code surgery. 
For concreteness, in the main text we restrict our attention to a single logical measurement of an $X$-type or $Z$-type operator on a Calderbank-Shor-Steane (CSS) code~\cite{calderbank1996good}. 
The extension to other schemes, including to arbitrary Pauli operators on general stabilizer codes, and to extractor architectures, is presented in Appendix~\ref{app:other_schemes}.

\subsection{Measurement graph}
\label{sec:measurement-graph}

Direct measurement of an $X$-type logical operator on a qLDPC code $D$ is not fault tolerant in general, assuming the code has a large code distance. 
Specifically, let $X(\ell)$, for $\ell\in \dF_2^n$, denote a logical operator specified by a product of Pauli $X$ operators on qubits with $\ell_i=1$, and $\Weight\equiv |\ell|\ge d(D) \gg 1,$ denote the weight of this operator. 
A potentially fault-tolerant method of measuring $X(\ell)$ is to introduce an ancilla code $A$ such that by \textit{merging} $A$ and $D$, the output code $C$ now contains $\ell$ as a stabilizer check while remaining LDPC with large code distance $d(C)\gg 1$.
Hence, the logical measurement of $\ell$ can be inferred from $O\big(d(C)\big)$ rounds of syndrome extraction of the code $C$.

When constructing the ancilla system, the common first step is to consider the measurement graph $\sG$ of $\ell$, defined below.


\begin{definition}[Measurement Graph]
    \label{def:measurement-graph}
    The \textbf{measurement graph} $\sG$ of $\ell$ has vertices and edges defined as follows. 
    Let $\sV$ denote the collection of qubits on which $\ell$ acts non-trivially.
    For every $Z$-check $z$ of $D$ which acts nontrivially on qubits in $\ell$, the common overlap of qubits must be even in cardinality since they commute, and thus can be arbitrarily paired up $qq'$. 
    The collection of all pairs forms the edges $\sE$.

   To preserve the deformed code distance, arbitrary edges $\sE'$ are added to the graph while keeping the max degree fixed so that the measurement graph is connected and expanding with $\Omega(1)$ Cheeger constant.
\end{definition}

The rationale behind the choice of $\sG$ is that the unique connected component of $\sG$ is exactly equal to $\ell$, while $\sG$ itself is a graph with bounded degree.
Specifically, we consider an ancilla system $G$ constructed based on $\sG$ so that there is one qubit per edge and an $X$-check per vertex acting on adjacent edges, i.e., $\prod_{e\sim v} X^{G}_e$.
\textit{Merge} the ancilla $G$ with the data code $D$ so that ancilla $X$-checks also acts on the data qubit of the corresponding vertex $v$, i.e., $X_v^{D} \prod_{e\sim v} X^{G}_e$.
The $Z$-checks of $D$ which anticommute with the new ancilla $X$-checks can be deformed onto the ancilla edge qubits in the graph, and we obtain a new stabilizer code $C$.
In this case, each ancilla $X$-check has bounded weight, while the product of all ancilla $X$-checks is $=X^D(\ell)$, and thus the intuition is to infer the logical measurement of $X^D(\ell)$ via measurements of bounded weight ancilla $X$-checks. 

Unfortunately, this construction does not consider the possibility that the ancilla system has internal logical operators.
For example, before \textit{merging}, the ancilla $G$ may have nontrival $Z$-type logical operators which act on cycles in $\sG$.
Since the weight of these cycles are not well-controlled, they may cause the code distance of the output code $C$ to be small once the ancilla and data codes are merged together, and thus syndrome extraction of the output code $C$ is no longer guaranteed to be fault-tolerant.

\subsection{Coning}

\begin{figure}[ht]
\centering
\subfloat[\label{fig:graph}]{%
    \centering
    \includegraphics[width=0.45\columnwidth]{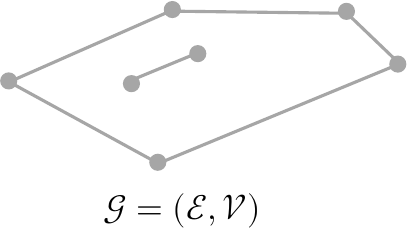}
}
\subfloat[\label{fig:cone-graph}]{%
    \centering
    \includegraphics[width=0.45\columnwidth]{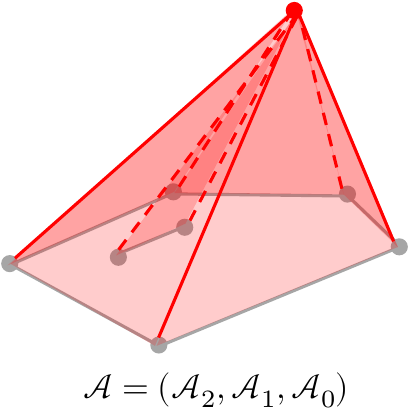}
}
\caption{Coning. (a) depicts the measurement graph $\sG$ of an $X$-type logical with vertices and edges depicted by dots and lines, respectively.
(b) depicts the cone of $\sG$, which is a cell complex obtained by adding an extra vertex (red), and connecting the vertex to all vertices in the original graph. All vertical faces (shaded red) are also added to the cell complex.
}
\label{fig:cone-replacement}
\end{figure}

A natural extension of the approach described in the previous subsection involves introducing bounded weight faces $\sF$ to the graph $\sG$ so that the cycles are generated by the boundaries, which ensures that the constructed ancilla code will possess no internal logicals.
This was first proposed in Ref.~\cite{cross2024improved}, without a guarantee on the weight and degree of the faces, and later refined in Ref.~\cite{williamson2024low},  providing an LDPC guarantee on the faces~\cite{freedman2021building}.
A topological method that achieves this goal is known as \textit{coning}.
This is illustrated in Fig. \ref{fig:cone-replacement}, where a single vertex $\star$ is added to a graph, and further edges are obtained by connecting the vertices of $\sG$ with $\star$, while faces are obtained by connecting the edges of $\sG$ with $\star$.
In this case, $\sG$ is a subgraph of $\cone \sG$. 
The cycles of $\sG$ are \textit{contractible} since they can be generated by the faces of $\cone \sG$ (e.g., the unique cycle in Fig. \ref{fig:graph} is the boundary of all vertical faces in Fig. \ref{fig:cone-graph}), and the cone has a unique connected component since all vertices are connected to $\star$. 

Coning introduces new problems, mainly the $\star$ vertex is not bounded in degree, and thus sparsification is necessary.
This was initially achieved using the decongestion lemma by Freedman and Hastings \cite{freedman2021building, hastings2021quantum}, in which an LDPC ancilla system of qubit size $O(\Weight \log^3 \Weight)$ is constructed.
Compared to the naive ancilla system $G$ with qubit size $O(\Weight)$, the decongestion procedure only incurs a polylogarithmic overhead, and thus is commonly used in the quantum weight reduction and logical measurement literature.
However, Ref.~\cite{hsieh2025simplified} applies techniques from Ref.~\cite{berdnikov2022parsimonious} to obtain an asymptotically lower ancilla size of $O(\Weight \log \Weight)$, while following the \textit{coning} philosophy.
This is captured in the following theorem. 

\begin{theorem}[Parsimonious Cone]
    \label{thm:parsimonious-ancilla}
    Let $\sG=(\sE,\sV)$ denote a graph with bounded degree.
    Then there exists cell complex $\sA$ with faces $\sA_2$, edges $\sA_1$, vertices $\sA_0$ such that 
    \begin{enumerate}
        \item All faces are adjacent to bounded number of edges (and vice-versa);
        \item All cycles of $\sA$ are generated by boundaries of faces in $\sA_2$;
        \item $(\sA_1,\sA_0)$ is a connected graph of bounded degree containing $\sG$ as a subgraph;
        \item $|\sA_0| = O(|\sV| \log|\sV|).$
    \end{enumerate}
\end{theorem}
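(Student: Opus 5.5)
The plan is to build $\sA$ as a sparsified version of $\cone\sG$, following the parsimonious philosophy of Refs.~\cite{berdnikov2022parsimonious,hsieh2025simplified} sketched in Fig.~\ref{fig:parsimonious-ancilla}. We replace the apex $\star$ by two binary trees of depth $O(\log|\sV|)$. The \emph{vertex tree} $T_0$ has leaves the vertices $\sV$. The \emph{edge tree} $T_1$ has leaves indexed by (edge, endpoint) incidences of $\sG$; bounded degree gives $O(|\sV|)$ of them. Each leaf of $T_1$ is joined to its vertex $v\in\sV$, and the two roots are identified.

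With this scaffold, a cycle $c$ of $\sG$ lifts to a sum of ``cone triangles.'' For each edge $e=vw\in c$ we take the closed loop formed by $e$, the two incidence leaves, and the path in $T_1$ joining them. Since $T_1$ is a tree, the sum of these loops over $e\in c$ is homologous to zero once we can route the ``$v$-side'' contributions through $T_0$. The substantive step is to supply bounded-weight faces realising these homologies. Naively, each vertex $v$'s two leaves in $T_1$ would need to be linked by a path of length $O(\log|\sV|)$ through the tree, with a face of unbounded weight or unbounded overlap.

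The fix is the \emph{shuffle}. We stack $L=O(\log|\sV|)$ layers $T^{(0)},\dots,T^{(L)}$ of complete binary trees on the same leaf set, with $T^{(0)}=T_1$ (ordered by edges) and $T^{(L)}=T_0$ (ordered by vertices, blown up by incidences). Consecutive layers differ by a sorting-network step, i.e.\ a butterfly/Beneš-type switch that swaps leaves in fixed bounded patterns. A permutation of $N=O(|\sV|)$ leaves is realised by a Beneš network of depth $O(\log N)$. Between consecutive layers we add \emph{vertical} edges connecting corresponding leaves and corresponding internal nodes, together with square and bounded-size faces. These faces express that the tree at level $i$ is homotopic to the tree at level $i+1$: a swap of two sibling subtrees is a bounded local move once internal nodes are paired. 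Each layer has $O(N)$ vertices and there are $O(\log N)$ layers, giving $|\sA_0|=O(|\sV|\log|\sV|)$ (item 4). Every vertex touches boundedly many tree edges, vertical edges and faces (items 1 and 3). Connectivity is clear, and $\sG$ is a subgraph.

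For item 2 we argue $H_1(\sA;\dF_2)=0$ by a filtration and Mayer–Vietoris style count, rather than tracking cycles by hand. The stack of trees and vertical faces deformation-retracts onto $T_0$, since each layer-to-layer slab is a mapping cylinder of a homotopy equivalence between trees realised by the local swap faces; so it is contractible. Attaching $\sG$'s edges adds one generator per edge. For each edge $e=vw$ we then add one bounded face. This face runs $e$, $v$'s incidence leaf in $T^{(L)}$, the sibling link in $T_0$ between the two incidences of $e$ (which we arrange adjacent at some shuffle level), and back to $w$. The arrangement is chosen so that these incidences become siblings at a level where the face has bounded length. This kills each new generator, so $H_1=0$.

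The main obstacle is this last arrangement. We must design the shuffle permutation and ordering so that, for every edge of $\sG$, its two endpoint-incidences become adjacent at some layer, every face stays of bounded size, and each layer node is used by $O(1)$ faces. We would first try to choose the layer-$0$ order to group incidences by edge and the layer-$L$ order to group them by vertex, with a single Beneš network in between. We would then verify that the sibling faces at the two ends account for all cycles.
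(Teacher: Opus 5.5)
Your architecture matches the paper's: an edge-paired tree and a vertex-grouped tree joined by $O(\log N)$ layers of $O(N)$ cells. But there is a genuine gap exactly at the step you flag as the main obstacle, and neither mechanism you name closes it. Consecutive layers must differ by a move that (i) changes the hierarchical clustering of the leaves and (ii) can be filled by 2-cells of bounded size, with each edge in boundedly many.

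A swap of sibling subtrees satisfies (ii) but not (i). It is a tree automorphism, so the next layer is the same leaf-labelled tree. No sequence of such swaps can turn $T^{(0)}$ into $T^{(L)}$, because their clusters cross: $\{(v,e),(w,e)\}$ versus the incidences at $v$.

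A Beneš stage satisfies (i), but you supply nothing for (ii). It exchanges individual pairs of leaves, with independently set switches, whose common ancestor may be $\Theta(\log N)$ levels up. Below that ancestor there are no ``corresponding internal nodes'' to join. The evident cycles closing the vertical leaf edges have length $\Theta(\log N)$ and share the top tree edges.

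The paper's missing ingredient is a move with both properties. Label each incidence (an edge of the bipartite subdivision) by the concatenation $\hat s(v)\hat s(e)\in\{0,1\}^h$. The needed regrouping is then the fixed permutation of bit positions that swaps the two blocks. It is done in $h=O(\log|\mathcal{V}|)$ rounds of disjoint adjacent swaps $\sigma_{i,i+1}$.

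Such a swap only re-pairs the grandchildren of each height-$(i-1)$ node $s$, from $\{s00,s01\},\{s10,s11\}$ to $\{s00,s10\},\{s01,s11\}$, and leaves every cluster at other heights intact. The interpolation tree $S$ has its height-$i$ level collapsed to one vertex per height-$(i-1)$ node. Together with the mapping cone of $g\colon T'S\oplus ST\to T'\oplus S\oplus T$, it gives 2-cells of size at most $4$, with each edge in at most $4$ of them (Lemma~\ref{lem:interpolation-cone}).

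The price is that these trees a priori have $2^h=\Theta(|\mathcal{V}||\mathcal{E}|)$ leaves. One must prune to the $O(|\mathcal{V}|)$ leaves actually used and contract degree-$2$ vertices. Showing that $g$ is still a chain map afterwards (Lemma~\ref{lem:SWAP-and-prune*}, Proposition~\ref{prop:crucial-relation*}) is what brings the size from $O(|\mathcal{V}|^2\log|\mathcal{V}|)$ down to $O(|\mathcal{V}|\log|\mathcal{V}|)$.

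Independently, your item-2 step of one bounded face per edge of $\mathcal{G}$ fails as described. In your stack, $(v,e)$ and $(w,e)$ are siblings only near layer $0$, while $v$ and $w$ are clusters of $T^{(L)}$. So the face you describe must descend through $\Theta(L)=\Theta(\log N)$ layers and come back. Your alternative scaffold, joining $v$ directly to the leaves $(v,e)$ of $T^{(0)}$, only relocates the problem. It moves it to the two-edge paths from $(v,e)$ through $v$ to $(v,e')$, which you yourself note must be routed through the vertex-grouped end of the stack.

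The paper accepts long cycles here. It embeds only a $6h$-deformation of $\mathcal{G}$ with congestion $\Delta$: each edge $vw$ becomes the path from $v$ across all interpolation layers to the node of $e$ in the far end tree, and back to $w$. The Cellulation lemma of Section~\ref{sec:direct-embedding-instead-of-subdivision} then reinserts $e$ with a ladder of $O(h)$ bounded faces per edge. This adds no vertices and keeps faces of size at most $5$. It does raise edge--face incidences and vertex degrees by $\Delta$, which is why Theorem~\ref{thm:cellulated-cone} has weights $4+\Delta$ and $9+\Delta$.
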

See Theorem \ref{thm:cellulated-cone} for full statement and proof.
The ancilla LDPC CSS code $A$ is constructed so that there is a qubit per edge in $\sA_1$, and an $X$- ($Z$-) check per vertex $\sA_0$ (per face $\sA_2$) acting on adjacent qubits.
This ancilla system can then be used for measuring the logical operator $X^D(\ell)$ fault-tolerantly.






\subsection{Parsimonious Measurement}

To perform a fault-tolerant logical measuremnt, an ancilla system is constructed by applying Theorem \ref{thm:parsimonious-ancilla} to an appropriate sparse expander graph $\mathcal{G}$, and using it to design an ancilla system following the standard approach in literature, e.g., Ref.~\cite{williamson2024low}.
For the sake of completeness, we include the details of this procedure in Section \ref{sec:attaching-ancilla-cone}, whose main result is presented here.

\begin{theorem}[Deformed Code]
    \label{thm:deformed}
    Let $D$ be a $\llb n,k,d\rrb$ data LDPC code with a chosen logical $X$-type operator $X^D(\ell)$  of weight $\Weight$. 
    Let $A$ be the ancilla LDPC CSS code constructed from the parsimonious cone $\sA$ in Theorem \ref{thm:parsimonious-ancilla}.
    Then there exists a method to construct a deformed $\llb n+O(\Weight \log \Weight), k-1, \Omega(d)\rrb$ LDPC code $C$ from ancilla $A$ and data $D$ such that $X^{D}(\ell)$ is a stabilizer of $C$.
\end{theorem}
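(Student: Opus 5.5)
We follow the gauging-measurement template of Ref.~\cite{williamson2024low}, with the parsimonious cone of Theorem~\ref{thm:parsimonious-ancilla} replacing the decongested and thickened cone. Let $\sG$ be the measurement graph of $\ell$ from Definition~\ref{def:measurement-graph}. Its vertex set $\sV$ is the support of $\ell$, so $|\sV|=\Weight$. It has bounded degree and Cheeger constant $\Omega(1)$. Apply Theorem~\ref{thm:parsimonious-ancilla} to $\sG$ to obtain $\sA$, and define the code $C$ as follows.

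\emph{Qubits.} $C$ has the $n$ data qubits plus one ancilla qubit per edge of $\sA_1$.

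\emph{$X$-checks.} For each vertex $v\in\sA_0$ we add the check $X_v^{D\,[v\in\sV]}\prod_{e\ni v}X_e$, where the data factor $X_v^D$ appears only when $v\in\sV$. The original $X$-checks of $D$ are kept.

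\emph{$Z$-checks.} For each face $f\in\sA_2$ we add $\prod_{e\in\partial f}Z_e$. Each $Z$-check $z$ of $D$ is deformed to $z\cdot Z(\gamma_z)$. Here $\gamma_z$ is the set of edges of $\sG\subseteq\sA_1$ formed by the pairing of $\operatorname{supp}z\cap\ell$ in Definition~\ref{def:measurement-graph}, so $\partial\gamma_z=\operatorname{supp}z\cap\ell$.

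\emph{Parameters.} Since $(\sA_1,\sA_0)$ has bounded degree, $|\sA_1|=O(|\sA_0|)=O(\Weight\log\Weight)$, which gives the qubit count. For commutation: a deformed $z$ meets a vertex check in a number of qubits equal to $[v\in\operatorname{supp}z\cap\ell]+[v\in\partial\gamma_z]$, which is even. Face checks commute with vertex checks because $\partial\partial=0$.

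\emph{LDPC.} Items 1 and 3 of Theorem~\ref{thm:parsimonious-ancilla} bound the weights of the new checks. Each edge of $\sG$ lies in $O(1)$ paths $\gamma_z$ because $D$ is LDPC and each $z$ overlaps $\ell$ in $O(1)$ qubits.

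\emph{Stabilizer and $k-1$.} Connectivity (item 3) gives $\prod_{v\in\sA_0}(\text{check}_v)=X^D(\ell)$, so $X^D(\ell)$ is a stabilizer. For the logical count I would compute the homology of the merged complex. Item 2 says $H_1(\sA)=0$, so the ancilla contributes no new $Z$-logicals. Connectivity says exactly one new $X$-relation is created. A rank count (new qubits $|\sA_1|$, new independent $X$-checks $|\sA_0|$, new independent $Z$-checks $|\sA_1|-|\sA_0|+1$) then gives $k-1$.

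The main step is the distance bound $\Omega(d)$; I would treat $X$- and $Z$-type logicals separately.

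\emph{$Z$-type logicals.} Take a $Z$-logical $Z(\alpha)Z(\beta)$ of $C$, with $\alpha$ on the data and $\beta$ on the ancilla. Commuting with the vertex checks forces $\partial\beta=\alpha|_\ell$. I then want to clean $\beta$ into $\sG$ using face checks. Any $1$-chain of $\sA$ with boundary supported on $\sV$ is homologous to a chain in $\sG$: since $\sG\subseteq\sA$ is connected and $H_1(\sA)=0$, pick any chain $\beta'$ in $\sG$ with the same boundary, and then $\beta-\beta'$ is a cycle, hence a sum of face boundaries. Once $\beta\subseteq\sG$, multiplying by deformed $z$-checks and using the expansion argument of Ref.~\cite{williamson2024low} reduces to a logical of $D$. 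This shows $|\alpha|+|\beta|\ge\Omega(d)$. The Cheeger constant controls how far cleaning can increase weight; I expect that step to need care, since a long $\beta'$ in $\sG$ could be heavier than $\beta$. I would argue instead that $|\alpha|\ge d$ whenever the restriction of $Z(\alpha)Z(\beta)$ to the data is a nontrivial logical, and otherwise that the logical is trivial.

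\emph{$X$-type logicals.} These are supported on data qubits together with a cocycle on $\sA$. Multiplying by vertex checks, the ancilla part can be taken to be a cocycle, i.e.\ a coboundary of a vertex set $S\subseteq\sA_0$. This case is the obstacle I anticipate. Previously one used that every edge of $\sG$ lies in $\sA$ and that $\sG$ expands, giving $|\delta S\cap\sG|\ge h\min(|S\cap\sV|,|\sV\setminus S|)$. The data part then changes by $X(S\cap\sV)$, and the weight bound $\Omega(d)$ follows as in Ref.~\cite{williamson2024low}. The only new property needed from $\sA$ is that $\sG$ embeds as a subgraph, with no need for the cone itself to expand. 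So I would verify that the argument of Ref.~\cite{williamson2024low} uses only $\delta S$ restricted to the edges of $\sG$, and carry it through.
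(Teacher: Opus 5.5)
Your proposal is correct and is essentially the paper's proof written in stabilizer language. The code you build is $M=\cone(g)$ with $g^\top=f^\top\pi$. Your $X$-type bound, which restricts $\delta S$ to the edges of the measurement graph and applies its Cheeger constant, is exactly the paper's estimate $|\delta^C s|\ge|\pi\delta^C s|=|\delta^G\pi s|$. Your fallback $Z$-type argument is the paper's coefficient-one cleaning giving $d(M)\ge d(D)$: the data part is a logical of $D$, and if it is a stabilizer of $D$ the leftover ancilla cycle is a sum of faces since $H_1=0$. Your rank count for $k-1$ stands in for the Snake Lemma identity $H_1(M^\top)\cong H_1(D^\top)/[\ell]$; note that it needs $X^D(\ell)$ not to be a product of checks of $D$, so that all vertex checks are independent of the old ones.

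You can drop your first $Z$-type route (cleaning into the graph and invoking expansion), since expansion plays no role for $Z$-type logicals here.
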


\begin{proof}
    See Section \ref{sec:attaching-ancilla-cone} or specifically, Theorem \ref{thm:deformed-code}
\end{proof}

\section{Preliminaries}
\label{sec:Prelim}

In this section we introduce relevant background material that is used throughout the remainder of the work. 

\subsection{Complexes}
\begin{definition}
\label{def:complex}
A \textbf{complex} $C$ is a sequence of (finite-dimensional) $\dF_2$-vector spaces $C_{i}$ together with $\dF_2$-linear maps\footnote{Subscript $i$ may be omitted.} $\partial_{i}:C_{i} \to C_{i-1}$, called the \textbf{differentials} of $C$, such that $\partial_{i}\partial_{i+1}:C_{i+1} \to C_{i-1}$ is zero. We write\footnote{We adopt convention that $C$ ends at $\cdots \to C_0$.}
\begin{equation}
    C = \cdots \to  C_{i} \xrightarrow{\partial_{i}} C_{i-1} \to \cdots
\end{equation}
Note that $\im \partial_{i+1} \subseteq \ker \partial_i$ for all $i$, and thus the \textbf{$i$-homology} of $C$ is defined as
\begin{equation}
    H_i(C) = \ker \partial_i/\im \partial_{i+1}
\end{equation}
Denote the equivalence class of $\ell \in C_{i}$ as $[\ell]\in H_i(C)$ so that the notation $[\cdots]$ is reserved for the quotient map -- conversely, we may also write $[\ell] \in H_i(C)$ without specifying the representation $\ell$.
The complex $C$ is \textbf{exact} if $H_i(C) =0$ for all $i$.
\end{definition}

\begin{definition}[Basis]
\label{def:basis}
A complex $C$ is equipped with an \textbf{basis} (which we always assume henceforth) if each degree $C_i = \dF_2^{n_i}$ with basis elements referred as \textbf{$i$-cells}. 
The  \textbf{(Hamming) weight} $|\ell|$ is then defined for any $\ell \in C_{i}$.
\end{definition}



\begin{definition}[Cocomplex]
Let $C$ denote a complex.
Since the inner product is nondegenerate, the transpose $\partial_i^\top:C_{i-1} \to C_i$ is well-defined so that that \textbf{cocomplex} is that given by
\begin{equation}
    C^\top = \cdots \leftarrow C_{i} \xleftarrow{\partial_{i}^\top} C_{i-1} \leftarrow \cdots
\end{equation}
The \textbf{$i$-cohomology} of $C$ is $H^i(C) = \ker \partial_{i+1}^\top/\im \partial_i^\top$ which is isomorphic to $H_i(C)$.
\end{definition}

\begin{example}[CSS Codes]
\label{ex:CSS}
Consider a CSS code with parity check matrices $H_X,H_Z$. Then the code can be treated as a length-2 complex, i.e.,
\begin{equation}
    C = Z \xrightarrow{H_Z^{\top}} Q \xrightarrow{H_X} X
\end{equation}
where we adopt the convention so that $H_1(C),H^1(C)$ correspond to $Z$-, $X$-type logicals, respectively.
For convenience, denote the weights $\weight_Z,\weight_X,\qubit_Z,\qubit_X$ of code $C$ (corresponding to the max column weight of $\partial,\partial^{\top}$) via the following diagram
\begin{equation}
\label{eq:weight-diagram}
Z \xrightleftharpoons[\qubit_Z]{\weight_Z} Q \xrightleftharpoons[\weight_X]{\qubit_X} X
\end{equation}
If the weights are $O(1)$ for a family of increasing size codes, then we refer to the family as (quantum) \textbf{low-density parity check} or \textbf{(q)LDPC} for short.
\end{example}

\begin{definition}[Systolic Distance]
    Let $C$ be a complex. Then the $i$\textbf{-systolic distance} $d_i(C)$) is 
    \begin{equation}
        d_i(C) = \min_{\ell:0\ne [\ell]\in H_i(C)} |\ell|
    \end{equation}
    Similarly, define the \textbf{(co)systolic distance} $d^{i}(C)$ via $H^{i}(C)$. Note that when relating to CSS codes, $d_1(C),d^1(C)$ is the $Z$-, $X$-type code distance.
\end{definition}
\subsection{Maps}

\begin{definition}[Chain Map]
    Let $A,D$ be complexes with $\partial^A,\partial^D$.
    Then a \textbf{chain map} $g:A\to D$ is a collection of linear maps $g_i:A_i\to D_i$ such that $g\partial^A = \partial^D g$, i.e., the following diagram commutes
    \begin{equation}
    \label{eq:chain-map}
    \begin{tikzpicture}[baseline]
    \matrix(a)[matrix of math nodes, nodes in empty cells, nodes={minimum size=25pt},
    row sep=1.5em, column sep=2em,
    text height=1.25ex, text depth=0.25ex]
    {\cdots & A_i  & A_{i-1} & \cdots \\
     \cdots & D_i  & D_{i-1} & \cdots \\};
    \path[->,font=\scriptsize]
    (a-1-1) edge node[above]{} (a-1-2)
    (a-1-2) edge node[above]{$\partial^{A}$} (a-1-3)
    (a-1-3) edge node[above]{} (a-1-4)
    (a-2-1) edge node[above]{} (a-2-2)
    (a-2-2) edge node[above]{$\partial^{D}$} (a-2-3)
    (a-2-3) edge node[above]{} (a-2-4);
    \path[->,font=\scriptsize]
    (a-1-1) edge (a-2-1)
    (a-1-2) edge node[right]{$g_i$} (a-2-2)
    (a-1-3) edge node[right]{$g_{i-1}$} (a-2-3)
    (a-1-4) edge (a-2-4);
    \end{tikzpicture}
    \end{equation}
\end{definition}

\begin{remark}
    A chain map $g:A\to D$ induces a linear map between homologies $[g]:H_i(A)\to H_i(D)$, where the bracket notation $[g]$ is used so that we have the concise relation $[g][\ell]=[g\ell]$ for any $[\ell]\in H_i(A)$
\end{remark}
\begin{definition}[Cone]
\label{def:cone}
    Let $g:A\to D[-1]$ be a chain map, where $D_{i}[-1]=D_{i-1}$ for all $i$. The \textbf{(mapping) cone} $\cone g$ is the complex with vector spaces $A_i\oplus D_{i}$ and differentials
    \begin{equation}
        \partial =
        \begin{pmatrix}
            \partial^{A} &\\
            g & \partial^{D}
        \end{pmatrix}
    \end{equation}
    We denote this by the following diagram
    \begin{equation}
        \label{eq:cone}
        \begin{tikzpicture}[baseline]
        \matrix(a)[matrix of math nodes, nodes in empty cells, nodes={minimum size=25pt},
        row sep=1.5em, column sep=2em,
        text height=1.25ex, text depth=0.25ex]
        {&\cdots & A_i   & A_{i-1} & \cdots \\
         \cdots & D_i  & D_{i-1} & \cdots &\\};
        \path[->,font=\scriptsize]
        (a-1-2) edge (a-1-3)
        (a-1-3) edge node[above]{$\partial^{C}$} (a-1-4)
        (a-1-4) edge (a-1-5)
        (a-2-1) edge (a-2-2)
        (a-2-2) edge node[above]{$\partial^{D}$} (a-2-3)
        (a-2-3) edge (a-2-4);
        \path[->,font=\scriptsize]
        (a-1-2) edge (a-2-2)
        (a-1-3) edge node[right]{$g$} (a-2-3)
        (a-1-4) edge (a-2-4);
        \end{tikzpicture}
    \end{equation}
\end{definition}

\begin{lemma}[Snake, Lemma 1.5.3 of \cite{Weibel_1994}]
    \label{lem:snake}
    Let $g:A\to D[-1]$ be a chain map and $C=\cone g$ be the cone. Then the following sequence\footnote{A nice mnemonic for applying the Snake Lemma is to remove arrows along rows of Eq.~\eqref{eq:cone}, replace them with the corresponding homologies of the (row) complex, and then add the diagonal arrows as shown in Eq.~\eqref{eq:snake}.} is exact.
    \begin{equation}
    \label{eq:snake}
    \begin{tikzpicture}[baseline]
    \matrix(a)[matrix of math nodes, 
               nodes in empty cells, 
               nodes={anchor=center, inner sep=2pt},
               row sep=2em, 
               column sep={3.2em, between origins},
               text height=1.5ex, text depth=0.25ex]
    {
         \cdots & & H_{i+1}(A) & & H_{i}(A) & & \cdots\\
         & \cdots & & H_i (C) & & \cdots  & \\
        \cdots & & H_{i}(D) & & H_{i-1} (D) & & \\
    };
    \path[->, >=stealth, font=\scriptsize]
        (a-3-1) edge (a-2-2)
        (a-2-2) edge (a-1-3)
        (a-1-3) edge node[right]{$[g]$} (a-3-3)
        (a-3-3) edge node[below right]{$[\iota]$} (a-2-4)
        (a-2-4) edge node[above left]{$[\pi]$} (a-1-5)
        (a-1-5) edge node[right]{$[g]$} (a-3-5)
        (a-3-5) edge (a-2-6)
        (a-2-6) edge (a-1-7);
    \end{tikzpicture}
    \end{equation}
    where $\iota:D\hookrightarrow C$ is the inclusion (chain) map and $\pi:C \to A$ is the projection (chain) map. $[g]$ denotes the map induced by $g$ on homology; similarly for $[\iota]$ and $[\pi]$.
\end{lemma}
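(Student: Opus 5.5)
The statement is the long exact sequence in homology associated to a degreewise split short exact sequence of complexes; the plan is to exhibit that sequence, identify the connecting homomorphism with $[g]$, and check exactness at each of the three spots directly.

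\textbf{Setup.} First I check that $\cone g$ is a complex. For $(a,d)\in A_i\oplus D_i$ we have $\partial(a,d)=(\partial^A a,\ g a+\partial^D d)$. Applying $\partial$ twice gives
\begin{equation*}
\partial^2(a,d)=\bigl(0,\ g\partial^A a+\partial^D g a\bigr).
\end{equation*}
This vanishes because $g$ is a chain map and we work over $\dF_2$, so signs are irrelevant.

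Next I verify that $\iota(d)=(0,d)$ and $\pi(a,d)=a$ are chain maps. This is immediate from the block lower-triangular form of $\partial$. In each degree,
\begin{equation*}
0\to D_i\xrightarrow{\iota}C_i\xrightarrow{\pi}A_i\to 0
\end{equation*}
is exact. One could now invoke the standard zig-zag lemma, but I would prove the three exactness statements directly, since everything is explicit.

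\textbf{Connecting map.} Take a cycle $a\in A_i$, so $\partial^A a=0$. Its lift $(a,0)$ has boundary $(0,ga)$, which lies in $\iota(D_{i-1})$. So the connecting homomorphism is $[a]\mapsto[ga]$, i.e. it is exactly $[g]$. It is well defined on homology: if $a=\partial^A a'$, then $ga=g\partial^A a'=\partial^D(ga')$ is a boundary.

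\textbf{Exactness at the three spots.}

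\emph{At $H_i(D)$.} The composite $[\iota][g]$ sends $[a]$ to $[(0,ga)]=[\partial(a,0)]=0$. Conversely, suppose $(0,d)=\partial(a,d')$. Then $\partial^A a=0$ and $d=ga+\partial^D d'$, so $[d]=[g][a]$.

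\emph{At $H_i(C)$.} The composite $\pi\iota$ is zero. Conversely, suppose $(a,d)$ is a cycle with $a=\partial^A a'$. Then
\begin{equation*}
(a,d)-\partial(a',0)=(0,\ d-ga').
\end{equation*}
The right-hand side is a cycle of $C$ lying in $\iota(D_i)$, so $d-ga'$ is a cycle of $D$ and $[(a,d)]=[\iota][d-ga']$.

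\emph{At $H_i(A)$.} If $(a,d)$ is a cycle, then $ga=\partial^D d$, so $[g][\pi][(a,d)]=0$. Conversely, suppose $[ga]=0$, say $ga=\partial^D d$. Then $(a,d)$ is a cycle of $C$ (over $\dF_2$, $ga+\partial^D d=0$), and $\pi$ maps it to $a$.

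\textbf{Main obstacle.} There is no real difficulty here. The only care needed is the degree bookkeeping induced by the shift $D[-1]$, so that $g$ lands in the degree where the diagonal arrow of Eq.~\eqref{eq:snake} points.
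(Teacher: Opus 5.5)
Your proposal is correct and follows essentially the same route as the cited source (Weibel, Lemma 1.5.3), which the paper invokes without reproving. That route is the degreewise split short exact sequence $0\to D\xrightarrow{\iota}\cone g\xrightarrow{\pi}A\to 0$, with the connecting homomorphism identified as $[g]$ by lifting a cycle $a$ to $(a,0)$. Your direct exactness checks at $H_i(D)$, $H_i(C)$ and $H_i(A)$ simply reprove the zig-zag lemma in this split setting, and you handle the $D[-1]$ degree bookkeeping correctly.
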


\begin{example}[Logical Measurement]
    \label{ex:logical-measurement}
    As shown in Ref. \cite{ide2025fault}, the cone is a mathematical framework used for logical measurement.\footnote{More generally, it was shown in Ref.~\cite{yuan2025unified} that an iterative cone framework can be used to describe any quantum code embedding.} Specifically, if $A,D$ denote the ancilla and data CSS codes, respectively, where $H_1(A)=0$, and $g$ denotes a way to \textit{merge} the two code together, then by the Snake Lemma \ref{lem:snake}, 
    \begin{equation}
        H_1(C) = \coker [g] \equiv  H_1(D)/\im [g]
    \end{equation}
    Moreover, if $g$ is \textit{cleverly chosen} so that $\im [g] = [\ell^{\star}]$ is a specific logical operator, then the logical $\ell^{\star}$ in $D$ is now \textit{contractible} in $C$, i.e., $\ell^{\star}$ is now a stabilizer in $C$ and thus can be measured from syndrome extraction.
\end{example}

\begin{remark}
    While the cone is a useful framework for logical measurement, it is not the only one~\cite{cowtan2024css, huang2022homomorphic}. In particular, there are fault-tolerant logical measurement procedures which are not cones but instead quotients~\cite{poirson2025engineering} or other operations, such as ancilla-free surgery between geometrically enhanced 4D toric codes~\cite{aasen2025geometrically}.
\end{remark}

\begin{example}[Weights]
    \label{ex:weight-reduction}
    Suppose that the data code $D$ is LDPC with $d_1(D)\gg 1$, then to perform logical measurement as in Example \ref{ex:logical-measurement} in a fault tolerant manner, we desire the constructed cone $C$ to also be LDPC with $d_1(C)\gg 1$.
    This suggests that the ancilla $A$ must be LDPC, but more importantly, the map $g$ must be sparse, i.e., bounded in max column and row weight.
    In particular, similar to Eq.~\eqref{eq:weight-diagram}, the weights of $C$ can be obtained using the following 2D diagram
    \begin{equation}
    \label{eq:weight-diagram-cone}
    \begin{tikzpicture}[baseline]
    \matrix(a)[matrix of math nodes, nodes in empty cells, nodes={minimum size=25pt},
    row sep=1.5em, column sep=2em,
    text height=1.25ex, text depth=0.25ex]
    {& A_2  & A_1 & A_0 \\
     D_2  & D_1 & D_0 &\\};
    \path[-left to,font=\scriptsize,transform canvas={yshift=0.2ex}]
    (a-1-2) edge node[above]{}  (a-1-3)
    (a-1-3) edge node[above]{}  (a-1-4)
    (a-2-1) edge node[above]{}  (a-2-2)
    (a-2-2) edge node[above]{}  (a-2-3);
    \path[left to-,font=\scriptsize,transform canvas={yshift=-0.2ex}]
    (a-1-2) edge node[below]{}  (a-1-3)
    (a-1-3) edge node[below]{}  (a-1-4)
    (a-2-1) edge node[below]{}  (a-2-2)
    (a-2-2) edge node[below]{}  (a-2-3);
    \path[-left to,font=\scriptsize,transform canvas={xshift=0.2ex}]
    (a-1-2) edge node[right]{}  (a-2-2)
    (a-1-3) edge node[right]{}  (a-2-3);
    \path[left to-,font=\scriptsize,transform canvas={xshift=-0.2ex}]
    (a-1-2) edge node[left]{}  (a-2-2)
    (a-1-3) edge node[left]{}  (a-2-3);
    \end{tikzpicture}
    \end{equation}
    Meanwhile to show that $d_1(C)\gg 1$, we in general require a relative expansion property\footnote{See, e.g., Definition~2 of~Ref.~\cite{swaroop2026universal} and Cleaning Lemma of Ref.~\cite{yuan2025unified}.} between $A$ and $g$.
    In our case (see Section \ref{sec:attaching-ancilla-cone}), we use a stronger property of $A$ containing an expanding graph.
\end{example}

\begin{remark}
    Note that in Example \ref{ex:logical-measurement}-\ref{ex:weight-reduction}, the  logical $\ell^{\star}$  in general has large weight $\ge d_1(D)$ despite the fact that $C$ is LDPC. Hence, logical measurement can be regarded as a particular application of weight reduction.
\end{remark}

\subsection{Graphs}
\label{sec:graphs}
\begin{definition}[Graphs]
    Given a graph $\sG=(\sV,\sE)$, there is an associated complex $G=E\to V$ where $E,V$ are the $\dF_2$ vector spaces generated by $\sE,\sV$ and the differential map is given by the adjacency relation. 
    Conversely, a complex $G:E\to V$ with basis can define a graph $\sG=(\sE,\sV)$ if $|\partial e|=2$ for all basis elements $e\in E$. In either case, we refer to $G$ as a \textbf{graph} complex with associated graph $\sG$.
    For example, the repetition code $R(L)$ is a graph complex. 
    We further call complex $C=C_2 \to C_1 \to C_0$ a \textbf{cell} complex if $C_1 \to C_0$ is a graph complex.
\end{definition}

\begin{definition}[Union of Graphs]
    Let $\sG^A,\sG^B$ denote graphs with possible overlap in vertices and edges. Then their union $\sG \equiv \sG^A \cup \sG^B$ with vertices $\sV^A\cup \sV^B$ and edges consisting of the union $\sE^A\cup \sE^B$ is also a graph.
    Equivalently, given graph complexes $G^A,G^B$ with associated graphs $\sG^{A},\sG^{B}$ with possible overlap, let the \textbf{union} $G$ be the graph complex with the associated union of graph $\sG^{A} \cup \sG^{B}$. 
\end{definition}

\begin{definition}[Subgraph]
    A graph complex $G$ is a \textbf{subgraph} of a graph complex $G':E'\to V'$ with \textbf{embeddings} $\iota:G\to G'$ if $\iota$ is injective and a chain map.
\end{definition}

\begin{remark}
    If $G$ is the union of graph complexes $G^A,G^B$, then it's clear that $G^A,G^B$ are subgraphs of $G$ with the natural inclusion map as the embedding.
\end{remark}

\begin{definition}[Deformation]
    \label{def:deformation}
    Let $G$ be a graph complex and $L\ge 1$. 
    For every edge $e=v_0 v_1 \in \sE$, let $\gamma(e)$ be a path graph connecting $v_0 \leftrightarrow v_1$ of edge-length $L$.
    Viewing $\gamma (e)$ as graph complexes, a $L$\textbf{-deformation} $\gamma G$ of $G$ is the union of all $\gamma (e),e\in \sE$.
    We refer to $\gamma$ as the \textbf{deformation} map.
    Moreover, if every edge in $\gamma G$ is in at most $\rho$ many paths $\gamma(e),e\in \sE$, we say the the deformation has \textbf{congestion} $\rho$.
    
\end{definition}

\begin{remark}
    If $\gamma(e)$ are simple paths and disjoint (except at endpoints of $e$) for distinct $e\in \sE$, then the deformation is a subdivision of $G$ is the conventional sense, i.e., each edge $e \in \sE$ is subdivided to obtain $\gamma(e)$.
\end{remark}

\section{Parsimonious Cones}
\label{sec:cone}
This section aims to provide an explicit construction of Ref. \cite{hsieh2025simplified}.
Specifically, Sections \ref{sec:graphs}-\ref{sec:direct-embedding-instead-of-subdivision} explicitly constructs the necessary low overhead ancilla, resulting in Theorem \ref{thm:cellulated-cone}.


To help readers understand the technical steps in this section, we briefly summarize our construction at a high level. 
Without loss of generality, suppose we have a bipartite graph $G$ with vertex partition $\sV=\sV_0\sqcup \sV_1$ which we wish to embed into a sparse cone. 
Assume for simplicity that $|\sV_0| = |\sV_1| = 2^h$.
First consider, for the sake of intuition, the graph we obtain by attaching a binary tree of height $h$ to each side of the bipartition. 
In other words, attach to $G$ two binary trees of height $h$, one with leaves $\sV_0$ and one with leaves $\sV_1$.
Next, for every non-leaf vertex in the left tree, add an edge connecting it to the same non-leaf vertex in the right tree. 
Conceptually, we observe that this new graph is similar to the cone in Fig.~\ref{fig:cone-replacement}, where all the cycles of the original graph $G$ can be generated by the ``vertical'' faces/cycles in the new graph. 
In this way, we have coned the bipartite graph $G$ into a cell complex, where all vertices have bounded degree. 

This cell complex has several issues, of course. Notably, the vertical faces may have size $O(\log n)$, and an edge (in the binary trees) may be incident to a large number of such faces. 
In some sense, these issues are due to the fact that the vertices in $\sV_0$ may be connected to vertices in $\sV_1$ arbitrarily (in the original graph $G$), and our complex above is only capable of coning connections ``local'' to the binary trees, but not ``global'' connections.
Specifically, vertices $u\in \sV_0$ and $v\in \sV_1$ may be connected in $G$ but have distance $O(\log n)$ on the binary tree, which induces a large vertical $2$-cell; a large number of these $2$-cells may overlap significantly.
We therefore need to sparsify this cone.

The method from Refs.~\cite{hsieh2025simplified,berdnikov2022parsimonious} effectively  ``stretches'' the above cell complex by a factor of $O(\log n)$ by building a sequence of $O(\log n)$ trees. Specifically, an edge in $G$ will be stretched into a length $O(\log n)$ path. 
The trees are consecutively connected, but not in the trivial 1-to-1 manner as described above. 
Instead, vertices in consecutive trees are connected in an ``interpolating'' manner, so that arbitrary connections in $G$ are instantiated by a sequence of $O(\log n)$ local connections between binary trees. 
Since the connections between binary trees are local, the vertical faces have a bounded-weight generating set, and every edge is only incident to a constant number of the generating $2$-cells. 
This final cone is the output of our construction, which can be used for logical measurement. 

In this section, we first describe how to build a sequence of interpolating, consecutively connected trees such that the face-to-edge incidence matrix is sparse. 
By applying this construction to a bipartite graph $G$ in Sec.~\ref{sec:raw_cones}, we obtain a large cone complex of size $O(|\sV|^2\log |\sV|)$ in which a deformation of $G$ is embedded. 
Most of this cone complex turns out to be extraneous, and in Sec.~\ref{sec:pruning} we prune this complex down to size $O(|\sV|\log |\sV|)$ while preserving its important properties. 




\subsection{Raw Cones}
\label{sec:raw_cones}
Given a binary tree, a standard method to label the vertices is to start from the root via an empty label $|\varnothing\ket$, continue along the branches via labels $|s_1\ket$ and iterate so that vertices at height $\ell$ are labeled via $|s_1\cdots s_\ell\ket$ where $s_i\in \{0,1\}$.
In this case, $s_1\cdots s_\ell s_{\ell+1}\cdots$ are branches of $s_1\cdots s_{\ell}$. 
More generally, however, there's no reason to put the $\ell$ bit at position $\ell$, e.g., one can label vertices at height 2 via $|s_2 s_1\ket$ instead.
In fact, the labeling of adjacent heights do not even need to be consistent, e.g., vertices at height 2 can be labeled via $|s_2 s_1\ket$ while vertices at height 3 can be labeled via $|s_1 s_2 s_3\ket$.
Therefore, to include the most general scenario, we need the following definition.

\begin{definition}
    Let $\tau_\ell$ be a permutation on $\ell$ bits and $s=s_1\cdots s_\ell \in \{0,1\}^\ell$ be a string of bits. Then denote $\tau s$ the string obtained from $s$ via ordering the positions of bits based on the permutation $\tau$, i.e., $\tau s =s_{\bar{\tau}(1)}\cdots s_{\bar{\tau}(\ell)}$ where $\bar{\tau}$ is the inverse permutation of $\tau$ and we have omitted the subscript $\ell$ when the length of the bit string is clear.
    Also write $\| s\|=\ell$ to denote the \textbf{length} of the bit string (different from the Hamming weight).
\end{definition}

\begin{definition}[Binary tree with labels]
    Let $\bm{\tau} = (\tau_{1},...,\tau_{h})$ denote a sequence of permutations $\tau_\ell$ on $1\le \ell\le h$ bits. 
    A \textbf{binary tree} $T$ of height $h$ with \textbf{labels} $\bm{\tau}$ is that with vertices labeled by $|\bar{\tau}s\ket$ where $s\in \{0,1\}^\ell$ are bit strings of length $0\le \ell \le h$ and edges $\| \bar{\tau}s\ket$ where $s$ are bit strings of length $1\le \ell \le h$, and adjacency relation
    \begin{equation}
        \label{eq:tree-adjacency}
        \partial^T \| \bar{\tau}s\ket = |\bar{\tau}s\ket + |\bar{\tau} \omega s\ket,
    \end{equation}
    where $\omega s = s_1\cdots s_{\ell-1}$ removes the last bit.
    If $\tau_\ell$ is the trivial permutation for all $\ell$, then we say that the binary tree has \textbf{standard labels}.
\end{definition}

\begin{remark}
    Note that given an edge $\| s'\ket$ on a binary tree with labels $\bm{\tau}$, we must first determine the \textit{standard string}, i.e., $s' = \bar{\tau}s$, from which $\omega s$ can then be defined so that the adjacency relation $\partial^T$ is well-defined.
    Also note that $|\bar{\tau}(s_1\cdots s_{\ell} s_{\ell+1}\cdots)\ket$ are descendents of $|\bar{\tau} (s_1\cdots s_{\ell})\ket$.
\end{remark}
\begin{definition}[SWAP]
    Let $\{\sigma_{i,i+1},i\in I\}$ denote a collection of permutations $\sigma_{i,i+1}$ which swap bits $i\lr i+1$ and $I$ is a collection of indices $i$ such that $\{i,i+1\},\{j,j+1\}$ are disjoint for distinct $i,j\in I$. 
    Then the collection induces a sequence $\bm{\sigma}=(\sigma_1,...,\sigma_h)$ of permutations such that
    \begin{equation}
        \sigma_\ell = \prod_{i\in I:i < \ell} \sigma_{i,i+1}.
    \end{equation}
    We refer to $\bm{\sigma}$ as a \textbf{(non-interacting) SWAP} with indices $I$. Given a sequence of permutations $\tau$, we then abuse notation and write $\bm{\sigma\tau}$ as the sequence of permutations $(\sigma\tau)_\ell,1\le \ell \le h$ obtained via entry-wise multiplication/composition, i.e., $(\sigma\tau)_\ell =\sigma_\ell \tau_\ell$.
\end{definition}

The following definition and the subsequent Lemma~\ref{lem:interpolation-cone} are the most important primitives used in our construction. 
Given two binary trees $T',T$ with labels $\bm{\sigma\tau}, \bm{\tau}$, we create a tree $S$ where the vertices at the heights $i\in I$, i.e. the indices of the non-interacting SWAP $\bm{\sigma}$, are replaced by non-branching points (see Fig. \ref{fig:interpolation-cone}). 
We then locally connect the trees $T'$ with $S$, and $S$ with $T$. 
Returning to the intuition described at the start of this section, these local connections generate slightly non-local connections between the leaves of $T'$ and $T$, while ensuring the $2$-cells to $1$-cells incidence is sparse. 
By interpolating with $O(\log n)$ such trees, we will be able to generate arbitrary connections between the leaves of the first and last trees through these local connections.

\begin{definition}[Interpolating via SWAP]
    Let $T',T$ be binary trees with labels $\bm{\sigma\tau}, \bm{\tau}$, respectively, where $\bm{\sigma}$ is a non-interacting SWAP with indices $I$. 
    Let $s\in\{0,1\}^\ell$ and $\sigma^\star$ denote the map
    \begin{equation}
        \sigma^{\star} s = 
        \begin{cases}
            s \quad &\text{if }\ell\notin I, \\
            \omega(s)\star \quad &\text{if }\ell\in I.
        \end{cases}
    \end{equation}
    Here $\star$ is just an arbitrary placeholder for index $\ell$ so that $\omega(s)\star$ is a string of length $\ell$.
    In particular, the $\star$ notation denotes a non-branching point, as shown in Fig. \ref{fig:interpolation-cone}.
    Note that $\omega \sigma^{\star} = \omega$.
    Define the \textbf{interpolation} complex $S$ between $T'$ and $T$ as the graph complex with vertices $|\bar{\tau}\sigma^{\star} s\ket$ for $s\in \{0,1\}^\ell,0\le \ell \le h$ and edges $\| \bar{\tau} \sigma^{\star} s\ket$ for $s\in \{0,1\}^\ell,1\le \ell \le h$ and adjacency relation
    \begin{equation}
        \label{eq:interpolation-adjacency}
        \partial^{S} \| \bar{\tau} \sigma^{\star} s\ket = |\bar{\tau} \sigma^{\star} s\ket +|\bar{\tau}\sigma^{\star} \omega s\ket.
    \end{equation}
\end{definition}


\begin{figure}[ht]
\centering
\subfloat[]{%
    \centering
    \includegraphics[width=0.7\columnwidth]{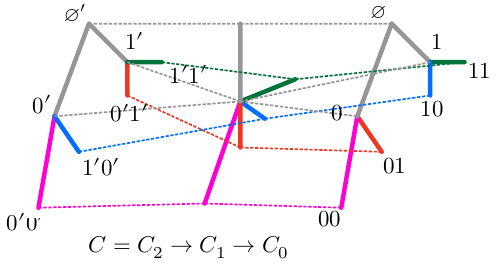}
}
\subfloat[\label{fig:input-graph}]{%
    \centering
    \includegraphics[width=0.22\columnwidth]{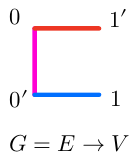}
}
\caption{Interpolation Cone. (a) Left $T'$ and right $T$ binary trees denotes those with labels $\bm{\tau}'=\bm{\sigma\tau}, \bm{\tau}$, respectively, where $\bm{\sigma}$ is a SWAP, while the middle denotes the interpolation graph complex $S$ of $T',T$. The vertices of $T',T$ are labeled by $|s'\ket,|s\ket$ with bit strings $s$, respectively.
Note that, e.g., $|1'0'\ket = |\tau'(01)\ket$.
(b) denotes the graph $G$ from which $C$ is constructed as described in Theorem~\ref{thm:parsimonious-cone}.
}
\label{fig:interpolation-cone}
\end{figure}

\begin{remark}[Crucial Relation]
    \label{rem:crucial-relation}
    Note the crucial algebraic relation $\sigma^{\star} \omega \sigma=\sigma^{\star} \sigma\omega$. In fact, $\sigma^{\star} \omega^{m} \sigma = \sigma^{\star} \sigma \omega^{m}$ where $\omega^{m} = \omega \omega \cdots \omega$ is the composition of $m$ truncations $\omega$.
\end{remark}

The following theorem is illustrated by the example in Fig. \ref{fig:interpolation-cone}.
To describe the $2$-cells (faces) and $1$-cells (edges) between consecutive trees, we define intermediate $1$-complexes and chain maps such that the mapping cone (Def.~\ref{def:cone}) is the desired $2$-complex.

\begin{lemma}[Interpolation Cone, Fig. \ref{fig:interpolation-cone}]
    \label{lem:interpolation-cone}
    Let $T',T$ be binary trees with labels $\bm{\sigma\tau}, \bm{\tau}$, respectively, where $\bm{\sigma}$ is a non-interacting SWAP with indices $I$, and $S$ denote the interpolation complex.
    Let $T'S,ST$ be binary trees with labels $\bm{\sigma\tau},\bm{\tau}$, respectively, i.e., copies of $T',T$, respectively.
    Define $g_i:(T'S \oplus ST)_i \to (T'\oplus S\oplus T)_i$ for $i=1,0$ via
    \begin{align}
        g_1\| \bar{\tau} \sigma s, T'S\ket &= \| \bar{\tau} \sigma s, T'\ket + \| \bar{\tau} \sigma^{\star} \sigma s, S\ket \\
        g_1\| \bar{\tau} s, ST\ket &= \| \bar{\tau} \sigma^{\star} s, S\ket + \| \bar{\tau} s, T\ket
    \end{align}
    and similarly\footnote{Where $\|\cdots \ket$ is replaced with $|\cdots\ket$.} for $g_0$.
    Then $g$ is a chain map and the cone complex $C=\cone (g)$ -- see Fig. \ref{fig:interpolation-cone} -- satisfies
    \begin{align}
        \label{eq:cone-props1}
        H_2(C)=H_1(C)=0, &\quad H_0(C) \cong \dF_2,\\
        \label{eq:cone-props2}
        C_2 \xrightleftharpoons[4]{4} C_1 \xrightleftharpoons[9]{2} C_0.
    \end{align}
\end{lemma}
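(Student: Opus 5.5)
The plan is to check the three claims in order: that $g$ is a chain map, the homology \eqref{eq:cone-props1} via the Snake Lemma \ref{lem:snake}, and the weights \eqref{eq:cone-props2} by direct counting in the cone.

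\emph{Chain map.} Since $g$ sends each cell of $T'S$ (resp.\ $ST$) to a sum of one cell in $T'$ (resp.\ $T$) and one cell in $S$, it suffices to check $\partial g=g\partial$ on each of these two components separately.
\begin{itemize}
\item[] On the $T$-component this is immediate, because $ST$ and $T$ are copies of each other.
\item[] On the $S$-component for $ST$, Eq.~\eqref{eq:interpolation-adjacency} gives $\partial^S\|\bar\tau\sigma^\star s\ket=|\bar\tau\sigma^\star s\ket+|\bar\tau\sigma^\star\omega s\ket$. This matches $g_0$ applied to $\partial^{ST}\|\bar\tau s\ket=|\bar\tau s\ket+|\bar\tau\omega s\ket$.
\item[] On the $T'$-side, an edge of $T'S$ with standard string $\sigma s$ has parent with standard string $\omega\sigma s$. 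Hence $g_0\partial$ produces the $S$-vertex $|\bar\tau\sigma^\star\omega\sigma s\ket$, whereas $\partial^S g_1$ produces $|\bar\tau\sigma^\star\sigma\omega s\ket$.
\end{itemize}
These two $S$-vertices agree exactly by the crucial relation of Remark~\ref{rem:crucial-relation}. I expect this bookkeeping with the two labelings to be the most delicate point, mainly in keeping track of which string is ``standard.''

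\emph{$S$ is a tree.} Next I will observe that $S$ is a tree. Distinct $s$ may give the same label $\sigma^\star s$, and in that case the corresponding vertices and edges are identified. Every vertex $|\bar\tau\sigma^\star s\ket$ with $\|s\|\ge 1$ has a unique parent $|\bar\tau\sigma^\star\omega s\ket$, since $\omega\sigma^\star=\omega$. There is also a unique root. Hence $S$ is connected and acyclic, exactly like $T'$, $T$, $T'S$ and $ST$.

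\emph{Homology.} Set $A=T'S\oplus ST$ and $D=T'\oplus S\oplus T$. Every summand is a tree, so $H_1(A)=H_1(D)=0$, $H_0(A)\cong\dF_2^2$ and $H_0(D)\cong\dF_2^3$, and $A_2=D_2=0$. Lemma~\ref{lem:snake} then yields the exact sequence
\[
0\to H_2(C)\to H_1(A)=0,\qquad 0=H_1(D)\to H_1(C)\to H_0(A)\xrightarrow{[g]}H_0(D)\to H_0(C)\to 0 .
\]
The first sequence gives $H_2(C)=0$. On $H_0$, the map $[g]$ sends the two generators to $[T']+[S]$ and $[S]+[T]$. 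These are linearly independent, so $[g]$ is injective, which gives $H_1(C)=0$, and its cokernel is one-dimensional, which gives $H_0(C)\cong\dF_2$.

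\emph{Weights.} In the cone we have $C_2=A_1$, $C_1=A_0\oplus D_1$ and $C_0=D_0$.
\begin{itemize}
\item[] A face $\|x,T'S\ket$ has boundary consisting of its two endpoint vertical edges (from $\partial^A$) plus two horizontal edges (from $g_1$). This gives face weight $4$.
\item[] Conversely, a vertical edge (an element of $A_0$) lies in at most $3$ faces, since tree vertices have degree at most $3$. An edge of $T'$ or $T$ lies in exactly one face. An edge of $S$ has at most $2$ preimages from $T'S$ and at most $2$ from $ST$, because of the identifications under $\sigma^\star$. Hence every edge lies in at most $4$ faces.
\item[] Every $1$-cell has exactly $2$ endpoints.
\item[] For vertex degrees, a vertex of $T'$ or $T$ has tree degree at most $3$ plus one vertical edge. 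A vertex of $S$ at a non-branching height $\ell\in I$ has one parent and up to $4$ children, since $\ell+1\notin I$ by non-interaction. It also carries at most $2+2$ vertical edges, for a total of at most $9$. In all other cases a vertex of $S$ has fewer incident edges.
\end{itemize}
These counts give \eqref{eq:cone-props2}. The verification is routine once the identifications in $S$ are made explicit.
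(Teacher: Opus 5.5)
Your proposal is correct and follows essentially the same route as the paper. The chain-map check reduces to the crucial relation $\sigma^{\star}\omega\sigma=\sigma^{\star}\sigma\omega$. The homology follows from the Snake Lemma because all five summands are trees, so $[g_0]$ is the repetition-code differential with two bits and three checks. The weights come from the same componentwise counts: $2+2$ edges per face, $\max(3,4)$ faces per edge, $2$ endpoints per edge, and $5+4$ edges per vertex.

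There is one harmless bookkeeping slip. Since $\overline{\sigma\tau}=\bar{\tau}\bar{\sigma}=\bar{\tau}\sigma$, the edge $\|\bar{\tau}\sigma s,T'S\ket$ has standard string $s$ rather than $\sigma s$. So in fact $g_0\partial$ produces $|\bar{\tau}\sigma^{\star}\sigma\omega s\ket$ and $\partial^{S}g_1$ produces $|\bar{\tau}\sigma^{\star}\omega\sigma s\ket$, the reverse of what you wrote, but the identity needed is the same.
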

\begin{proof}
    Let us consider the map $g$ via the following diagram
    \begin{equation}
    \label{eq:interpolation-diagram}
    \begin{tikzpicture}[baseline]
    \matrix(a)[matrix of math nodes, nodes in empty cells, nodes={minimum size=25pt},
    row sep=2em, column sep=4em,
    text height=1.25ex, text depth=0.25ex]
    {(T'S\oplus ST)_1  & (T'S\oplus ST)_0 \\
    (T'\oplus S\oplus T)_1  & (T'\oplus S\oplus T)_0 \\};
    \path[->,font=\scriptsize]
    (a-1-1) edge node[above]{$\partial^{T'S} \oplus \partial^{ST}$} (a-1-2)
    (a-2-1) edge node[above]{$\partial^{T'}\oplus \partial^{S} \oplus \partial^{T}$} (a-2-2);
    \path[->,font=\scriptsize]
    (a-1-1) edge node[right]{$g_1$} (a-2-1)
    (a-1-2) edge node[right]{$g_0$} (a-2-2);
    \end{tikzpicture}
    \end{equation}
    Note the crucial relation $\sigma^{\star} \omega \sigma=\sigma^{\star} \sigma\omega$ and thus $g$ is a chain map. Specifically,
    \begin{align}
        g_0 \partial^{T'S} \| \bar{\tau} \sigma s,T'S\ket &= |\bar{\tau}\sigma s, T'\ket+|\bar{\tau}\sigma\omega s, T'\ket \\
        &\quad +|\bar{\tau} \sigma^{\star} \sigma s, S\ket +|\bar{\tau} \sigma^{\star} \sigma \omega s, S\ket \nonumber\\
        &= \partial^{T'\oplus S} g_1 \| \bar{\tau} \sigma s, T'S\ket,
    \end{align}
    and the case is similar for edges in $ST$.
    Since $T'$ is a tree, $H_1(T')=0$ (no cycles) and $H_0(T') \cong \dF_2$ (connected) where any single vertex of $T'$ is a representation of $H_0(T')$. 
    The case is similar for $S,T,T'S,ST$. 
    Hence, by the cone structure, the statement follows.
    Specifically, the Snake Lemma implies that the following is an exact sequence, where $A=T'S\oplus ST$ and $B=T'\oplus S\oplus T$ for shorthand (compare structure with Eq.~\eqref{eq:interpolation-diagram}).
    \begin{equation}
    \label{eq:interpolation-diagram-snake}
    \begin{tikzpicture}[baseline]
    \matrix(a)[matrix of math nodes, 
               nodes in empty cells, 
               nodes={anchor=center, inner sep=2pt},
               row sep=2em, 
               column sep={3.2em, between origins},
               text height=1.5ex, text depth=0.25ex]
    {
         & & H_1(A)=0 & & H_0(A) & & 0\\
         & H_2(C) & & H_1 (C) & & H_0(C) & \\
        0 & & H_1(B)=0 & & H_0 (B) & & \\
    };
    \path[->, >=stealth, font=\scriptsize]
        (a-3-1) edge (a-2-2)
        (a-2-2) edge (a-1-3)
        (a-1-3) edge (a-3-3)
        (a-3-3) edge (a-2-4)
        (a-2-4) edge (a-1-5)
        (a-1-5) edge node[right]{$[g_0]$} (a-3-5)
        (a-3-5) edge (a-2-6)
        (a-2-6) edge (a-1-7);
    \end{tikzpicture}
    \end{equation}
    Thus $H_1(C) \cong \ker [g_0]$ and $H_0(C) \cong \coker [g_0]$.
    Note that $[g_0]$ is equal to the differential of the repetition code (edges $\to$ vertices) with two bits and three checks, and thus the statement follows.

    Finally, note that the weight of the cone complex is obtained via the following diagram, 
    \begin{equation}
    \label{eq:weight-diagram-interpolation-cone}
    \begin{tikzpicture}[baseline]
    \matrix(a)[matrix of math nodes, nodes in empty cells, nodes={minimum size=25pt},
    row sep=2em, column sep=2em,
    text height=1.25ex, text depth=0.25ex]
    {(T'S\oplus ST)_1 & (T'S\oplus ST)_0 \\
    (T'\oplus S\oplus T)_1 & (T'\oplus S\oplus T)_0\\};
    \path[-left to,font=\scriptsize,transform canvas={yshift=0.2ex}]
    (a-1-1) edge node[above]{$2$}  (a-1-2)
    (a-2-1) edge node[above]{$2$}  (a-2-2);
    \path[left to-,font=\scriptsize,transform canvas={yshift=-0.2ex}]
    (a-1-1) edge node[below]{$3$}  (a-1-2)
    (a-2-1) edge node[below]{$5$}  (a-2-2);
    \path[-left to,font=\scriptsize,transform canvas={xshift=0.2ex}]
    (a-1-1) edge node[right]{$2$}  (a-2-1)
    (a-1-2) edge node[right]{$2$}  (a-2-2);
    \path[left to-,font=\scriptsize,transform canvas={xshift=-0.2ex}]
    (a-1-1) edge node[left]{$4$}  (a-2-1)
    (a-1-2) edge node[left]{$4$}  (a-2-2);
    \end{tikzpicture}
    \end{equation}
    where the number denotes the maximum column weight of the corresponding map.
\end{proof}

\begin{remark}[Interpolation]
    \label{rem:interpolation}
    Note that $C$ is a cell complex.
    Also note that for leaves $s'\in \{0,1\}^h$ so that $\sigma^{\star} s'=s'$,
    the complex $C$ can be regarded as the interpolation which connects vertices $|s',T'\ket$ and $|s',T\ket$ via unique edges passing through $|s',S\ket$ so that for distinct $s'$, the connecting paths are vertex-disjoint.
    A depiction is shown in Fig. \ref{fig:interpolation-cone}.
\end{remark}

\begin{remark}[Interpolating Sequence]
    \label{rem:interpolation-sequence}
    Let $T^0,...,T^n$ denote a sequence of binary trees with labels $\bm{\tau}^0,...,\bm{\tau}^n$ such that adjacent labels $\bm{\tau}^{i},\bm{\tau}^{i-1}$ differ by a non-interacting SWAP, i.e., $\bm{\tau}^{i} = \bm{\sigma}^{i} \bm{\tau}^{i-1}$ for a SWAP $\bm{\sigma}^{i}$. 
    We refer to $T^0,...,T^n$ as a \textbf{SWAP sequence}. 
    Then a similar construction can be obtained via a chain map $g: T^n S^{n} \oplus S^{n} T^{n-1} \cdots \oplus S^1 T^0 \to T^n \oplus \cdots \oplus T^0$ so that $C=\cone(g)$ is a cell complex which satisfies Eqs.~\eqref{eq:cone-props1}-\eqref{eq:cone-props2}.
    We refer to $C$ as the cone complex \textbf{interpolating} the SWAP sequence $T^0,...,T^n$. Similarly, vertices $|s,T^i\ket,i=0,...,n$ are connected together via the interpolation as described previously.
\end{remark}


As depicted in Fig. \ref{fig:interpolation-cone}, given a bipartite graph $G$, a parsimonious cone $C$ is constructed so that a deformation of $G$ is embedded within $C$. 
The formal statement is given as follows.

\begin{theorem}[Parsimonious Cone]
    \label{thm:parsimonious-cone}
    Let $G=E\to V$ denote the complex of a bipartite graph with vertex partitions $\sV=\sV_0\sqcup \sV_1$ where $|\sV_i| \le 2^{h_i}$ and max degree $\Delta(G)$.
    Let $h=h_0+h_1$ and $\tau_h$ denote the permutation on $h$ bits which maps 
    \begin{equation}
        (s_1\cdots s_{h_0})(s_{h_0+1}\cdots s_h)\mapsto (s_{h_0+1}\cdots s_{h}) (s_1 \cdots s_{h_0})
    \end{equation}
    Then there exists a sequence of non-interacting SWAPs $\bm{\sigma}^{1},...,\bm{\sigma}^{h}$ such that if $T^{i}$ are binary trees with labels $\bm{\tau}^{i}\equiv \bm{\sigma}^{i}\cdots \bm{\sigma}^{1}$ ($T^{0}$ is that with trivial labels), then $(\bm{\tau}^{h})_h = \tau_h$.
    Moreover, if $C$ is the cone complex interpolating the sequence $T^0,...,T^{h}$, then there exists a $3h$-deformation of $G$ with congestion $\Delta(G)$ which is a subgraph of $C_1\to C_0$ with 
    \begin{equation}
        \dim C_0 = O(h 2^h) = O(|\sV|^2 \log |\sV|).
    \end{equation}
\end{theorem}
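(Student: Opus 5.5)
The plan has three steps: build the SWAP sequence with a sorting network, route each edge of $G$ through the interpolation cone as a path of length exactly $3h$, and then count.

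\emph{Step 1: SWAP sequence.} The target $\tau_h$ is the block rotation of $h$ positions that moves the last $h_1$ bits in front of the first $h_0$ bits. It is a product of adjacent transpositions, with inversion number $h_0h_1$. The odd-even transposition sorting network on $h$ positions sorts any permutation in at most $h$ rounds. In each round it applies adjacent swaps $\sigma_{i,i+1}$ with $i$ ranging over all odd or all even indices, and these are pairwise disjoint. I will run it so that it realizes $\tau_h$, and take $I^j$ to be the set of indices actually swapped in round $j$; this defines non-interacting SWAPs $\bm{\sigma}^1,\dots,\bm{\sigma}^h$. If fewer than $h$ rounds are needed, the remaining rounds get $I^j=\varnothing$. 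By construction $(\bm{\tau}^h)_h=\sigma^h_h\cdots\sigma^1_h=\tau_h$. Here I must check the convention $\tau s=s_{\bar\tau(1)}\cdots$ and, if necessary, run the network for $\bar\tau_h$ instead; the number of rounds is unchanged.

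\emph{Step 2: embedding.} Identify $\sV_0$ injectively with strings $a\in\{0,1\}^{h_0}$, placed at the height-$h_0$ vertex $|a,T^0\ket$ of the standard tree $T^0$. Identify $\sV_1$ with strings $b\in\{0,1\}^{h_1}$, placed at the height-$h_1$ vertex of $T^h$ whose label is $|\bar\tau^h(s)\ket$ for the standard string $s$ that consists of the bits of $b$ arranged by $\tau_h$. For an edge $e=uv$ with $u\leftrightarrow a$ and $v\leftrightarrow b$, define $\gamma(e)$ as three segments:
\begin{enumerate}
\item descend $h_1$ edges in $T^0$ from $|a\ket$ to the leaf $|ab\ket$;
\item follow the interpolation of Remarks~\ref{rem:interpolation} and \ref{rem:interpolation-sequence} across $T^0,S^1,T^1,\dots,S^h,T^h$, which is $2h$ edges through the leaf vertices with the same bit string (leaves are never starred);
\item ascend $h_0$ edges in $T^h$ to the height-$h_1$ ancestor.
\end{enumerate}
The total length is $h_1+2h+h_0=3h$.

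The key check in Step 2 is that the height-$h_1$ ancestor in $T^h$ of the leaf labelled $ab$ is exactly the vertex assigned to $b$. Since $(\bm{\tau}^h)_h=\tau_h$, the standard string of that leaf is the rotation $b\,a$, and truncating by $\omega^{h_0}$ leaves the standard prefix $b$. I also need that the labels at lower heights are consistent restrictions: $(\bm{\tau}^h)_\ell$ includes only swaps with $i<\ell$, so the truncated string of an ancestor is the prefix of the permuted leaf string. I expect this label bookkeeping, with the interplay of $\bar\tau$, $\omega$ and the convention of the SWAP definition, to be the main obstacle, and I will verify it carefully by induction on $j$ using Remark~\ref{rem:crucial-relation}.

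\emph{Step 3: congestion and counting.} The middle segments of distinct edges are vertex-disjoint, because distinct leaves $ab$ give disjoint interpolation paths (Remark~\ref{rem:interpolation}). The descending segment is determined by $(a,b)$. An edge of $T^0$ strictly below height $h_0$ on the branch toward $ab$ is used only by edges $(a,b')$ with $b'$ extending the corresponding prefix, all of which are incident to $u$. Hence each such tree edge is used by at most $\deg u\le\Delta(G)$ paths, and symmetrically in $T^h$. This gives a $3h$-deformation with congestion $\Delta(G)$, and it is a subgraph of $C_1\to C_0$. Finally, each tree and each interpolation complex has $O(2^h)$ vertices and there are $2h+1$ of them, so $\dim C_0=O(h2^h)$. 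Since we may take $h_i=\lceil\log_2|\sV_i|\rceil$, we get $2^h\le4|\sV|^2$, which gives $O(|\sV|^2\log|\sV|)$.
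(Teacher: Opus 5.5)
Your proposal is correct and is essentially the paper's proof: the same descend--interpolate--ascend path of length $h_1+2h+h_0=3h$ (through the leaf with standard string $ab$ in $T^0$ and standard string $ba$ in $T^h$), the same congestion bound from tree edges being shared only by edges of $G$ incident to a common vertex, and the same $O(h2^h)$ count. The only difference is how you get the SWAP layers: you take them from the odd-even transposition sorting network, whereas the paper writes down explicit index sets (assuming $h_0\le h_1$) that form exactly such a schedule for the block rotation; your existence argument is equally valid and avoids that index bookkeeping.
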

\begin{remark}
    Technically, the permutation can be obtained in $h-1$ SWAPs, i.e., $\bm{\sigma}^{2},...,\bm{\sigma}^{h}$, so that only a $(3h-2)$-deformation of $G$ is required (see Fig. \ref{fig:interpolation-cone}). 
    However, to remove dangling constants, we have added an arbitrary trivial SWAP, i.e., $\bm{\sigma}^{1}$ has empty indices.
\end{remark}

\begin{proof}
    Without loss of generality, assume that $h_0\le h_1$.
    Note that the vertices $\sV_0$ can be labeled via a string bit of length $h_0$, i.e., $\hat{s}$ defines an injective mapping $\sV_0\to \{0,1\}^{h_0}$ for $\sV_0$ and similarly for $\sV_1$. 
    Since the graph is bipartite, every edge $e\in \sE$ with adjacent vertices $v_0\in \sV_0,v_1\in \sV_1$ can be labeled via the string bit $\hat{s}(e)\equiv\hat{s}(v_0)\hat{s}(v_1)$ of length $h$.
    Note that we are overloading the notation $\hat{s}$.
    Consider a binary tree $T^1$ of height $h$ with standard labels so that the edges $e$ can be embedded into the leaves at height $h$ via the map $e\mapsto \| \hat{s}(e),T^1\ket$. 

    Note that the edge $e$ is not concerned with the order of $v_0,v_1$ and thus every edge can also be labeled via $\hat{s}(v_1)\hat{s}(v_0)=\bar{\tau}_h \hat{s}(e) $.
    Note that $\tau_h$ can be induced via $h$ SWAPs in the following sense. Let $\bm{\sigma}^{1}$ denote the trivial SWAP, i.e., that with empty indices, and $\bm{\sigma}^{i},i\ge 2$ denote the SWAP with indices 
    \begin{itemize}
        \item $\{h_0-i+2,h_0-i+4,...,h_0+i-2\}$ for $2\le i \le h_0+1$,
        \item $\{i-h_0,i-h_0+4,...,\min(i+h_0-2,h-1)\}$ for $h_0+1 < i\le h$.
    \end{itemize}
    Further let $\bm{\tau}^{i} = \bm{\sigma}^{i} \cdots \bm{\sigma}^{1}$ so that $(\bm{\tau}^{h})_h=\tau_h$.
    Let $T^{i}$ be the binary trees with labels $\bm{\tau}^i$ and thus the edges $e$ can also be embedded into the leaves at height $h$ via the map $e\mapsto \|\bar{\tau}^{h}\hat{s}(e),T^{h}\ket$ where we note that $\bar{\tau}^h \hat{s} (e) = \bar{\tau}_h \hat{s}(e)$.
    
    As described via Remark \ref{rem:interpolation} and \ref{rem:interpolation-sequence}, we now connect vertices of $T^0$ and $T^{h}$ through interpolation.
    Specifically, let $C$ denote the cone complex interpolating the sequence so that $|s,T^0\ket,|s,T^{h}\ket$ are connected via a path of edge-length $2h$. 
    In particular, $|\hat{s}(v_0)\hat{s}(v_1),T^0\ket$ and $|\hat{s}(v_0)\hat{s}(v_1),T^h\ket = |\bar{\tau}^h(\hat{s}(v_1)\hat{s}(v_0)),T^h\ket$ are connected via a path $\gamma^\mathrm{inter}(e)$ of edge-length $2h$ which is disjoint from $T^0,T^h$ except at the endpoints. 
    Specifically, $\gamma^\mathrm{inter}(e)$ can also be regarded as a sequence of adjacent vertices in the cell complex $C$, i.e.,
    \begin{align}
        \gamma^\mathrm{inter}(e) &= |\hat{s}(e),T^0\ket\to|\hat{s}(e),S^{1}\ket \to |\hat{s}(e),T^{1}\ket \\
        &\quad \to\cdots\to|\hat{s}(e),T^{h}\ket.
    \end{align}

    Also note that $|\hat{s}(v_0)\hat{s}(v_1),T^0\ket$ is a leaf of $|\hat{s}(v_0),T^0\ket$ and thus there exists a connected path $\gamma^0(e)$ contained entirely in $T^0$ of edge-length $h_1$ connecting the two vertices, i.e.,
    \begin{align}
        \gamma^0(e) &= |\hat{s}(v_0),T^0\ket \to |\hat{s}(v_0)\hat{s}(v_1)_1,T^0\ket \\
        &\;\to |\hat{s}(v_0)\hat{s}(v_1)_1\hat{s}(v_1)_2,T^0\ket \cdots  \to |\hat{s}(e),T^0\ket. \nonumber
    \end{align}
    Similarly, there exists a connected path $\gamma^{h}(e)$ contained entirely in $T^h$ of edge-length $h_0$ connecting $|\bar{\tau}^h(\hat{s}(v_1)\hat{s}(v_0)),T^h\ket$ to $|\bar{\tau}^h\hat{s}(v_1),T^h\ket$.
    Let $\gamma(e)$ denote the concatenation of paths $\gamma^0(e)\to \gamma^\mathrm{inter}(e)\to\gamma^h(e)$ so that $\gamma(e)$ has edge-length $3h$.
    Note that $\gamma^{\mathrm{inter}}(e)$ are simple paths and vertex-disjoint for distinct edges $e$, and thus one may naively expect that $\gamma$ induces a subdivision of $G$.
    Unfortunately, this is not true.
    Rather, it's possible for $\gamma^{0}(e),\gamma^{h}(e)$ to overlap for distinct $e$ in $G$ and thus $\gamma$ only induces a $3h$-deformation of $G$.
    Specifically, we identify $\sV_0\ni v_0 \mapsto |\hat{s}(v_0),T^1\ket$ and $\sV_1\ni v_1\mapsto |\bar{\tau}^{h}\hat{s}(v_1),T^{h}\ket$ so that $\gamma G =\bigcup_{e}\gamma(e)$ is a $3h$-deformation of $G$. 
    The inclusion map $\gamma G \hookrightarrow C$ shows that $\gamma G$ is a subgraph of $C$.

    As remarked, $\gamma^{0}(e),\gamma^{h}(e)$ may overlap for distinct $e$ in $G$. However, the overlap is clearly bounded by $\Delta (G)$ and thus $\gamma G$ has congestion $\Delta(G)$.
\end{proof}


\begin{corollary}
    \label{cor:parsimonious-cone}
    Let $G$ denote a graph complex with max degree $\Delta(G)$ and let vertices $|\sV|\le 2^{h_0}$ and edges $|\sE| \le 2^{h_1}$ and $h=h_0 +h_1$.
    Then there exists cell complex $C$ such that a $6h$-deformation of $G$ with congestion $\Delta(G)$ is a subgraph of $C$ with 
    \begin{equation}
        \dim C_0 =O(h2^h)= O(|\sV| |\sE| \log(|\sV||\sE|)).
    \end{equation}
    In particular, if $\Delta(G)=O(1)$, then 
    \begin{equation}
        \dim C_0= O(|\sV|^2 \log |\sV|).
    \end{equation}
\end{corollary}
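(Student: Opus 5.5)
The plan is to reduce to Theorem \ref{thm:parsimonious-cone} by passing to the bipartite incidence (subdivision) graph of $G$. Define the bipartite graph $G'$ with vertex partition $\sV'_0=\sV$ and $\sV'_1=\sE$, with an edge $v\,e$ whenever $v\in\partial e$. Then $|\sV'_0|\le 2^{h_0}$ and $|\sV'_1|\le 2^{h_1}$, so $h=h_0+h_1$ matches the hypotheses of Theorem \ref{thm:parsimonious-cone}. The degrees are bounded: each $e\in\sE$ has degree $2$ in $G'$, and each $v\in\sV$ has degree $\deg_G(v)\le\Delta(G)$. Hence $\Delta(G')\le\max(2,\Delta(G))$.

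Applying Theorem \ref{thm:parsimonious-cone} to $G'$ yields a cell complex $C$ with $\dim C_0=O(h2^h)$. It also yields a $3h$-deformation $\gamma'G'$ of $G'$, with congestion $\Delta(G')$, embedded as a subgraph of $C_1\to C_0$. Since $2^h\le 2\cdot 2^{h_0}2^{h_1}$ can be arranged by choosing $h_0=\lceil\log_2|\sV|\rceil$ and $h_1=\lceil\log_2|\sE|\rceil$, we get $2^h=O(|\sV||\sE|)$ and $h=O(\log(|\sV||\sE|))$. This gives the first size bound.

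Next we build the deformation of $G$. Each edge $e=v_0v_1$ of $G$ corresponds to the length-$2$ path $v_0\to e\to v_1$ in $G'$. We set $\gamma(e)=\gamma'(v_0e)\cdot\gamma'(ev_1)$, the concatenation through the image vertex of $e$. This is a path of edge-length $6h$ joining the images of $v_0$ and $v_1$. The vertex embedding $v\mapsto$ (image of $v$ in $\gamma'G'$) is injective, so $\gamma G=\bigcup_e\gamma(e)\subseteq\gamma'G'$ is a $6h$-deformation of $G$ and a subgraph of $C$.

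For congestion, note that an edge of $C$ lies in $\gamma(e)$ only if it lies in $\gamma'(v_0e)$ or $\gamma'(ev_1)$. Each edge of $G'$ is used by exactly one edge of $G$. So the congestion of $\gamma$ is at most the congestion of $\gamma'$.

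This congestion step is the main obstacle. The theorem gives congestion $\Delta(G')$, which could be $\max(2,\Delta(G))$ rather than $\Delta(G)$. To obtain exactly $\Delta(G)$, we would revisit the proof of Theorem \ref{thm:parsimonious-cone}. There, overlaps only occur in the tree paths $\gamma^0,\gamma^h$ emanating from a common vertex, and those are bounded by that vertex's degree. On the $\sE$ side each vertex has degree $2$, and the two $G'$-edges at $e$ belong to the same $G$-edge, so they do not add to the count. The overlaps on the $\sV$ side are bounded by $\deg_G(v)\le\Delta(G)$. Together these give congestion $\Delta(G)$, treating $\Delta(G)\ge 2$ as the nontrivial case.

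Finally, when $\Delta(G)=O(1)$ we have $|\sE|\le\Delta(G)|\sV|/2=O(|\sV|)$. Substituting into the first bound gives $\dim C_0=O(|\sV|^2\log|\sV|)$.
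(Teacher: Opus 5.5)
Your proposal is correct and follows the paper's own argument: the 2-subdivision of $G$ is bipartite, with the vertices and the edges of $G$ as its two parts, so Theorem~\ref{thm:parsimonious-cone} applies, and each edge of $G$ is realized by concatenating the two $3h$-paths through its midpoint. Your explicit congestion count (overlaps in $T^0$ bounded by $\deg_G(v)$, and the two half-paths over a common edge $e$ counted only once) fills in a step the paper's one-line proof leaves implicit. That count already gives congestion at most $\Delta(G)$ whenever $G$ has an edge, so the caveat $\Delta(G)\ge 2$ is not needed.
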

\begin{proof}
    Note that if we subdivide each edge $e$ in $G$ by adding a vertex in the center, then the subdivision is a bipartite graph with vertex partition corresponding to $\sV\sqcup \sE$, i.e., the 2-subdivision of $G$ is bipartite.
    The statement then follows.
\end{proof}

\subsection{Pruning for Better Cones}

\label{sec:pruning}
In Theorem \ref{thm:parsimonious-cone} (and Corollary \ref{cor:parsimonious-cone}), the cone complex has size $O(|\sV|^2 \log|\sV|)$ -- it is not parsimonious enough.
In this subsection, we obtain a better bound -- $O(|\sV| \log |\sV|)$ -- via 
\begin{enumerate}[label=\arabic*)]
    \item pruning the unnecessary branches in $C$, which gets the scaling down to $O(|\sV| \log^2 |\sV|)$,
    \item removing/contracting vertices of degree 2, i.e., vertices which do not branch, and corresponding edges, to reach the scaling $O(|\sV| \log |\sV|)$.
\end{enumerate}

Although doing both simultaneously is possible, it makes understanding the proof more difficult. 
Hence, the first subsection provides the proof with only the first step, while the second considers both simultaneously. 
\subsubsection{Pruning extraneous branches}


The intuition behind pruning is simple: our goal is to embed a bipartite graph with $O(|\sV|)$ vertices on each side, while our binary trees are of height $h$ and each have $O(|\sV|^2)$ leaves. 
We therefore remove the branches in the trees which end at irrelevant leaves.

\begin{definition}[Pruned Trees]
    Let $\sL$ denote a collection of \textbf{leaves} at height $h$, i.e., $h$-length string bits in $\{0,1\}^h$.
    Define the $\sL$-\textbf{pruned} binary tree $T$ of height $h$ with labels $\bm{\tau}$ as the graph complex with edges $\| \bar{\tau} s\ket$ and vertices $|\bar{\tau}s\ket$ over $(\ell\le h)$-length bits $s=s_1 \cdots s_\ell$, such that there exists $s_{\ell+1},...,s_{h}\in \{0,1\}$ so that the padded bit string $s_{\mathrm{leaf}}=s_1\cdots s_h$ satisfies $s_{\mathrm{leaf}}\in \tau \sL$, i.e., write $s\subseteq s_{\mathrm{leaf}}$ and refer to $s$ as a \textbf{$\tau \sL$-truncation}.
    In particular, note that if edge $\|\bar{\tau} s\ket$ is well-defined in ${T}$, then so are the vertices $|\bar{\tau} s\ket,|\bar{\tau}\omega s\ket$, and thus the usual adjacency relation in Eq.~\eqref{eq:tree-adjacency} is well-defined.
\end{definition}

\begin{definition}[Pruned Interpolations]
    Let $\sL$ be leaves at height $h$, and $T',T$ be $\sL$-pruned binary trees of height $h$ with labels $\bm{\sigma\tau}, \bm{\tau}$, respectively, where $\bm{\sigma}$ is a SWAP with indices $I$. 
    Define the $\sL$-\textbf{pruned} interpolation complex $S$ between $T',T$ as the graph complex with vertices $|\bar{\tau}\sigma^{\star} s\ket$ and edges $\| \bar{\tau} \sigma^{\star} s\ket$ over all $\tau \sL$-truncations $s$.
    Note that if the edge $\|\bar{\tau} \sigma^\star s\ket$ is well-defined, then so are the vertices $|\bar{\tau} \sigma^{\star} s\ket,|\bar{\tau}\sigma^{\star} \omega s\ket$ and thus the usual adjacency relation as in Eq.~\eqref{eq:interpolation-adjacency} is well-defined.
\end{definition}

\begin{remark}[Pruned Size]
    \label{rem:pruned-size}
    Note that the number of vertices in the pruned tree $T$ and pruned interpolation complex is $=O(h|\sL|)$ in general.
\end{remark}

First note the following straightforward, yet important, observation. 

\begin{lemma}[SWAP and Pruned]
    \label{lem:SWAP-and-prune}
    Let $\sL$ be leaves and $\bm{\tau}$ be a label and $\bm{\sigma}$ be a SWAP with indices $I$. Let $s$ be a $\ell$-bit string. If $\ell\notin I$, then $s$ is a $\tau \sL$-truncation $\Leftrightarrow \sigma s$ is a $\sigma \tau \sL$-truncation -- specifically, if $s\subseteq s_{\mathrm{leaf}}\in \tau\sL$, then $\sigma s \subseteq \sigma s_{\mathrm{leaf}}$.
\end{lemma}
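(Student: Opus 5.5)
The plan is to unwind the definitions of the permutation action and of truncation, and reduce everything to one fact: a non-interacting SWAP with $\ell\notin I$ does not move bits across position $\ell$. Write $\sigma=\prod_{i\in I}\sigma_{i,i+1}$ for the full-length permutation on $h$ bits, and $\sigma_\ell=\prod_{i\in I:i<\ell}\sigma_{i,i+1}$ for its action on $\ell$-bit strings. Then $\sigma_\ell$ is the restriction of $\sigma_h$ to the first $\ell$ positions, provided no transposition $\sigma_{i,i+1}$ with $i\in I$ straddles the cut between positions $\ell$ and $\ell+1$. The only transposition that could straddle the cut is $\sigma_{\ell,\ell+1}$, i.e.\ the case $i=\ell\in I$. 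This is exactly what the hypothesis $\ell\notin I$ excludes.

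First I would establish the forward direction. Suppose $s\subseteq s_{\mathrm{leaf}}\in\tau\sL$, meaning $s_{\mathrm{leaf}}=s\,s_{\ell+1}\cdots s_h$. Split $I$ into $I_{<}=\{i\in I: i<\ell\}$ and $I_{>}=\{i\in I: i>\ell\}$; these exhaust $I$ because $\ell\notin I$. Every $\sigma_{i,i+1}$ with $i\in I_<$ acts only on positions $\le\ell$, and every one with $i\in I_>$ acts only on positions $\ge\ell+1$. Hence
\[
\sigma_h(s\,s_{\ell+1}\cdots s_h)=(\sigma_\ell s)\,(\sigma' s_{\ell+1}\cdots s_h)
\]
for some permutation $\sigma'$ of the tail. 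So $\sigma s$ is a prefix of $\sigma s_{\mathrm{leaf}}$, which is the displayed claim $\sigma s\subseteq\sigma s_{\mathrm{leaf}}$. I should also check that $\sigma s_{\mathrm{leaf}}\in\sigma\tau\sL$, using $(\sigma\tau)_h=\sigma_h\tau_h$. This holds by definition, since $\tau\sL$ means $\tau_h$ applied to leaves.

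For the converse I apply the same argument to $\sigma s$ with the SWAP $\bm{\sigma}$ again and the label $\bm{\sigma\tau}$. The point is that $\sigma$ is an involution: its transpositions are disjoint and commuting. So $\sigma_\ell\sigma_\ell=\mathrm{id}$ and $\sigma_h\sigma_h=\mathrm{id}$. If $\sigma s\subseteq t_{\mathrm{leaf}}\in\sigma\tau\sL$, the forward step yields $s=\sigma\sigma s\subseteq\sigma t_{\mathrm{leaf}}\in\sigma\sigma\tau\sL=\tau\sL$.

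The only delicate point, and the one I would write out carefully, is the index bookkeeping in the prefix-compatibility identity. I need to confirm that the convention $\tau s=s_{\bar\tau(1)}\cdots$ composes consistently with restriction to prefixes. Concretely, I must check that $(\sigma_h u)_j=u_{\bar\sigma_h(j)}$ with $\bar\sigma_h(j)\le\ell$ whenever $j\le\ell$; this uses $\ell\notin I$ once more. I should also note that when $\ell\in I$ the statement fails, because bit $\ell$ gets exchanged with bit $\ell+1$. This failure is why $\sigma^\star$ replaces that bit with the non-branching placeholder $\star$.
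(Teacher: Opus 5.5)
Your proposal is correct and follows the same route as the paper, which gives no separate proof and treats the lemma as a direct observation justified by the embedded remark that $s\subseteq s_{\mathrm{leaf}}$ implies $\sigma s\subseteq \sigma s_{\mathrm{leaf}}$. You fill in the details the paper leaves implicit: $\ell\notin I$ means no transposition $\sigma_{i,i+1}$ straddles positions $\ell,\ell+1$, so $\sigma_h$ restricts to $\sigma_\ell$ on the prefix, and the converse follows because the non-interacting SWAP is an involution, so $\bm{\sigma}\bm{\sigma\tau}=\bm{\tau}$.
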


\begin{figure}[ht]
\centering
\subfloat[]{%
    \centering
    \includegraphics[width=0.7\columnwidth]{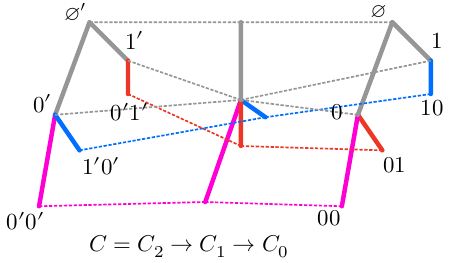}
}
\subfloat[]{%
    \centering
    \includegraphics[width=0.22\columnwidth]{input-graph.pdf}
}
\caption{Pruned Interpolation Cone. In comparison to Fig. \ref{fig:interpolation-cone}, the green branches are not necessary to embed the (subdivision of) graph $G$ within $C$ and are thus pruned.}
\label{fig:pruned-interpolation-cone}
\end{figure}

\begin{lemma}[Pruned Interpolation Cone]
    \label{lem:pruned-interpolation-cone}
    Let $\sL$ be leaves at height $h$.
    Let $T',T$ be $\sL$-pruned trees with labels $\bm{\sigma\tau}, \bm{\tau}$, respectively, where $\bm{\sigma}$ is a SWAP, and let $S$ denote the $\sL$-pruned interpolation with respect to $T',T$.
    Let $T'S,ST$ be $\sL$-pruned binary trees with labels $\bm{\sigma\tau},\bm{\tau}$, respectively.
    Define $g_i:(T'S \oplus ST)_i \to (T'\oplus S\oplus T)_i$ for $i=1,0$ via
    \begin{align}
        g_1\| \bar{\tau} \sigma s, T'S\ket &= \| \bar{\tau} \sigma s, T'\ket + \| \bar{\tau} \sigma^{\star} \sigma s, S\ket, \\
        g_1\| \bar{\tau} s, ST\ket &= \| \bar{\tau} \sigma^{\star} s, S\ket + \| \bar{\tau} s, T\ket,
    \end{align}
    and similarly\footnote{Replace $\|\cdots \ket$ with $|\cdots\ket$.} for $g_0$.
    Then $g$ is a chain map and the cone complex $C=\cone (g)$ -- see Fig. \ref{fig:pruned-interpolation-cone} -- satisfies
    \begin{align}
        \label{eq:pruned-cone-prop1}
        H_2(C)=H_1(C)=0, &\quad H_0(C) \cong \dF_2\\
        \label{eq:pruned-cone-prop2}
        C_2 \xrightleftharpoons[4]{4} C_1 \xrightleftharpoons[9]{2} C_0.
    \end{align}
\end{lemma}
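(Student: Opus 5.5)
The plan is to mirror the proof of Lemma~\ref{lem:interpolation-cone}, checking at each step that pruning does not break anything. There are three things to establish. First, $g$ is well defined: it must send cells of the pruned trees $T'S,ST$ to cells that actually exist in the pruned complexes $T',S,T$. Second, $g$ is a chain map. Third, each of the five pruned complexes is a tree. Once these hold, the Snake Lemma~\ref{lem:snake} argument goes through verbatim, and so does the weight diagram~\eqref{eq:weight-diagram-interpolation-cone}.

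\textbf{Well-definedness.} Take an edge $\|\bar\tau\sigma s,T'S\ket$ with $s$ a $\tau\sL$-truncation, so that $\sigma s$ is a $\sigma\tau\sL$-truncation. Its image in $T'$ exists by definition. For the image in $S$ we need $\sigma^\star\sigma s$ to be the $\sigma^\star$-image of some $\tau\sL$-truncation.
\begin{itemize}
\item If $\ell\notin I$, then $\sigma^\star\sigma s=\sigma s$. On length-$\ell$ strings $\sigma_\ell$ only swaps pairs $i,i+1<\ell$, so by Lemma~\ref{lem:SWAP-and-prune} $\sigma s$ and $s$ correspond to each other, and the relevant cell of $S$ exists.
\item If $\ell\in I$, then $\sigma^\star\sigma s=\omega(\sigma s)\star$. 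The prefix $\omega\sigma s=\sigma\omega s$ is again covered by Lemma~\ref{lem:SWAP-and-prune}, because $\ell-1\notin I$ by non-interaction. So some $\tau\sL$-truncation $u$ of length $\ell$ has $\sigma^\star u=\sigma^\star\sigma s$.
\end{itemize}
Similar but easier checks cover the $ST$ edges and all vertices.

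\textbf{Chain map.} This follows from the crucial relation $\sigma^\star\omega\sigma=\sigma^\star\sigma\omega$ of Remark~\ref{rem:crucial-relation}, exactly as in the unpruned case. The identity holds string by string, and all terms appearing are cells of the pruned complexes by the previous step.

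\textbf{Trees and homology.} Each pruned tree is the union of root-to-leaf paths for leaves in $\tau\sL$ (respectively $\sigma\tau\sL$). It is therefore a connected subgraph of a tree, hence a tree, assuming $\sL\neq\varnothing$. For $S$, every vertex $|\bar\tau\sigma^\star s\ket$ with $\|s\|=\ell\ge1$ has a unique parent edge $\|\bar\tau\sigma^\star s\ket$ leading to $|\bar\tau\sigma^\star\omega s\ket$.
\begin{itemize}
\item Connectivity follows by repeatedly truncating down to the root.
\item Absence of cycles follows because the map from edges to their upper endpoints (indexed by the same label $\sigma^\star s$) is a bijection.
\end{itemize}
The one thing to be careful about is that distinct $s$ with equal $\sigma^\star s$ give a single edge, and that this edge has a consistent lower endpoint. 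This holds because $\omega\sigma^\star=\omega$.

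So every piece has $H_1=0$ and $H_0\cong\dF_2$, with $[g_0]$ again the repetition-code differential on two bits and three checks. The same snake diagram~\eqref{eq:interpolation-diagram-snake} then gives Eq.~\eqref{eq:pruned-cone-prop1}.

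\textbf{Weights.} Pruning only deletes cells. Every column weight in Eq.~\eqref{eq:weight-diagram-interpolation-cone} is therefore bounded by the corresponding unpruned weight, which gives Eq.~\eqref{eq:pruned-cone-prop2}.

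The main obstacle is the well-definedness step when $\ell\in I$, together with the related fact that the pruned $S$ is a tree. These are the only places where pruning and the non-branching $\star$ points interact nontrivially, and where the non-interacting property of the SWAP is genuinely needed.
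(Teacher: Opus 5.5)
Your proposal is correct and follows essentially the same route as the paper. The real content is well-definedness of $g$, which you handle with the same split $\ell\notin I$ versus $\ell\in I$ (using Lemma~\ref{lem:SWAP-and-prune} and $\ell-1\notin I$); after that the crucial relation and the snake argument of Lemma~\ref{lem:interpolation-cone} carry over. Your explicit checks that the pruned pieces are trees and that weights can only drop on subcomplexes fill in what the paper leaves implicit.

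One bookkeeping slip needs correcting. For $\|\bar\tau\sigma s,T'S\ket$ to be an edge, $s$ must be a $\sigma\tau\sL$-truncation, so that for $\ell\notin I$ it is $\sigma s$ that is the $\tau\sL$-truncation indexing the $S$-edge — not the reverse, as you wrote. Since $\sigma$ is an involution and the lemma is an equivalence, the argument goes through unchanged once the roles are swapped back.
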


\begin{proof}
    The proof is nearly identical to that of Lemma \ref{lem:interpolation-cone}, and thus we point to the differences.
    In fact, since the algebraic equations are exactly the same, it is sufficient to show that $g_1$ (and similarly, $g_0$) is well-defined, e.g., if $\|\bar{\tau}\sigma s, T'S\ket$ is a valid edge,  then so are $\|\bar{\tau}\sigma s, T'\ket$ and $\|\bar{\tau}\sigma^\star \sigma s, S\ket$.
    Indeed, it's clear that $\|\bar{\tau}\sigma s, T'\ket$ is well-defined and thus we focus on the other edge. 
    Since $T'S$ has labels $\bm{\sigma \tau}$, we see that $s$ is a $\sigma \tau \sL$-truncation.
    Let us consider the following two separate cases, and recall the crucial algebraic relation $\sigma^{\star} \omega \sigma =\sigma^{\star} \sigma \omega$ in Remark \ref{rem:crucial-relation}. Let $s$ be a $\ell$-bit string.
    \begin{itemize}
        \item If $\ell \notin I$, then by Lemma \ref{lem:SWAP-and-prune}, $\sigma s$ is a $\tau \sL$-truncation and thus $\|\bar{\tau} \sigma^\star \sigma s,S\ket$ is well-defined
        \item If $\ell \in I$, then $\ell-1\notin I$ so that $\sigma \omega s=\omega \sigma s$ is a $\tau \sL$-truncation.
        In particular, there exists some bit $b\in\{0,1\}$ such that $\tilde{s} \equiv(\omega \sigma s) b$ is a $\tau \sL$-truncation.
        Note that $\sigma^{\star} \tilde{s} = \sigma^{\star} \sigma s$ and thus $\|\bar{\tau} \sigma^{\star} \sigma s,S\ket$ is well-defined.
    \end{itemize}
    The other cases are similar and thus $g_1,g_0$ are well-defined. 
    The algebraic equations are the same and thus $g$ is a chain map. The statement then follows as in the proof of Theorem \ref{lem:interpolation-cone}.
\end{proof}

\begin{remark}[Pruned Interpolation Sequence]
    \label{rem:pruned-interpolation-seq}
    Similar to Remark \ref{rem:interpolation} and \ref{rem:interpolation-sequence}, $C$ is a cell-complex which connects leaves ($h$-bit strings) $|s',T'\ket$ and $|s',T\ket$ via unique edges passing through $|s',S\ket$ so that for distinct $s'$, the connecting paths are vertex-disjoint.
    A depiction is shown in Fig. \ref{fig:pruned-interpolation-cone}.
    Moreover, if $T^0,...,T^n$ is a sequence of $\sL$-pruned SWAP sequence, then a similar construction can be obtained via a chain map $g: T^n S^{n} \oplus S^{n} T^{n-1} \cdots \oplus S^1 T^0 \to T^n \oplus \cdots \oplus T^0$ so that $C=\cone(g)$ is a cell complex which satisfies Eq.~\eqref{eq:pruned-cone-prop1}-\eqref{eq:pruned-cone-prop2}.
    We refer to $C$ as the $\sL$-pruned cone complex \textbf{interpolating} the $\sL$-pruned SWAP sequence $T^0,...,T^n$. Similarly, vertices $|s',T^i\ket,i=0,...,n$ are connected together via the interpolation as described previously.
\end{remark}

Using the pruned interpolation cone, we can similarly derive Theorem \ref{thm:parsimonious-cone} and its corollary, but with better overhead, which follows from Remark \ref{rem:pruned-size}.
See Fig. \ref{fig:pruned-interpolation-cone} for reference.

\begin{theorem}[Pruned Parsimonious Cone, Fig. \ref{fig:pruned-interpolation-cone}]
    \label{thm:pruned-parsimonious-cone}
    Let $G$ denote the complex of a bipartite graph with vertex partitions $\sV=\sV_0\sqcup \sV_1$ where $|\sV_i| \le 2^{h_i}$, and $G$ has no isolated vertices with max degree $\Delta(G)$
    Then there exists cell complex $C$ such that there exists a $3h$-deformation of $G$ with congestion $\Delta(G)$ which is a subgraph of $C$ where
    \begin{equation}
        \dim C_0 = O(|\sE| \log^2(|\sV||\sE|)),
    \end{equation}
    and 
    \begin{align}
        H_2(C)=H_1(C)=0, &\quad H_0(C) \cong \dF_2\\
        C_2 \xrightleftharpoons[4]{4} C_1 \xrightleftharpoons[9]{2} C_0.
    \end{align}
    In particular, if $\Delta(G)=O(1)$ then 
    \begin{equation}
        \dim C_0 = O(|\sV| \log^2|\sV|).
    \end{equation}
\end{theorem}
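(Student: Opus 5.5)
We rerun the proof of Theorem \ref{thm:parsimonious-cone}, replacing every tree and interpolation complex by its $\sL$-pruned version. Take the same labelling $\hat{s}:\sV_0\to\{0,1\}^{h_0}$, $\hat{s}:\sV_1\to\{0,1\}^{h_1}$, the same SWAP sequence $\bm{\sigma}^1,\dots,\bm{\sigma}^h$ with $\bm{\tau}^i=\bm{\sigma}^i\cdots\bm{\sigma}^1$, and choose the leaf set
\begin{equation}
    \sL=\{\hat{s}(e)=\hat{s}(v_0)\hat{s}(v_1):\ e=v_0v_1\in\sE\}\subseteq\{0,1\}^h ,
\end{equation}
so that $|\sL|=|\sE|$. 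Let $T^i$ be the $\sL$-pruned tree with labels $\bm{\tau}^i$, let $S^i$ be the $\sL$-pruned interpolation between $T^i$ and $T^{i-1}$, and let $C$ be the $\sL$-pruned cone interpolating $T^0,\dots,T^h$ (Remark \ref{rem:pruned-interpolation-seq}).

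The homological and weight claims then follow directly from Lemma \ref{lem:pruned-interpolation-cone} and Remark \ref{rem:pruned-interpolation-seq}. Concretely, $H_1$ of each pruned tree and of each pruned interpolation complex vanishes, and $H_0\cong\dF_2$, because each is a nonempty connected subtree: a union of root-to-leaf paths. The Snake Lemma argument of Lemma \ref{lem:interpolation-cone} then carries over verbatim, since $[g_0]$ is again the differential of a repetition code on a path. The column weights $4,4,9,2$ are inherited, since pruning only deletes rows and columns.

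For the size, Remark \ref{rem:pruned-size} gives that each of the $2h+1$ pruned complexes $T^i,S^i$ has $O(h|\sL|)$ vertices. Hence
\begin{equation}
    \dim C_0=O(h^2|\sE|)=O\big(|\sE|\log^2(|\sV||\sE|)\big),
\end{equation}
where we may take $h_i=\lceil\log_2|\sV_i|\rceil$. If $\Delta(G)=O(1)$ then $|\sE|=O(|\sV|)$ and $h=O(\log|\sV|)$, giving $O(|\sV|\log^2|\sV|)$.

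The main step is checking that the deformation $\gamma G$ from Theorem \ref{thm:parsimonious-cone} survives pruning, i.e.\ that every vertex and edge it uses is a valid cell of the pruned $C$.
\begin{itemize}
    \item The path $\gamma^0(e)$ runs inside $T^0$ through prefixes of $\hat{s}(e)\in\sL$. These are $\sL$-truncations (here $\tau^0$ is trivial), so the path exists.
    \item The path $\gamma^h(e)$ runs inside $T^h$ from the leaf $\bar{\tau}^h\hat{s}(e)$ upward through truncations of $\tau^h\hat{s}(e)\in\tau^h\sL$, so it is also present.
    \item The interpolation path $\gamma^{\mathrm{inter}}(e)$ passes through the leaves $|\hat{s}(e),T^i\ket$ and $|\hat{s}(e),S^i\ket$. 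At full height, the leaves of $T^i$ are exactly $\bar{\tau}^i\tau^i\sL=\sL$ as labels, and $\sigma^\star$ acts trivially at length $h$ since $h\notin I$.
    \item The connecting edges between consecutive layers are images under $g$ of leaf edges of $T^iS^i$ and $S^iT^{i-1}$. These are well-defined by the case analysis in the proof of Lemma \ref{lem:pruned-interpolation-cone}.
\end{itemize}

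The vertex identifications $v_0\mapsto|\hat{s}(v_0),T^0\ket$ and $v_1\mapsto|\bar{\tau}^h\hat{s}(v_1),T^h\ket$ are valid because $G$ has no isolated vertices. Every $v_0$ is a prefix of some $\hat{s}(e)\in\sL$, and every $v_1$ corresponds to a truncation of some $\tau^h\hat{s}(e)$; this is precisely where that hypothesis is used. The congestion bound $\Delta(G)$ and the length $3h$ are unchanged, since the paths are the same as before.
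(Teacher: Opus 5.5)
Your proposal is correct and follows essentially the same route as the paper's sketched proof: take the leaf set of edge labels $\hat{s}(e)$, use the same SWAP sequence and the pruned interpolating cone, bound the size via the pruned-size remark, and use the no-isolated-vertices hypothesis to keep the vertices of $G$. Your extra observations (the pruned complexes form a subcomplex, so the weights and the snake-lemma computation carry over, and the deformation paths survive pruning) fill in details the paper leaves implicit. One small slip: the 1-cells connecting consecutive layers are the leaf vertices of the copies $T^iS^i$ and $S^iT^{i-1}$ (0-cells of the source of $g$, which become 1-cells of the cone with boundary given by $g_0$), not images under $g$ of leaf edges, though their presence in the pruned cone follows just as you say.
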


\begin{proof}
    The proof is nearly identical to that of Theorem \ref{thm:parsimonious-cone}, except for Remark \ref{rem:pruned-size}, and thus we sketch the idea.
    Arbitrarily label $\sV_i$ via $\hat{s}:\sV_0\to \{0,1\}^{h_i}$ so that edges $e=v_0 v_1\in \sE$ are labelled via $h$-bit strings $\hat{s}(e) =\hat{s}(v_0)\hat{s}(v_1)$.
    Let $\sL=\hat{s}(\sE)$ be the collection of leaves. 
    Then, similar to before, there exists a sequence of SWAPs $\bm{\sigma}^{1},...,\bm{\sigma}^{h}$ such that $T^{i}$ are $\sL$-pruned binary trees with labels $\bm{\tau}^{i}\equiv \bm{\sigma}^{i}\cdots \bm{\sigma}^{1}$ and $(\tau^{h})_h = \tau_h$.
    Let $C$ be the $\sL$-pruned cone complex interpolating $T^0,...,T^{h}$.
    Since $G$ does not have isolated vertices, all vertices will be contained in the $\sL$-pruned cone complex, and thus the statement follows.
\end{proof}

\begin{corollary}
    \label{cor:pruned-parsimonious-cone}
    Let $G$ denote a graph complex of bounded degree $\Delta(G)=O(1)$.
    Then there exists cell complex $C$ such that there exists a $6h$-deformation of $G$ with $O(1)$ congestion which is a subgraph of $C$ with 
    \begin{equation}
        \dim C_0 = O(|\sV| \log^2|\sV|),
    \end{equation}
    and 
    \begin{align}
        H_2(C)=H_1(C)=0, &\quad H_0(C) \cong \dF_2\\
        C_2 \xrightleftharpoons[4]{4} C_1 \xrightleftharpoons[9]{2} C_0.
    \end{align}
\end{corollary}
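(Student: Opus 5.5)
The plan is to reduce Corollary~\ref{cor:pruned-parsimonious-cone} to Theorem~\ref{thm:pruned-parsimonious-cone}, in the same way that Corollary~\ref{cor:parsimonious-cone} was derived from Theorem~\ref{thm:parsimonious-cone}.

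\textbf{Step 1: make $G$ bipartite.} Let $G'$ be the $2$-subdivision of $G$, obtained by inserting a midpoint vertex $m_e$ on every edge $e\in\sE$. Then $G'$ is bipartite with parts $\sV'_0=\sV$ and $\sV'_1=\{m_e:e\in\sE\}$. Its edges are the half-edges, so $|\sE'|=2|\sE|$. Its maximum degree is $\max(\Delta(G),2)=O(1)$. Since $\Delta(G)=O(1)$, we also have $|\sE|\le\Delta(G)|\sV|/2=O(|\sV|)$.

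\textbf{Step 2: remove isolated vertices.} Theorem~\ref{thm:pruned-parsimonious-cone} assumes there are no isolated vertices. I would handle isolated vertices of $G$ separately. They are irrelevant to the deformation, since they carry no edges. Each such vertex can be attached afterwards to the final complex by a single pendant edge. This preserves $H_1=H_2=0$ and $H_0\cong\dF_2$, and it keeps the degree bounds, because each existing vertex receives only $O(1)$ pendants if we distribute them, for instance one per leaf of $T^0$. Alternatively, we may simply assume $G$ has no isolated vertices, as is the case for measurement graphs.

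\textbf{Step 3: apply the theorem.} Choose $h_0=\lceil\log_2|\sV|\rceil$ and $h_1=\lceil\log_2 |\sE|\rceil$, so $h=h_0+h_1=O(\log|\sV|)$. Applying Theorem~\ref{thm:pruned-parsimonious-cone} to $G'$ gives a cell complex $C$ with the following properties:
\begin{itemize}
\item $H_2(C)=H_1(C)=0$ and $H_0(C)\cong\dF_2$;
\item the same weight bounds $4,4,9,2$ as in the theorem;
\item $\dim C_0=O(|\sE'|\log^2(|\sV'||\sE'|))=O(|\sV|\log^2|\sV|)$;
\item a $3h$-deformation $\gamma'G'$ of $G'$ with congestion $\Delta(G')$ that is a subgraph of $C$.
\end{itemize}

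\textbf{Step 4: transfer the deformation back to $G$.} For each $e=uv\in\sE$, define $\gamma(e)$ as the concatenation of $\gamma'(u\,m_e)$ and $\gamma'(m_e\,v)$. This is a path from $u$ to $v$ of edge-length $6h$, so $\gamma G$ is a $6h$-deformation of $G$ contained in $\gamma' G'\subseteq C$.

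For congestion, an edge of $C$ lies in at most $\Delta(G')$ of the paths $\gamma'(e')$. Each $e'$ is a half of exactly one edge $e$ of $G$. Hence every edge of $C$ lies in at most $\Delta(G')=O(1)$ paths $\gamma(e)$, so the congestion is $O(1)$.

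\textbf{Main obstacle.} The only delicate point is bookkeeping. One must check that the $\log^2(|\sV'||\sE'|)$ factor collapses to $O(\log^2|\sV|)$, which it does since both $|\sV'|$ and $|\sE'|$ are $O(|\sV|)$. One must also check that the isolated-vertex hypothesis is satisfied or patched without breaking the homology and weight claims. Everything else is direct substitution into Theorem~\ref{thm:pruned-parsimonious-cone}.
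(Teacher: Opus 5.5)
Your proposal follows the paper's route. You subdivide to make the graph bipartite (as in Corollary~\ref{cor:parsimonious-cone}), invoke Theorem~\ref{thm:pruned-parsimonious-cone}, and concatenate the two half-edge paths. This gives a $6h$-deformation whose congestion is at most the maximum degree of the subdivision. Your bookkeeping of $h$, $\dim C_0$ and the congestion is correct.

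The one place you diverge is isolated vertices. The paper handles them \emph{before} coning: it adds edges threading them into a chain, so the enlarged graph has bounded degree and no isolated vertices. The theorem then applies verbatim, and the homology and the weights $4,4,9,2$ come for free. You instead remove them and glue them back \emph{afterwards}. That is also legitimate: attaching a tree adds no faces and changes none of $H_2,H_1,H_0$. It also has the small bonus that $h$ is exactly the $h$ of $G$.

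However, your concrete recipe of one pendant per leaf of $T^0$ does not work in general. $T^0$ has only $2|\mathcal{E}|$ leaves. If $G$ has, say, $O(1)$ edges and $\Theta(|\mathcal{V}|)$ isolated vertices, then $\dim C_0$ is only polylogarithmic in $|\mathcal{V}|$. No distribution with $O(1)$ pendants per existing vertex then exists, and the claim that the degree bound $9$ is preserved fails. If $G$ has no edges at all, there is nothing to attach to.

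The fix is to hang all the isolated vertices as a single path from one vertex of $C$, e.g.\ a leaf of $T^0$, which has degree $2$ in $C$. This raises that one vertex's degree by one and gives every new vertex degree at most $2$. It is exactly the paper's chaining trick, applied after the cone rather than before. In the edgeless case, take $C$ to be a path on the vertex set.

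Finally, Definition~\ref{def:deformation} takes $\gamma G$ to be the union of the paths $\gamma(e)$ over edges $e$. So isolated vertices do not even enter the literal conclusion of the corollary. They only matter later, when the cellulation step is meant to make $G$ itself a subgraph.
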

\begin{proof}
    If $G$ has isolated vertices, then we can add additional edges and connect all isolated vertices into a chain so that $G$ is a subgraph of the new graph complex $G'$. It's clear that $G'$ is bounded in degree, has no isolated vertices, and thus the statement follows.
\end{proof}
\subsubsection{Pruning and Contracting}

On top of pruning, we further contract vertices of degree $2$ in the binary trees to reduce the overhead by another factor of $O(\log |\sV|)$.

\begin{definition}[Pruned* Trees]
    Let $\sL$ be leaves at height $h$ and $\bm{\tau}$ be labels.
    Consider $0<\ell <h$ length bit string $s$.
    If both $s0$ and $s1$ are also $\tau \sL$-truncations, then we say $s$ is $\tau \sL$-\textbf{branching}\footnote{Note that $\tau \sL$-branching implies $\tau \sL$-truncation.}.
    To keep notation simple, we arbitrarily say that the empty bit $\varnothing$ and all bit strings in $\tau \sL$ are $\tau \sL$-branching.
    Define the $\sL$-\textbf{pruned*} tree $T$ of height $h$ and labels $\bm{\tau}$ to be the graph complex with edges $\|\bar{\tau} s\ket$ and vertices $|\bar{\tau} s\ket$ over branching $(\ell \le h)$ bit-strings $s$, and adjacency relation
    \begin{equation}
        \partial^{T} \|\bar{\tau} s\ket = |\bar{\tau} s\ket + |\bar{\tau} \Omega s\ket
    \end{equation}
    where \textbf{branching truncation} $\Omega s=\Omega^{\tau \sL}s $ is the longest bit string which is a truncation of $s$ but also $\tau \sL$-branching (possibly the empty bit).
\end{definition}

\begin{figure}[ht]
\centering
\subfloat[]{%
    \centering
    \includegraphics[width=0.49\columnwidth]{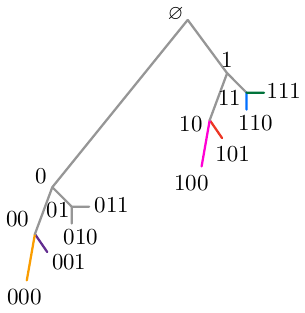}
}
\subfloat[]{%
    \centering
    \includegraphics[width=0.49\columnwidth]{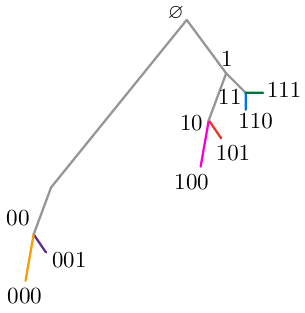}
}
\caption{Pruning*. The colored leaves form the set $\sL$. (a) denotes a binary tree with trivial labels, and the bit strings denote the vertices within the tree. (b) denotes a $\sL$-pruned* binary tree with trivial labels, and the bit strings denote the remaining vertices. In particular, the vertex $|0\ket$ in (a) is absent in (b) so that the edge $\|00\ket$ in (b) has endpoints $|00\ket$ and $|\varnothing\ket$.}
\end{figure}

\begin{remark}[Pruned* Size]
    Note that all vertices (except the root $|\varnothing\ket$) of a $\sL$-pruned* tree has degree $3$ or $1$ -- the latter corresponding to the leaves $\sL$. 
    Using the relation that for any graph $(\sV,\sE)$,
    \begin{equation}
        2|\sE| = \sum_{x\in \sV} \deg x
    \end{equation}
    and the fact that the $|\sE| = |\sV|-1$ for trees, we see that $\sL$-pruned* trees have vertices and edges of size $O(\sL)$.
\end{remark}

The crucial relation in Remark \ref{rem:crucial-relation} was the essential ingredient in defining the interpolation complex.
Since the adjacency relations for the pruned* trees replaces the truncation with $\omega\mapsto \Omega$, we must derive an analogous crucial relation, formalized in Proposition \ref{prop:crucial-relation*}.

\begin{figure}[ht]
\centering
\subfloat[]{%
    \centering
    \includegraphics[width=0.49\columnwidth]{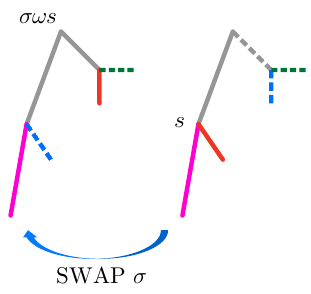}
}
\subfloat[]{%
    \centering
    \includegraphics[width=0.49\columnwidth]{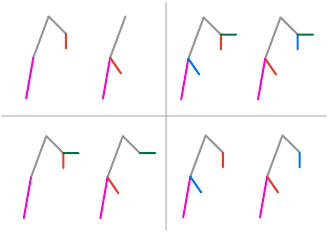}
}
\caption{SWAP and Prune*. (a) Solid edges remain after prune*, while dashed edges may or may not be pruned*. A SWAP occurs at height of string $s$. (b) shows the four concrete possibilities concisely sketched in (a).}
\label{fig:SWAP-and-prune*}
\end{figure}

\begin{lemma}[SWAP and Prune*, Fig. \ref{fig:SWAP-and-prune*}]
    \label{lem:SWAP-and-prune*}
    Let $\sL$ be leaves, $\bm{\tau}$ be a label and $\bm{\sigma}$ be a non-interacting SWAP with indices $I$. Let $s$ be a $\tau \sL$-branching $\ell$-bit string.
    \begin{itemize}
        \item If $\ell,\ell+1\notin I$, then $\sigma s$ is $\sigma \tau \sL$-branching
        \item If $\ell \in I$, then $\sigma \omega s=\omega \sigma s$ is $\sigma \tau \sL$-branching
        \item If $\ell+1\in I$, then either $\sigma s$ is $\sigma \tau \sL$-branching or $(\sigma s)b$ is $\sigma \tau \sL$-branching for some bit $b\in \{0,1\}$.
    \end{itemize}
\end{lemma}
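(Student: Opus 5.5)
The plan is to reduce all three bullets to Lemma~\ref{lem:SWAP-and-prune}, applied at suitably chosen lengths, together with two elementary facts about $\bm{\sigma}$.

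\emph{Fact 1 (prefix compatibility).} Since $\sigma_m=\prod_{i\in I,\,i<m}\sigma_{i,i+1}$, if $m\notin I$ then no transposition in $\sigma_{m'}$ ($m'>m$) mixes positions $\le m$ with positions $>m$. Hence $\sigma_{m'}(s u)=(\sigma_m s)\,u'$ for $\|s\|=m$, where $u'$ is $u$ with the internal swaps applied.

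\emph{Fact 2 (non-interaction).} If $m\in I$ then $m\pm1\notin I$, and $m+1\le h$, since the pair $\{m,m+1\}$ must fit inside $h$ bits.

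Recall that $s$ is $\tau\sL$-branching iff $s0$ and $s1$ are both $\tau\sL$-truncations. The conventional cases (the empty string, and leaves $\ell=h$) should be checked directly. The empty string is fixed by $\sigma$. For a leaf, $\sigma$ maps $\tau\sL$ bijectively onto $\sigma\tau\sL$, and $\ell=h\notin I$.

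\textbf{First bullet} ($\ell,\ell+1\notin I$). For $b\in\{0,1\}$, $sb$ is a $\tau\sL$-truncation of length $\ell+1\notin I$. Lemma~\ref{lem:SWAP-and-prune} then says $\sigma(sb)$ is a $\sigma\tau\sL$-truncation. By Fact 1 with $\ell\notin I$, $\sigma(sb)=(\sigma s)b$. Hence both $(\sigma s)0$ and $(\sigma s)1$ are $\sigma\tau\sL$-truncations, i.e.\ $\sigma s$ is branching.

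\textbf{Second bullet} ($\ell\in I$, so $\ell-1,\ell+1\notin I$). Write $s=(\omega s)c$. For each $b$, $sb$ is a truncation of length $\ell+1\notin I$, so $\sigma(sb)$ is a $\sigma\tau\sL$-truncation. Since the transposition $\sigma_{\ell,\ell+1}$ acts and $\ell-1\notin I$, we have $\sigma(sb)=(\sigma\omega s)\,b\,c$. Its prefix $(\sigma\omega s)b$ is therefore a truncation for both $b$, so $\sigma\omega s$ is branching. The identity $\sigma\omega s=\omega\sigma s$ is the crucial relation of Remark~\ref{rem:crucial-relation}, using $\ell-1\notin I$.

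\textbf{Third bullet} ($\ell+1\in I$, so $\ell,\ell+2\notin I$ and $\ell+2\le h$). This is the step needing a small combinatorial argument, and I expect it to be the main point. Let
\[
P=\{(b,b') : s\,b\,b' \text{ is a } \tau\sL\text{-truncation}\}.
\]
Because $s$ is branching, every $b$ admits some $b'$ with $(b,b')\in P$ (extend $sb$ toward a leaf). Applying Lemma~\ref{lem:SWAP-and-prune} at length $\ell+2\notin I$, and using Fact 1 at $\ell\notin I$, we get that $\sigma(sbb')=(\sigma s)\,b'\,b$ is a $\sigma\tau\sL$-truncation for every $(b,b')\in P$. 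There are two cases.
\begin{itemize}
\item If the projection of $P$ onto $b'$ is all of $\{0,1\}$, then $(\sigma s)0$ and $(\sigma s)1$ are both truncations, so $\sigma s$ is branching.
\item Otherwise all pairs in $P$ share a single value $b'$. Since both $b$ occur, $(\sigma s)b'0$ and $(\sigma s)b'1$ are truncations, so $(\sigma s)b'$ is branching.
\end{itemize}
This gives the dichotomy in the third bullet. The only care needed is the bookkeeping of which length each application of Lemma~\ref{lem:SWAP-and-prune} occurs at, and that Fact 2 guarantees the required non-membership in $I$ in every case.
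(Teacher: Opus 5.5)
Your proof is correct and follows essentially the same route as the paper: each bullet reduces to Lemma~\ref{lem:SWAP-and-prune}, applied at length $\ell+1$ for the first two bullets and at length $\ell+2$ for the third, with the prefix-compatibility of $\bm{\sigma}$ doing the bookkeeping. The differences are cosmetic. You handle the third bullet by a direct case split on the projection of $P$, where the paper argues by contradiction over the choices of $(b_0,b_1)$. You also check explicitly the conventional cases (empty string and leaves) and the implicit requirement $\ell+2\le h$, which the paper leaves tacit.
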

\begin{remark}
    The second possibility ($\ell \in I$) is depicted in Fig. \ref{fig:SWAP-and-prune*}, while the third possibility ($\ell +1\in I$) is the converse of the second possibility.
\end{remark}
\begin{proof}
    Let consider the three cases separately. For notation simplicity, we omit $\tau$ and treat it as a trivial permutation. The proof is exactly the same for general $\tau$ by replacing $\sL\mapsto \tau \sL$ in the following.
    \begin{itemize}
        \item ($\ell,\ell+1\notin I$) 
        By Lemma \ref{lem:SWAP-and-prune}, we see that $\sigma(sb)=(\sigma s) b$ is a $\sigma \tau \sL$-truncation for all $b\in \{0,1\}$ and thus $\sigma s$ is $\sigma \sL$-branching.
        \item ($\ell \in I$). 
        Since $s$ is $\sL$-branching, we see that $s b$ is a $\sL$-truncation for any bit $b$. 
        Since $\ell+1\notin I$, we see that $\sigma(sb)$ is a $\sigma \sL$-truncation and thus $\omega \sigma(sb)$ is a $\sigma \sL$-truncation. However, note that $(\sigma \omega s)b=\omega \sigma (sb)$. Hence, $\sigma \omega s = \omega \sigma s$ is $\sigma \sL$-branching, where equality follows from $\ell-1\notin I$ and the crucial relation in Remark \ref{rem:crucial-relation}.
        \item ($\ell+1 \in I$).
        Since $\ell \notin I$, by Lemma \ref{lem:SWAP-and-prune}, $\sigma s$ is a $\sigma \sL$-truncation.
        In fact, there exists bit $b$, say $=0$, such that $(\sigma s)0$ is a $\sigma \sL$-truncation.
        Suppose $\sigma s$ is not $\sigma \sL$-branching. Then $(\sigma s) 1$ is not a $\sigma \sL$-truncation. 
        Since $s$ is $\sL$-branching, we see that $s0,s1$ are both $\sL$-truncations, and thus there must exist bits $b_0,b_1$ such that $s0 b_0,s1b_1$ are $\sL$-truncations. Since $\ell+2 \notin I$, by Lemma \ref{lem:SWAP-and-prune}, $\sigma(s 0b_0)=(\sigma s) b_0 0$ and $\sigma (s 1 b_1)= (\sigma s) b_1 1$ are $\sigma \sL$-truncations. 
        If $b_0=0,b_1=1$, then $(\sigma s)00$ and $(\sigma s)11$ are $\sigma \sL$-truncations so that $\sigma s$ is $\sigma\sL$-branching, and thus we reach a contradiction.
        If $b_0=1,b_1=0$, then $(\sigma s)10$ is a $\sigma \sL$-truncation which contradicts the fact that $(\sigma s)1$ is not a $\sigma \sL$-truncation.
        If $b_0=b_1=1$, then $(\sigma s)10,(\sigma s)11$ are $\sigma \sL$-truncations and thus we reach a contradiction.
        Therefore, $b_0=b_1=0$ and thus $(\sigma s)0$ is $\sigma \sL$-branching. \qedhere
    \end{itemize}
\end{proof}

\begin{figure}[ht]
\centering
\subfloat[]{%
    \centering
    \includegraphics[width=0.45\columnwidth]{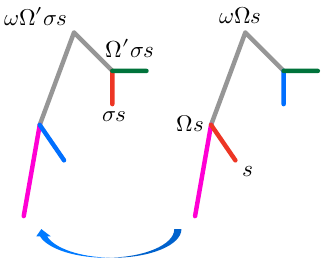}
}
\hspace{0.05\columnwidth}
\subfloat[]{%
    \centering
    \includegraphics[width=0.45\columnwidth]{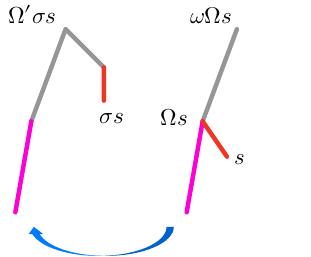}
}
\caption{Crucial Relation*. The blue arrow denotes a SWAP $\bm{\sigma}$ between the pruned* trees $T\rightarrow T'$. Bit string $s$ is a $\tau \sL$-truncation but not $\tau \sL$-branching. In particular, $s$ is not a leaf -- the colored branches should be regarded as connected to further structures. (a), (b) illustrates an example of scenario $m=m'\in I$ and $m=m'+1$ in Theorem \ref{prop:crucial-relation*}, respectively.
}
\label{fig:crucial-relation*}
\end{figure}

\begin{prop}[Crucial Relation*, Fig. \ref{fig:crucial-relation*}]
    \label{prop:crucial-relation*}
    Let $\sL$ be leaves, $\bm{\tau}$ be a label and $\bm{\sigma}$ be a SWAP with indices $I$. Let $s$ be a $\tau \sL$-truncation $\ell$-bit string.
    Let $m,m'$ be the length of $\Omega s,\sigma \Omega'\sigma s$, respectively, where $\Omega,\Omega'$ are the branching truncations with respect to $\tau \sL,\sigma \tau \sL$, respectively. Then $|m-m'|\le 1$ and
    \begin{equation}
        \sigma^{\star} \Omega s = \sigma^{\star} \sigma \Omega' \sigma s.
    \end{equation}
\end{prop}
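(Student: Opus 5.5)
We reduce to $\bm{\tau}$ trivial, exactly as in the proof of Lemma~\ref{lem:SWAP-and-prune*}: replace $\sL$ by $\tau\sL$ throughout, so $\Omega$ is the branching truncation for $\sL$ and $\Omega'$ the one for $\sigma\sL$. The plan is to compare the two branching truncations via the three cases of Lemma~\ref{lem:SWAP-and-prune*}, and then verify the identity with the crucial relation $\sigma^{\star}\omega^{k}\sigma=\sigma^{\star}\sigma\omega^{k}$ of Remark~\ref{rem:crucial-relation}.

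\textbf{Step 1: show $m'\ge m-1$.} The string $\Omega s=\omega^{\ell-m}s$ is $\sL$-branching of length $m$. We apply Lemma~\ref{lem:SWAP-and-prune*} to it.
\begin{itemize}
\item If $m,m+1\notin I$, then $\sigma\Omega s$ is $\sigma\sL$-branching.
\item If $m\in I$, then $\omega\sigma\Omega s$ is $\sigma\sL$-branching, with length $m-1$.
\item If $m+1\in I$, then either $\sigma\Omega s$ or $(\sigma\Omega s)b$ is $\sigma\sL$-branching. In the latter case we must check that $(\sigma\Omega s)b$ is actually a truncation of $\sigma s$, which follows because $\sigma s$ extends $\sigma\Omega s$ when $\ell>m$.
\end{itemize}
Each branching string found here is a truncation of $\sigma s$, using $\sigma\omega^k=\omega^k\sigma$ at non-swap lengths, Lemma~\ref{lem:SWAP-and-prune}, and the edge case $\ell=m$. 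Since $\Omega'$ picks the longest branching truncation, $m'\ge m-1$.

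\textbf{Step 2: show $m\ge m'-1$.} We run the same argument with $\sigma\sL$ in place of $\sL$. Because $\sigma$ is an involution on leaves, $\sigma\sigma\sL=\sL$, so $\sigma\Omega'\sigma s$ being $\sigma\sL$-branching gives, by the same lemma, a branching truncation of $s$ of length at least $m'-1$. Together the two steps give $|m-m'|\le1$.

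\textbf{Step 3: prove the identity by cases.} Set $k=\ell-m$ and $k'=\ell-m'$. Then $\Omega s=\omega^{k}s$ and $\sigma\Omega'\sigma s=\sigma\omega^{k'}\sigma s$. By the crucial relation, the right-hand side satisfies
\[
\sigma^{\star}\sigma\omega^{k'}\sigma s=\sigma^{\star}\omega^{k'}s .
\]
It therefore suffices to show $\sigma^{\star}\omega^{k}s=\sigma^{\star}\omega^{k'}s$.
\begin{itemize}
\item If $m=m'$, the two sides are equal trivially.
\item If $|m-m'|=1$, the two strings differ only in the last bit, and by $\omega\sigma^{\star}=\omega$ they agree except at position $\max(m,m')$. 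The identity then holds exactly when $\max(m,m')\in I$, since $\sigma^{\star}$ replaces that last bit by $\star$.
\end{itemize}
So the key claim is that $m\neq m'$ forces $\max(m,m')\in I$. Equivalently, if $\max(m,m')\notin I$, Step 1 (or Step 2) actually yields equality.

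\textbf{Main obstacle.} Establishing this key claim is the hardest part. I would try to prove that $m'=m+1$ forces $m+1\in I$. Suppose instead $m+1\notin I$. If $m\notin I$ as well, Lemma~\ref{lem:SWAP-and-prune*} applied to the length-$(m+1)$ $\sigma\sL$-branching string $\sigma\Omega'\sigma s$, pulled back by $\sigma$, gives an $\sL$-branching truncation of $s$ of length $m+1$. This contradicts the maximality of $\Omega$.

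If instead $m\in I$, the pullback lands at length $m$ or $m+1$. I would use the third bullet of Lemma~\ref{lem:SWAP-and-prune*} together with the non-interacting condition ($m\in I\Rightarrow m+1\notin I$) to exclude the bad case. This needs careful bookkeeping of which truncations are branching, and it is where I would check the configurations in Fig.~\ref{fig:crucial-relation*}.

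The case $m=m'+1$ is symmetric under swapping the roles of $(\sL,s)$ and $(\sigma\sL,\sigma s)$.
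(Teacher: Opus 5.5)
Your Steps 1--2 are the paper's scenario analysis. For $m=m'$, your Step 3 reduction $\sigma^\star\sigma\omega^{k'}\sigma s=\sigma^\star\omega^{k'}s$ via Remark~\ref{rem:crucial-relation} settles the identity in one line, more directly than the paper's case table.

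\textbf{The case $m\neq m'$.} Your treatment rests on a misreading of $\sigma^\star$: it preserves length. For $\|r\|\in I$ the string $\sigma^\star r=\omega(r)\star$ still has length $\|r\|$, with $\star$ in the last slot. Moreover, $|\omega(r)\star\ket$ and $|\omega r\ket$ are distinct vertices of $S$ joined by an edge. So for $m\neq m'$ the two sides have different lengths and are never equal; for $m=m'+1\in I$ you are comparing $\omega(\Omega s)\star$ with $\omega\Omega s$. Your criterion ``holds exactly when $\max(m,m')\in I$'' is therefore false.

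The identity is only used for $m=m'$, in the Well-Definedness lemma. What the unequal case must supply, and what Lemma~\ref{lem:pruned*-interpolation-cone} actually uses, is a structural fact: $m>m'$ forces $m\in I$ and $\sigma\Omega'\sigma s=\omega\Omega s$, and symmetrically for $m'>m$. You leave this open, and your sketch goes astray (``the pullback lands at length $m$ or $m+1$'').

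The missing idea is not to coarsen Step 1 to $m'\ge m-1$. As the paper does, record the following:
\begin{itemize}
\item Whenever $m\notin I$ (your first and third bullets), one gets $m'\ge m$, with $\sigma\Omega'\sigma s=\Omega s$ in case of equality.
\item Only $m\in I$ allows $m'=m-1$, and then $\sigma\Omega'\sigma s=\omega\Omega s$.
\end{itemize}
With the mirror statements, the paper's $3\times 3$ table yields everything. For instance, $m'=m+1\notin I$ is impossible, whatever $m$ is: bullet 1 or 3 of Lemma~\ref{lem:SWAP-and-prune*} applied to $\Omega'\sigma s$ gives an $\sL$-branching truncation of $s$ of length at least $m'$ (modulo the truncation check below).

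\textbf{The truncation check in your third bullet.} It does not follow from ``$\sigma s$ extends $\sigma\Omega s$'', which says nothing about the next bit of $\sigma s$. For $\ell=m+1$ it is false, and the paper's scenario 3b asserts the same thing without proof. Take $h=4$, $I=\{3\}$ (so $\sigma_3$ is trivial), $\tau$ trivial, $\sL=\{0000,0010\}$, hence $\sigma\sL=\{0000,0001\}$, and $s=001$. Then:
\begin{itemize}
\item $\Omega s=00$, so $m=2$ and $m+1\in I$.
\item $\sigma\Omega s=00$ is not $\sigma\sL$-branching, while $000$ is; but $000$ is not a truncation of $\sigma s=001$.
\item In fact $\Omega'\sigma s=\varnothing$, so $m'=0$.
\end{itemize}
Hence $|m-m'|\le 1$ fails, and so does your key claim (here $\max(m,m')\notin I$). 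The proposition as stated needs an extra hypothesis.

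For $\ell\ge m+2$ the bullet is fine. The proof of Lemma~\ref{lem:SWAP-and-prune*} shows that all length-$(m+2)$ $\sL$-truncations through $\Omega s$ end in the same bit $b$. So $s$ passes through $(\Omega s)s_{m+1}b$, and $\sigma s$ passes through $(\sigma\Omega s)b\,s_{m+1}$. Since 3b needs $m+1\in I$, assuming $\ell\notin I$ excludes $\ell=m+1$ and its mirror $\ell=m'+1$; after that, your Steps 1--2 and the table go through.

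Finally, $\Omega$ must be a proper truncation, since otherwise $\partial^T\|\bar\tau s\ket$ would vanish for branching $s$. So there is no edge case $\ell=m$. Under your non-proper reading, the bound would fail outright there.
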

\begin{remark}
    Similar to Remark \ref{rem:crucial-relation} in which $\sigma^{\star} \omega =\sigma^{\star} \sigma \omega \sigma$, we see that the relation still holds even for branching truncations $\Omega,\Omega'$.
\end{remark}
\begin{proof}
    For notation simplicity, we omit $\tau$ and treat it as a trivial permutation. The proof is exactly the same for general $\tau$ by replacing $\sL\mapsto \tau \sL$ in the following.
    Let $m,m'$ be the length of $\Omega s$ and $\sigma\Omega'\sigma s$, respectively.
    Note the following scenarios.
    \begin{enumerate}[label=\arabic*)]
        \item Suppose that $m,m+1\notin I$. Then $\sigma \Omega s$ must be $\sigma \sL$-branching and a truncation of $\sigma s$. Hence, $m'\ge m$, where equality implies $\Omega s=\sigma \Omega'\sigma s$
        \item Suppose that $m\in I$. 
        Then by Lemma~\ref{lem:SWAP-and-prune*} $\omega \sigma \Omega s$ must be $\sigma \sL$-branching and a truncation of $\sigma s$. Hence, $m'\ge m-1$ where equality implies $\omega \Omega s=\sigma \Omega'\sigma s$.
        \item Suppose that $m+1\in I$. Then by Lemma~\ref{lem:SWAP-and-prune*} one and only one of the following must hold.
        \begin{enumerate}[label=\alph*)]
            \item $\sigma \Omega s$ is $\sigma \sL$-branching and a truncation of $\sigma s$. Hence, $m'\ge m$, where equality implies $\Omega s=\sigma \Omega'\sigma s$
            \item $\sigma \Omega s$ is not $\sigma \sL$-branching and there exists bit $b$ such that $(\sigma \Omega s)b$ is $\sigma \sL$-branching and a truncation of $\sigma s$. Hence, $m'\ge m+1$ where equality implies $(\Omega s)b=\sigma \Omega'\sigma s$
        \end{enumerate}
    \end{enumerate}
    Group scenario 3a with 1, since they imply the same result. Note that the scenarios are disjoint and form a partition of all possibilities.
    Similar scenarios also hold for $\Omega'\sigma s$, which we denote by $1'\sqcup 3'a,2',3'b$.
    The possibilities are tabulated below, where the empty regions are not possible.
   For example, by the previous argument, if we are in scenario $1\sqcup 3a$, then $m' \ge m$. Similarly, if we are in scenario $1'\sqcup 3'a$, then $m\ge m'$. Therefore, if both scenarios occur, then $m=m'$. 
    The other possibilities are similarly derived.
    \begin{center}
    \begin{tabular}{r | c c c}
        \multicolumn{1}{c}{} & $1'\sqcup 3'a$ & $2'$ & $3'b$ \\
        \cline{2-4}
        $1\sqcup 3a$ & $m=m'$ & $m=m',m'-1$ &  \\
        $2$ & $m=m',m'+1$ & $|m-m'|\le 1$ & $m=m'+1$ \\
        $3b$ &  & $m=m'-1$ &  \\
    \end{tabular}
    \end{center}
    Based on the table, we can derive the relation between $\sigma^{\star}\Omega s$ and $\sigma^{\star}\sigma \Omega' \sigma s$ immediately for all cases so that the statement follows.
\end{proof}
\begin{definition}[Pruned* Interpolations]
    Let $\sL$ be leaves and $T',T$ be $\sL$-pruned* trees with labels $\bm{\sigma\tau}, \bm{\tau}$, respectively, where $\bm{\sigma}$ is a SWAP. 
    Define the $\sL$-\textbf{pruned*} interpolation complex $S$ between $T',T$ as the graph complex with vertices $|\bar{\tau}\sigma^{\star} s\ket$ and edges $\| \bar{\tau} \sigma^{\star} s\ket$ over $s$ such that either $s$ is branching with respect to $\tau \sL$ or $\sigma s$ is branching with respect to $\sigma \tau \sL$.
    Define
    \begin{equation}
        \partial^{S}\|\bar{\tau}\sigma^{\star} s\ket = |\bar{\tau} \sigma^{\star} s\ket + |\bar{\tau}\sigma^{\star} \tilde{\Omega} s\ket
    \end{equation}
    where \textbf{interpolating truncation} $\tilde{\Omega}$ is defined as follows: if $\Omega,\Omega'$ are the branching truncations with respect to $\tau \sL,\sigma \tau \sL$, then $\tilde{\Omega}s$ is the longer bit string among $\Omega s$ and $\sigma \Omega' \sigma s$.
\end{definition}

\begin{lemma}[Well-Definedness]
    The $\sL$-pruned* interpolation complex $S$ is well-defined.
\end{lemma}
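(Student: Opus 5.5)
To prove that the $\sL$-pruned* interpolation complex $S$ is well-defined, we must check two things. First, that the vertex/edge labels $|\bar{\tau}\sigma^\star s\ket$, $\|\bar{\tau}\sigma^\star s\ket$ are unambiguous, i.e., that different admissible $s$ giving the same $\sigma^\star s$ are consistently identified. Second, and more substantively, that for every admissible edge $\|\bar{\tau}\sigma^\star s\ket$ the vertex $|\bar{\tau}\sigma^\star\tilde{\Omega}s\ket$ appearing in $\partial^S$ is itself an admissible vertex, i.e., that $\tilde{\Omega}s$ is $\tau\sL$-branching or $\sigma\tilde{\Omega}s$ is $\sigma\tau\sL$-branching. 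As in the proofs of Lemma~\ref{lem:SWAP-and-prune*} and Proposition~\ref{prop:crucial-relation*}, I will suppress $\tau$ (replace $\sL\mapsto\tau\sL$) throughout.

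For the first point, I would argue directly from the definition of $\sigma^\star$. If $\ell\notin I$ then $\sigma^\star s=s$, so nothing is identified. If $\ell\in I$, the two strings $\omega(s)0$ and $\omega(s)1$ collapse to the single non-branching point $\omega(s)\star$. The definition only asks that some representative be admissible, so the label set is simply the image under $\sigma^\star$ of the admissible strings.

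We still need the boundary to be independent of the representative. The plan is to show that $\sigma^\star\tilde{\Omega}s$ depends only on $\sigma^\star s$. When $\ell\in I$, the strings $s=\omega(s)b$ share $\omega s$, and because $\ell-1\notin I$ we have $\sigma\omega s=\omega\sigma s$. Using Proposition~\ref{prop:crucial-relation*}, the quantities $\sigma^\star\Omega s$ and $\sigma^\star\sigma\Omega'\sigma s$ are equal, so whichever string is longer, $\sigma^\star\tilde\Omega s$ equals this common value. I would then check that this common value is determined by $\omega s$ unless $\Omega s=s$ or $\Omega'\sigma s=\sigma s$. Since $\Omega$ is defined as a proper truncation, that situation should not arise, but it needs to be checked against the convention that leaves count as branching.

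For the second point, let $t=\tilde{\Omega}s$. If $t=\Omega s$, then $t$ is $\sL$-branching by the definition of $\Omega$. If instead $t=\sigma\Omega'\sigma s$, then $\sigma t=\Omega'\sigma s$ is $\sigma\sL$-branching by the definition of $\Omega'$. Either way $t$ is admissible, so the vertex $|\sigma^\star t\ket$ exists. I also need $|\sigma^\star s\ket$ to exist as a vertex, and this is immediate because vertices and edges are indexed by the same admissible set. Finally, I would verify that $t$ is a strict truncation of $s$, so that the edge has two distinct endpoints. This holds because both $\Omega s$ and $\sigma\Omega'\sigma s$ are truncations of $s$, using $\sigma\sigma=\mathrm{id}$; the second is a truncation because $\Omega'\sigma s$ truncates $\sigma s$ and, since $\sigma$ acts by non-interacting adjacent swaps, applying $\sigma$ preserves the truncation relation up to the swapped position. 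That last point is the subtle case, and it is exactly what the table in Proposition~\ref{prop:crucial-relation*} handles via $|m-m'|\le1$.

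The main obstacle is the boundary case in which the length $m'$ of $\sigma\Omega'\sigma s$ sits at a SWAP index. There $\sigma\Omega'\sigma s$ may fail to be a literal prefix of $s$; it may only agree with $s$ after applying $\sigma^\star$. My first attempt would be to split into the cases $m'\in I$ and $m'+1\in I$, and in each show that $\sigma^\star\sigma\Omega'\sigma s$ is a $\sigma^\star$-truncation of $\sigma^\star s$. Proposition~\ref{prop:crucial-relation*} should supply this, giving well-definedness at the level of $S$'s labels.
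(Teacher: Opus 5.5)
Your proposal is correct in substance, but the decisive step differs from the paper's. That step is showing that, at a SWAP height $\ell\in I$, the boundary of an edge depends only on its label $\sigma^\star s=\omega(s)\star$.

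The paper uses Lemma~\ref{lem:SWAP-and-prune*} together with admissibility of $s$. If $s$ is $\tau\sL$-branching, then $\sigma\omega s=\omega\sigma s$ is $\sigma\tau\sL$-branching, so $\sigma\Omega'\sigma s=\omega s$. If instead $\sigma s$ is $\sigma\tau\sL$-branching, then $\omega s$ is $\tau\sL$-branching, so $\Omega s=\omega s$. Either way $\tilde\Omega s=\omega s$ exactly.

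You argue instead that $\Omega s$ and $\Omega'\sigma s$ are proper truncations. Hence they are the longest branching prefixes of $\omega s$ and of $\omega\sigma s=\sigma\omega s$ respectively. So both candidates, and therefore $\tilde\Omega s$, are functions of $\omega s$ alone. Your route is more elementary: it needs neither Lemma~\ref{lem:SWAP-and-prune*} nor admissibility of the representative. The paper's route gives the sharper identity $\tilde\Omega s=\omega s$, i.e.\ the $\star$-vertex hangs directly off $|\omega s\ket$. That identity is reused in the chain-map verification of Lemma~\ref{lem:pruned*-interpolation-cone} (the step $\tilde\Omega\Omega\sigma s=\omega\Omega\sigma s$). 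Your explicit check that $\sigma^\star\tilde\Omega s$ is itself an admissible vertex is a step the paper leaves implicit, and it is correct.

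Two corrections.

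First, $\sigma^\star$ preserves length, so the identity in Proposition~\ref{prop:crucial-relation*} can only be meant for $m=m'$. Invoke it only to resolve ties, as the paper does. Do not use it as in your phrase ``the quantities are equal, so whichever is longer\ldots'', which is self-contradictory when the lengths differ.

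Second, your ``main obstacle'' is not part of this lemma. The two endpoints are distinct simply because $\|\sigma^\star\tilde\Omega s\|=\|\tilde\Omega s\|\le \ell-1<\ell$; no prefix relation is needed. Your claim that $\sigma\Omega'\sigma s$ is a literal prefix of $s$ is indeed false when $m'\in I$. It holds only after applying $\sigma^\star$, and that follows directly from $m'-1\notin I$, not from the proposition. The final paragraph can be dropped.
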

\begin{proof}
    To make sure that $S$ is well-defined, we have to check that (1) $\tilde{\Omega}$ is well-defined and (2) the adjacency relation is well-defined.
    
    Indeed, if either $\Omega s$ or $\sigma \Omega' \sigma s$ is strictly longer bit string than the other, then $\tilde{\Omega} s$ is well-defined.
    However, in the case that the two are equally long, we must show that choosing one or the other does not make a difference, that is, $\sigma^{\star} \Omega s= \sigma^{\star} \sigma \Omega'\sigma s$.
    Indeed, this follows from Proposition \ref{prop:crucial-relation*}.
    
    Now to check that the adjacency relation is well-defined.
    Suppose edge $\|r\ket$ is given where $r$ is an $\ell$-bit string. 
    Without loss of generality, consider the case where $\ell\in I$ so that only $\omega s$ can be determined from $r=\bar{\tau} \sigma^{\star} s$.
    By definition, either $s$ is branching with respect to $\tau \sL$ or $\sigma s$ is branching with respect to $\sigma \tau \sL$.
    In the former case, by Lemma \ref{lem:SWAP-and-prune*}, we see that $\sigma \omega s = \omega \sigma s$ is branching with respect to $\sigma \tau \sL$. 
    Hence, $\sigma \Omega' \sigma s = \omega s$ must be the longer truncation so that we can choose $\tilde{\Omega} s= \omega s$ and thus well-defined from our limited information $\omega s$.
    In the latter case, by Lemma \ref{lem:SWAP-and-prune*}, we see that $\omega s$ is branching with respect to $\tau \sL$ and thus $\Omega s = \omega s$ must be the longer truncations so that we can choose $\tilde{\Omega} s =\omega s$.
    Hence, the adjacency relation is well-defined.
\end{proof}

\begin{remark}[Pruned* Interpolation Size]
    The number of vertices in $S$ is upper bounded by the sum of that of $T'$ and $T$ and thus $=O(\sL)$. This follows from the fact that the vertices in $S$ are labeled via $\bar{\tau}\sigma^{\star} s$ and thus must be $\le$ than the possibilities of $s$.
\end{remark}

\begin{figure}[ht]
\centering
\subfloat[]{%
    \centering
    \includegraphics[width=0.7\columnwidth]{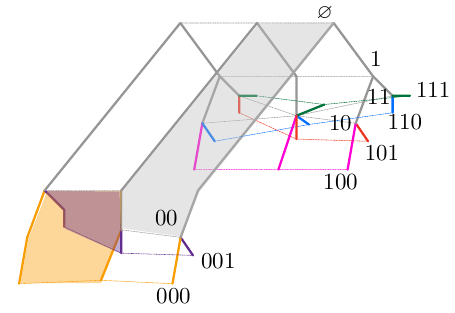}
}
\subfloat[]{%
    \centering
    \includegraphics[width=0.22\columnwidth]{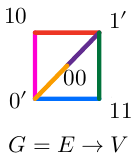}
}
\\
\subfloat[\label{fig:pruned*-cell-complex}]{%
    \centering
    \includegraphics[width=.9\columnwidth]{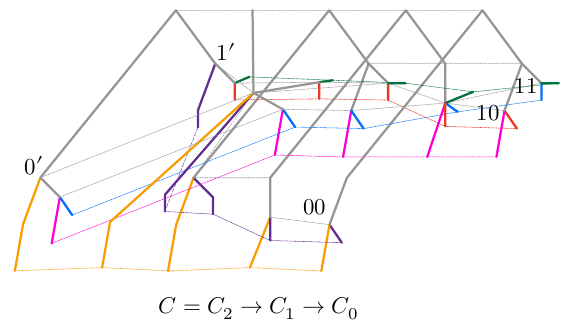}
}
\caption{Pruned* Interpolation Cone. (a) Left $T'$ and right $T$ denoted pruned* binary trees with labels $\bm{\sigma}$ and trivial labels, respectively, where $\bm{\sigma}$ is a SWAP, while the middle denotes the pruned* interpolation graph complex $S$ of $T',T$. The vertices of $T$ are labeled by $|s\ket$ with bit strings $s$.
Note that the shaded orange, purple, grey regions are all faces with 5 edges. For example, the left-most segment of the orange and purple faces denote a single edge despite drawn in a piecewise manner, while the right-most segment denotes two edges, separated at the branching point of the interpolation complex.
(b) denotes the graph $G$, which has five vertices and six edges, from which (c) is constructed as in Theorem~\ref{thm:pruned*-parsimonious-cone}. Note that (a) is the right portion of (c).
Note that the degree of any vertex in (b) is at least $2$, which guarantees that the corresponding labelled vertices in (c) are branching and thus not pruned*.
}
\label{fig:pruned*-interpolation-cone}
\end{figure}

\begin{lemma}[Pruned* Interpolations Cone, Fig. \ref{fig:pruned*-interpolation-cone}]
    \label{lem:pruned*-interpolation-cone}
    Let $\sL$ be leaves at height $h$.
    Let $T',T$ be $\sL$-pruned* height $h$ trees with labels $\bm{\sigma\tau}, \bm{\tau}$ and truncations $\Omega',\Omega$, respectively, where $\bm{\sigma}$ is a SWAP.
    Let $S$ denote the $\sL$-pruned* interpolation with truncation $\tilde{\Omega}$.
    Let $T'S,ST$ be $\sL$-pruned* height $h$ binary trees with labels $\bm{\sigma\tau},\bm{\tau}$, respectively.
    Define $g_i:(T'S \oplus ST)_i \to (T'\oplus S\oplus T)_i$ for $i=1,0$ via
    \begin{align}
        g_1\| \bar{\tau} \sigma s, T'S\ket &= \| \bar{\tau} \sigma s, T'\ket + \| \bar{\tau} \sigma^{\star} \sigma s, S\ket \nonumber\\
        &\; + \|\bar{\tau} \sigma^{\star}  \Omega \sigma s,S\ket \mathbbm{1}\{\|\Omega \sigma s\| > \|\Omega' s\|\}\\
        g_1\| \bar{\tau} s, ST\ket &= \| \bar{\tau} \sigma^{\star} s, S\ket + \| \bar{\tau} s, T\ket \\
        &\; + \|\bar{\tau}\sigma^{\star} \Omega' s, S\ket \mathbbm{1}\{\|\Omega's\| > \|\Omega \sigma s\|\}
    \end{align}
    where $\|\cdot\|$ denotes the length of the bit string\footnote{Note that this is different from Hamming weight.} and $\mathbbm{1}\{\cdot\}$ denotes an indicator function.
    Similarly, define $g_0$ with $\|\cdots \ket$ replaced by $|\cdots\ket$ and without the third indicator term.
    Then $g$ is a chain map and the cone complex $C=\cone (g)$ -- see Fig. \ref{fig:pruned*-interpolation-cone} -- satisfies
    \begin{align}
        \label{eq:pruned*-cone-prop1}
        H_2(C)=H_1(C)=0, &\quad H_0(C) \cong \dF_2\\
        \label{eq:pruned*-cone-prop2}
        C_2 \xrightleftharpoons[4]{5} C_1 \xrightleftharpoons[9]{2} C_0.
    \end{align}
\end{lemma}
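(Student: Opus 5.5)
The proof will follow the template of Lemma~\ref{lem:interpolation-cone} and Lemma~\ref{lem:pruned-interpolation-cone}. First I check that $g_1,g_0$ are well-defined, i.e.\ every target cell exists. Then I verify $g_0\partial^{A}=\partial^{B}g_1$ with $A=T'S\oplus ST$ and $B=T'\oplus S\oplus T$. Finally I read off homology from the Snake Lemma~\ref{lem:snake} and count column weights.

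\emph{Well-definedness.} An edge $\|\bar\tau\sigma s,T'S\ket$ means $\sigma s$ is $\sigma\tau\sL$-branching. So $\|\bar\tau\sigma s,T'\ket$ exists trivially, and $\|\bar\tau\sigma^\star\sigma s,S\ket$ exists because $S$ is indexed by strings $s$ that are branching for either $\tau\sL$ or $\sigma\tau\sL$. The third term only appears when $\Omega\sigma s$ is $\tau\sL$-branching, so it is again an $S$-edge. The argument for $ST$ is symmetric, and $g_0$ is handled the same way.

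\emph{Chain map.} Fix an edge $e=\|\bar\tau\sigma s,T'S\ket$. The $T'$-component of $g_0\partial e$ and $\partial g_1 e$ agree, both equal to $|\bar\tau\sigma s\ket+|\bar\tau\sigma\Omega'\ldots\ket$, because $T'S$ is a copy of $T'$. The content is in the $S$-component. The left side gives $|\bar\tau\sigma^\star\sigma s\ket+|\bar\tau\sigma^\star(\text{the }T'\text{-truncation})\ket$. The right side gives $\partial^S$ of the one or two $S$-edges, namely $|\bar\tau\sigma^\star\sigma s\ket+|\bar\tau\sigma^\star\tilde\Omega\sigma s\ket$, plus possibly $|\bar\tau\sigma^\star\Omega\sigma s\ket+|\bar\tau\sigma^\star\tilde\Omega\Omega\sigma s\ket$. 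There are two cases.
\begin{enumerate}[label=(\roman*)]
\item If the $T'$-side truncation is at least as long, then $\tilde\Omega$ equals it, the indicator vanishes, and Proposition~\ref{prop:crucial-relation*} gives equality of $\sigma^\star$-images in the tie case.
\item If $\|\Omega\sigma s\|$ is strictly longer, then $\tilde\Omega\sigma s=\Omega\sigma s$, and the extra edge $\|\bar\tau\sigma^\star\Omega\sigma s,S\ket$ is exactly the correction. Its two boundary terms telescope: the term $|\bar\tau\sigma^\star\Omega\sigma s\ket$ cancels, and the remaining endpoint $\tilde\Omega\Omega\sigma s$ must coincide under $\sigma^\star$ with the $T'$-truncation.
\end{enumerate}

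I expect case (ii) to be the main obstacle. By Proposition~\ref{prop:crucial-relation*}, $|m-m'|\le1$. In this case the $\tau\sL$-branching string $\Omega\sigma s$ is one level below the $\sigma\tau\sL$-branching truncation, and this only happens in scenarios $2'$ or $3b$ of the table. I will use Lemma~\ref{lem:SWAP-and-prune*} to show that the next interpolating truncation of $\Omega\sigma s$ lands exactly on $\sigma\Omega's$, up to $\sigma^\star$. The $ST$ edges are handled symmetrically.

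\emph{Homology and weights.} Each of $T',S,T,T'S,ST$ is a tree: pruned* trees are trees, and $S$ is connected and acyclic because each non-root vertex has a unique parent via $\tilde\Omega$, with strictly decreasing length. Hence $H_1=0$ and $H_0\cong\dF_2$ for each, and the Snake Lemma argument of Lemma~\ref{lem:interpolation-cone} applies verbatim. The map $[g_0]$ is again the repetition-code differential with two bits and three checks, which gives Eq.~\eqref{eq:pruned*-cone-prop1}.

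For Eq.~\eqref{eq:pruned*-cone-prop2}, the only change from before is the face-to-edge weight. The weight $5$ arises as $2$ edges from $\partial^{A}$ plus $g_1$ contributing at most $3$. The edge-to-face weight stays $4$: an $S$-edge receives at most one ordinary preimage from each of $T'S$ and $ST$, and we must argue that indicator terms hit a given $S$-edge only in place of, not in addition to, these preimages. Establishing this is a second, smaller case check via Lemma~\ref{lem:SWAP-and-prune*}. The vertex weights $2$ and $9$ are inherited unchanged, since degrees in pruned* trees are at most $3$.
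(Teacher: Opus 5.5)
Your decomposition matches the paper's: well-definedness, then the two-case check of $g_0\partial=\partial g_1$. That check is trivial when $\|\Omega\sigma s\|\le\|\Omega' s\|$; otherwise $\tilde\Omega\sigma s=\Omega\sigma s$ and the indicator edge telescopes. Finally, the Snake Lemma is applied to five trees.

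The step you defer in case (ii) is exactly what the paper writes out. From the table you only need that the strictly longer side is in scenario 2 of Proposition~\ref{prop:crucial-relation*}. This gives $\|\Omega\sigma s\|\in I$ and $\Omega' s=\omega\sigma\Omega\sigma s$. Next, apply the second bullet of Lemma~\ref{lem:SWAP-and-prune*} to the $\tau\mathcal{L}$-branching string $\Omega\sigma s$: it shows that $\omega\sigma\Omega\sigma s$ is $\sigma\tau\mathcal{L}$-branching. Hence $\sigma\Omega'\sigma\Omega\sigma s=\omega\Omega\sigma s$ is at least as long as $\Omega\Omega\sigma s$. Therefore $\tilde\Omega\Omega\sigma s=\omega\Omega\sigma s$ and $\sigma^\star\tilde\Omega\Omega\sigma s=\sigma^\star\sigma\Omega' s$.

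There is also a small slip in well-definedness. For the face $\|\bar\tau\sigma s,T'S\rangle$, it is $s$, not $\sigma s$, that is $\sigma\tau\mathcal{L}$-branching, because $\overline{\sigma\tau}=\bar\tau\sigma$. That is what makes $\|\bar\tau\sigma^\star\sigma s,S\rangle$ an $S$-edge.

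The real error is in your edge-to-face count. It is false that an $S$-edge has at most one ordinary preimage from each of $T'S$ and $ST$. At a swap height $\ell\in I$ the map $\sigma^\star$ is two-to-one. So $\|p\star,S\rangle$ receives both $\|p0,ST\rangle$ and $\|p1,ST\rangle$ when both are branching, and likewise two faces from $T'S$. That collision is exactly where the $4$ in Lemma~\ref{lem:interpolation-cone} comes from. With your one-per-side claim, the ``in place of'' argument would prove a bound of $2$, which is wrong.

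The correct bookkeeping is per child, in three steps:
\begin{enumerate}[label=(\alph*)]
\item Indicator faces only land on $S$-edges whose length lies in $I$.
\item By Proposition~\ref{prop:crucial-relation*}, an $ST$-indicator face landing on $\|p\star,S\rangle$ comes from a branching $x$ with $\Omega x=p$. That $x$ is the first branching descendant of a \emph{non-branching} child $pa$.
\item So each child $pa$ contributes at most one face: the ordinary one if $pa$ branches, otherwise at most one indicator face. The same holds for the children of $\sigma p$ on the $T'S$ side, giving at most $2+2=4$.
\end{enumerate}

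Your justification of the $9$ also does not apply. The maximum is attained at $S$-vertices $|p\star,S\rangle$: one parent and up to four children in $S$, plus up to four $g_0$-preimages. $S$ is not a pruned* tree but a contraction, so you must check that contraction keeps $S$-degrees at most $5$. It does: a discarded vertex with two children can only be a swap-height vertex that is the unique child of a retained vertex. The paper leaves both counts implicit, but the reasons you give would not establish them.
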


\begin{proof}
    Since the adjacency relations of the $\sL$-pruned* trees and interpolations are slightly different from Eq.~\eqref{eq:tree-adjacency} and Eq.~\eqref{eq:interpolation-adjacency}, we will need to slightly modify the proof of Lemma \ref{lem:pruned-interpolation-cone}.
    For the sake of notation simplicity, assume that $\tau$ is the trivial permutation, since the proof is exactly the same otherwise by replacing $\sL\mapsto \tau \sL$.
    Indeed, note that $g_i$ are well-defined maps and that
    \begin{align}
        \partial^{T'\oplus S} g_1 \| \sigma s,T'S\ket &= | \sigma s,T'\ket +| \sigma \Omega' s,T'\ket \nonumber\\
        &+ | \sigma^{\star} \sigma s,S\ket +| \sigma^{\star} \tilde{\Omega} \sigma s,S\ket \nonumber\\
        &+ \left( |\sigma^{\star} \Omega \sigma s,S\ket +|\sigma^{\star} \tilde{\Omega} \Omega \sigma s,S\ket \right) \nonumber\\
        &\quad\quad \times1\{\|\Omega \sigma s\| > \|\Omega' s\|\}.
    \end{align}
    Moreover,
    \begin{align}
        g_0 \partial^{T'\oplus S} \| \sigma s, T'S\ket &= |\sigma s,T'\ket +|\sigma \Omega' s, T'\ket \nonumber\\
        &+|\sigma^{\star} \sigma s,S\ket + |\sigma^{\star} \sigma \Omega' s,S\ket.
    \end{align}
    It's clear that if $\|\Omega \sigma s\| \le \|\Omega' s\|$, then we can choose $\tilde{\Omega} \sigma s = \sigma \Omega' s$ and thus the two equations are equal.
    Hence, to show that $g$ is a chain map, we shall focus on the case where $\|\Omega \sigma s\| > \|\Omega' s\|$.
    
    We can choose $\tilde{\Omega}\sigma s=\Omega \sigma s$ and thus it's sufficient to prove that $\sigma^{\star} \tilde{\Omega} \Omega \sigma s = \sigma^{\star} \sigma \Omega' s$.
    Indeed, by Proposition \ref{prop:crucial-relation*}, we see that $\Omega' s = \omega \sigma \Omega \sigma s$ and that $\|\Omega \sigma s\|\in I$.
    Hence, $\sigma^{\star} \sigma \Omega' s =\omega \Omega \sigma s$. 
    Conversely, to determine $\tilde{\Omega} \Omega \sigma s$, consider $\Omega \Omega \sigma s$ and $\sigma \Omega'\sigma \Omega \sigma s$.
    Since $\|\Omega \sigma s\| \in I$, by Lemma \ref{lem:SWAP-and-prune*}, we see that $\omega \sigma \Omega \sigma s$ is $\sigma \sL$-branching and thus $\Omega' \sigma \Omega \sigma s = \omega \sigma \Omega \sigma s$ must be the (not necessarily strictly) longer of the two ($\Omega \Omega \sigma s$ vs. $\sigma \Omega' \sigma \Omega \sigma s$) so that $\tilde{\Omega} \Omega \sigma s=\sigma \omega \sigma \Omega \sigma s = \omega \Omega \sigma s$.
    Hence, $\sigma^{\star} \tilde{\Omega} \Omega \sigma s = \sigma^{\star} \sigma \Omega' s$ and that $g_i$ is a chain map.

    The rest of the argument follow similarly as in Lemma \ref{lem:pruned-interpolation-cone}, though the weight in Eq.~\eqref{eq:pruned-cone-prop2} is slightly different due to the extra term in the chain map $g$.
\end{proof}

\begin{theorem}[Pruned* Parsimonious Cone]
    \label{thm:pruned*-parsimonious-cone}
    Let $G=E\to V$ denote the complex of a bipartite graph with vertex partitions $\sV=\sV_0\sqcup \sV_1$ where $|\sV_i| \le 2^{h_i}$, and each vertex of $G$ has degree $\ge 2$ with max degree $\Delta(G)$.
    Then there exists cell complex $C$ such that a $3h$-deformation of $G$ with congestion $\Delta(G)$ is a subgraph of $C$ with 
    \begin{equation}
        \dim C_0 = O(|\sE| \log(|\sV||\sE|))
    \end{equation}
    and 
    \begin{align}
        H_2(C)=H_1(C)=0, &\quad H_0(C) \cong \dF_2\\
        C_2 \xrightleftharpoons[4]{5} C_1 \xrightleftharpoons[9]{2} C_0.
    \end{align}
    In particular, if $\Delta(G)=O(1)$, then 
    \begin{equation}
        \dim C_0 = O(|\sV| \log|\sV|).
    \end{equation}
\end{theorem}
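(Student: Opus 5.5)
The plan is to repeat the proof of Theorem~\ref{thm:pruned-parsimonious-cone} essentially verbatim, with each $\sL$-pruned object replaced by its $\sL$-pruned* counterpart and Lemma~\ref{lem:pruned*-interpolation-cone} used in place of Lemma~\ref{lem:pruned-interpolation-cone}.

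\textbf{Construction.} First, label $\sV_i$ injectively by $h_i$-bit strings $\hat{s}$. Each edge $e=v_0v_1$ then gets the leaf $\hat{s}(e)=\hat{s}(v_0)\hat{s}(v_1)$, and we set $\sL=\hat{s}(\sE)$. Next, take the same non-interacting SWAPs $\bm{\sigma}^1,\dots,\bm{\sigma}^h$ as in Theorem~\ref{thm:parsimonious-cone}, with $(\bm{\tau}^h)_h=\tau_h$. Let $T^0,\dots,T^h$ be the $\sL$-pruned* trees with labels $\bm{\tau}^i$, and let $S^i$ be the pruned* interpolations. Glue consecutive pieces using the chain maps of Lemma~\ref{lem:pruned*-interpolation-cone}, forming $g:\bigoplus_i (T^iS^i\oplus S^iT^{i-1})\to\bigoplus_i T^i\oplus\bigoplus_i S^i$, and put $C=\cone(g)$.

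\textbf{Homology and weights.} These follow exactly as in Remark~\ref{rem:pruned-interpolation-seq}. Every row complex is a disjoint union of trees, so $H_1=0$ and $H_0$ is one copy of $\dF_2$ per tree. On $H_0$ the map $[g_0]$ is the differential of a path-shaped repetition code, which is injective with one-dimensional cokernel. The Snake Lemma~\ref{lem:snake} then gives $H_2(C)=H_1(C)=0$ and $H_0(C)\cong\dF_2$. The column weights $5,4,2,9$ are local, so they carry over from Lemma~\ref{lem:pruned*-interpolation-cone}: each edge or vertex lies in at most two consecutive interpolation blocks.

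\textbf{Size.} Each pruned* tree and each pruned* interpolation has $O(|\sL|)=O(|\sE|)$ cells, by the Pruned* Size remarks. There are $O(h)$ of them, so $\dim C_0=O(h|\sE|)=O(|\sE|\log(|\sV||\sE|))$. When $\Delta(G)=O(1)$ we have $|\sE|=O(|\sV|)$, which gives $O(|\sV|\log|\sV|)$.

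\textbf{Embedding.} This is the step needing care. The leaves $s\in\sL$ are always branching, so they survive in every $T^i$. Reading Remark~\ref{rem:pruned-interpolation-seq} with $\tilde\Omega$, leaf $|s,T^{i-1}\ket$ is joined to $|s,T^i\ket$ through $|s,S^i\ket$ by single edges; leaves are never at a SWAP height since $h\notin I$. This gives $\gamma^{\mathrm{inter}}(e)$ of length $2h$, vertex-disjoint across distinct $e$.

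We then need $v_0\mapsto|\hat{s}(v_0),T^0\ket$ and $v_1\mapsto|\bar\tau^h\hat{s}(v_1),T^h\ket$ to be genuine vertices, and this is exactly where the degree $\ge2$ hypothesis enters. Since $v_0$ has two distinct neighbors $v_1\neq v_1'$, the strings $\hat{s}(v_1)$ and $\hat{s}(v_1')$ first differ at some position. The longest common prefix $\hat{s}(v_0)p$ is therefore branching, but $\hat{s}(v_0)$ itself need not be.

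I would handle this by mapping $v_0$ to the branching vertex $\Omega$-closest to $\hat{s}(v_0)$, namely the deepest branching ancestor of all its leaves. This is the prefix $\hat{s}(v_0)p$ of length $\ge h_0$, and it is uniquely determined by $v_0$. Each tree path $\gamma^0(e)$ from that vertex down to the leaf $\hat{s}(e)$ then uses at most $h_1$ contracted edges. Symmetrically, $\gamma^h(e)$ uses at most $h_0$ edges in $T^h$. If this mapping is not clean, a fallback is to route $\gamma^0(e)$ upward to the root side instead, still through at most $h$ edges.

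Each path $\gamma(e)=\gamma^0(e)\gamma^{\mathrm{inter}}(e)\gamma^h(e)$ then has length at most $3h$; pad with backtracking, or note that a deformation tolerates shorter paths, to make it exactly $3h$. Overlaps occur only within $\gamma^0$ and $\gamma^h$ among edges sharing an endpoint, so the congestion is at most $\Delta(G)$.

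I expect this last point, making sure the vertex images are well-defined branching vertices and that the contracted paths stay within length $3h$, to be the main obstacle.
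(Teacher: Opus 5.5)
Your proposal is correct and follows the paper's route. Like the paper, you rerun Theorem~\ref{thm:pruned-parsimonious-cone} with pruned* trees and interpolations and the same SWAP schedule. You use Lemma~\ref{lem:pruned*-interpolation-cone} for the gluing maps and local weights, the Snake Lemma with the path-shaped $[g_0]$ for homology, and $O(h)$ pieces of size $O(|E|)$ for the count.

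The one place you differ is the vertex embedding, and there your version is more careful than the paper's. The paper's sketch asserts that degree $\ge 2$ makes the labelled node $|\hat{s}(v_0),T^0\rangle$ itself branching. That is false in general. Take $G=K_{2,2}$ with $h_0=1$, $h_1=2$, labels $0,1$ on one side and $00,01$ on the other. The only leaves below $0$ are $000$ and $001$, so $0$ has the single child $00$ and is contracted away.

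Choosing labels cleverly does not avoid this in the setting where the theorem is applied (Corollary~\ref{cor:pruned*-parsimonious-cone}). A midpoint of the $2$-subdivision has degree exactly $2$. Its node is branching only if the labels of its two endpoints differ in their first bit, and that fails around any odd cycle.

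Your choice is to send $v_0$ to the deepest common ancestor of the leaves $\hat{s}(v_0)\hat{s}(v_1)$, $v_1\sim v_0$, and symmetrically in $T^h$. This is the right repair:
\begin{itemize}
\item that node is branching;
\item it lies in the subtree of $\hat{s}(v_0)$ at depth at least $h_0$, so images of distinct vertices are distinct;
\item $\gamma^0(e)$ has at most $h_1$ edges;
\item the tree edges $\gamma^0(e)$ uses are private to $v_0$, so the congestion is at most $\Delta(G)$.
\end{itemize}

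You should drop your two hedges. The ``fallback'' of routing toward the root would send different vertices to a common ancestor and destroy the vertex embedding. Padding by backtracking can place the same edge twice in a single $\gamma(e)$. Congestion then no longer bounds the weight of $g_1^\top$ in the Cellulation Lemma. Instead, simply allow paths of length at most $3h$. The Cellulation Lemma works edge by edge with $\|\gamma e\|$, and the paper's own contracted tree paths are already shorter than $h_1$ in general.
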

\begin{proof}
    The proof is similar to Theorem \ref{thm:pruned-parsimonious-cone}, except that the overhead is smaller and the weights in Eq.~\eqref{eq:pruned*-cone-prop2} are slightly different due to the extra term in the chain map $g$.
    Also note the additional requirement that each vertex has degree $\ge 2$. This is so that the vertices correspond to branching points and thus won't be pruned*.
\end{proof}

\begin{corollary}
    \label{cor:pruned*-parsimonious-cone}
    Let $G$ denote a graph complex with max degree $\Delta(G)=O(1)$.
    Then there exists cell complex $C$ such that a $6h$-deformation of $G$ with congestion $\Delta(G)$ is a subgraph of $C$ with 
    \begin{equation}
        \dim C_0 = O(|\sV| \log|\sV|)
    \end{equation}
    And 
    \begin{align}
        H_2(C)=H_1(C)=0, &\quad H_0(C) \cong \dF_2\\
        C_2 \xrightleftharpoons[4]{5} C_1 \xrightleftharpoons[9]{2} C_0
    \end{align}
\end{corollary}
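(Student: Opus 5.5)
The plan is to reduce Corollary~\ref{cor:pruned*-parsimonious-cone} to Theorem~\ref{thm:pruned*-parsimonious-cone}, in the same way Corollary~\ref{cor:parsimonious-cone} reduces to Theorem~\ref{thm:parsimonious-cone}: first make the graph bipartite by subdivision, then make every vertex have degree at least $2$.

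\emph{Bipartiteness.} Replace $G$ by its $2$-subdivision $G^{(2)}$, inserting a midpoint vertex on each edge. This graph is bipartite with parts $\sV_0=\sV$ and $\sV_1=\sE$. Since $\Delta(G)=O(1)$, we have $|\sE|\le \Delta(G)|\sV|/2=O(|\sV|)$. Take $h_0=\lceil\log_2|\sV|\rceil$ and $h_1=\lceil \log_2|\sE|\rceil$, so $h=h_0+h_1=O(\log|\sV|)$.

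\emph{Minimum degree.} Every midpoint vertex in $\sE$ already has degree exactly $2$, but vertices of $\sV$ may have degree $0$ or $1$. I would fix this by first enlarging $G$ to a graph $G'$ with minimum degree $\ge 2$ and maximum degree $O(1)$ that contains $G$ as a subgraph.
\begin{enumerate}
    \item Order the vertices of degree $\le 1$ as $u_1,\dots,u_k$ and add the edges $u_1u_2,\dots,u_{k-1}u_k,u_ku_1$, closing the chain into a cycle.
    \item This raises each such vertex's degree by $2$ (or by $1$ for a repeated edge when $k=2$) and leaves all other degrees unchanged, so $\Delta(G')\le \Delta(G)+2$.
    \item The degenerate cases $k\le 2$ are handled by attaching a constant-size gadget, for example a triangle on fresh vertices joined to the $u_i$; this needs care so that no vertex still has degree $1$.
\end{enumerate}
Then $G'^{(2)}$ is bipartite with every vertex of degree $\ge 2$ and bounded maximum degree. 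Its size is still $|\sV'|,|\sE'|=O(|\sV|)$.

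\emph{Applying the theorem.} Theorem~\ref{thm:pruned*-parsimonious-cone} applied to $G'^{(2)}$ gives a cell complex $C$ with $H_2=H_1=0$, $H_0\cong\dF_2$, the weight diagram $C_2 \xrightleftharpoons[4]{5} C_1 \xrightleftharpoons[9]{2} C_0$, and $\dim C_0=O(|\sE'|\log(|\sV'||\sE'|))=O(|\sV|\log|\sV|)$. The same complex $C$ also contains a $3h$-deformation $\gamma$ of $G'^{(2)}$ with congestion $\Delta(G'^{(2)})\le\Delta(G')$.

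To turn this into a deformation of $G$, send each edge $e=uv$ of $G\subseteq G'$ to the concatenation $\gamma(u\,m_e)\,\gamma(m_e\,v)$, where $m_e$ is the midpoint of $e$. This is a path of length $6h$ joining $u$ to $v$, so it is a $6h$-deformation of $G$. Each path $\gamma(\cdot)$ of $G'^{(2)}$ is used by exactly one edge of $G$, so the congestion stays at most $\Delta(G')=O(1)$.

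\emph{Main obstacle.} The congestion claim in the statement is literally ``$\Delta(G)$'', whereas the direct argument gives $\Delta(G')\le\Delta(G)+2$. The fix I would try is to refine the congestion bound in the proof of Theorem~\ref{thm:pruned*-parsimonious-cone}. Overlaps there arise only inside $T^0$ and $T^h$, along the tree paths $\gamma^0,\gamma^h$ descending from a vertex's label to its incident edges' leaves. An edge of $T^0$ below the label of $v_0$ is shared only by paths for edges incident to $v_0$. Counting only the edges of the original $G$ therefore bounds the congestion by $\deg_G(v_0)\le\Delta(G)$, because the paths of the auxiliary edges are simply not included in $\gamma G$. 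If that refinement fails, I would settle for $O(1)$ congestion, which suffices for all later applications.
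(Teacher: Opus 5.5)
Your proposal follows the paper's proof: use the $2$-subdivision trick of Corollary~\ref{cor:parsimonious-cone} to make the graph bipartite, add edges (your cycle or constant-size gadget) so that every vertex has degree $\ge 2$ while the maximum degree stays bounded, apply Theorem~\ref{thm:pruned*-parsimonious-cone}, and concatenate the two half-paths through each midpoint to obtain a $6h$-deformation. Your refinement is also correct: keeping only the paths of the original edges of $G$ brings the congestion down to $\Delta(G)$ rather than $\Delta(G')$, a point the paper's one-line proof leaves implicit.
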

\begin{proof}
    Similar to Corollary \ref{cor:pruned-parsimonious-cone}, add edges while keeping the degree bounded so that the condition in Theorem \ref{thm:pruned*-parsimonious-cone} is satisfied. The statement then follows.    
\end{proof}

\subsection{Direct Embedding}
\label{sec:direct-embedding-instead-of-subdivision}
\begin{figure}[ht]
\centering
\subfloat[\label{fig:cellulation-intuition}]{%
    \centering
    \includegraphics[width=0.4\columnwidth]{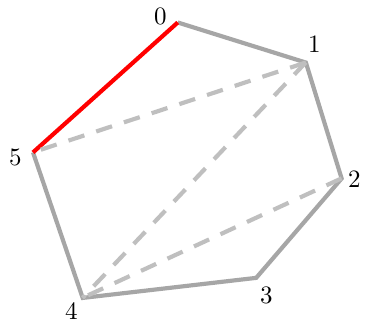}
}
\subfloat[\label{fig:cellulation-adjacency}]{%
    \centering
    \includegraphics[width=0.4\columnwidth]{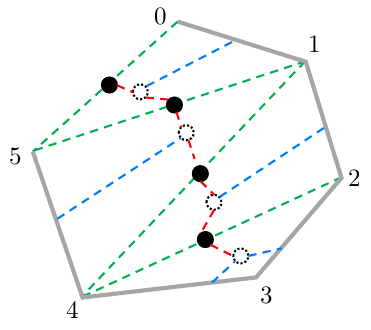}
}
\caption{Cellulation. (a) The grey lines indicate the path $\gamma e$, while the red line denotes $e$. The dashed lines indicate the added edges, which cellulate the simple cycle. (b) denote the adjacency relations of the dangling repetition code $R^{\multimap}$ and the original subdivision $\gamma e$ as depicted by the diagram in Eq.~\eqref{eq:cellulation-diagram}.
}
\label{fig:cellulation}
\end{figure}

As stated in Theorems~\ref{thm:parsimonious-cone}, \ref{thm:pruned-parsimonious-cone}, \ref{thm:pruned*-parsimonious-cone} or their Corollaries, given a graph complex $G$, there exists a parsimonious cone which contains a deformation of $G$.
In this section, we modify $C\mapsto C'$ so that $G$ is a subgraph of $C'$ instead.
To do this, let's start simple and consider the deformation $\gamma (e)$ of a single edge $e$.
As depicted in Fig. \ref{fig:cellulation-intuition}, insert the original edge $e$ to the endpoint of $\gamma e$ so that $\gamma e +e$ forms a simple cycle, e.g., $\gamma e$ consists of vertices and edges involved in the sequence $(0,1,...,n)$ while $e$ corresponds to $(n,0)$.
Cellulate the cycle via additional edges and faces.
The general process is formalized by the following lemma


\begin{lemma}[Cellulation]
    Let $G:E\to V$ denote a graph complex of max degree $\Delta(G)$ and $C$ denote a cell complex with max degree $\Delta(C)$ such that a deformation $\gamma G$ with congestion $\Delta(G)$ is a subgraph of $C$. 
    Then there exists cell complex $C'$ such that $G$ is a subgraph of $C'$ and that
    \begin{equation}
        H_i(C') \cong H_i(C),\forall i\quad \dim C_0 =\dim C_0'.
    \end{equation}
    Moreover, if $C$ has max weights denoted by
    \begin{equation}
        C_2 \xrightleftharpoons[\qubit]{\weight} C_1 \xrightleftharpoons[\Delta(C)]{2} C_0,
    \end{equation}
    then $C'$ has max weight denoted by
    \begin{equation}
        C_2' \xrightleftharpoons[\qubit+\Delta(G)]{\max(\weight,3)} C_1' \xrightleftharpoons[\Delta(C)+\Delta(G)]{2} C_0'.
    \end{equation}
\end{lemma}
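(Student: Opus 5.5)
The plan is to build $C'$ from $C$ edge by edge of $G$. For each $e=v_0v_1\in\sE$ we add the edge $e$ itself together with a triangulation of the cycle $\gamma(e)+e$ that uses only chords between existing vertices. Since no vertices are ever added, $\dim C_0'=\dim C_0$ holds automatically. Write $\gamma(e)=(x_0=v_0,x_1,\dots,x_L=v_1)$, and call the new edges and faces attached to $e$ the \emph{gadget} of $e$. Parallel edges are allowed, since a graph complex only needs $|\partial e|=2$; this handles $L\le 2$ and the case where a chord coincides with an existing edge. I will use a zigzag triangulation of the $(L+1)$-gon $(x_0,\dots,x_L)$, chosen so that each $x_j$ receives only $O(1)$ chords, instead of a fan, which would make the degree of $x_0$ grow like $L$. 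In the zigzag, the chords run alternately $x_i x_{L-i}$ and $x_i x_{L-i-1}$ from the outside in. The edge $e=x_0x_L$ is the outermost chord. Each face is a triangle made of two chords, or one chord and one edge of $\gamma(e)$, so every new face has weight $3$. Faces of $C$ are untouched, which gives face weight $\max(\weight,3)$.

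The homology claim will follow from an elementary expansion argument. I order the gadget of $e$ from the outside in: first the new edge $e$, then each triangle paired with the chord it introduces. Each added triangle contains exactly one edge not present before, namely the new chord, and that edge appears in no earlier face. Adding an edge together with such a face is an elementary expansion: the new $1$-cell and $2$-cell cancel in the mapping-cone/Snake Lemma~\ref{lem:snake} sequence, just as in the proof of Lemma~\ref{lem:interpolation-cone}. The last triangle contains two old path edges and the last chord; it is handled the same way, because that chord is its unique new edge. Iterating over all $e\in\sE$ gives $H_i(C')\cong H_i(C)$ for all $i$. The inclusion $G\hookrightarrow C'$ is a chain map because each edge $e$ has been added with its original endpoints $v_0,v_1$, which are vertices of $\gamma G\subseteq C$.

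The step I expect to be the main obstacle is the weight accounting, namely the increases $\Delta(G)$ in vertex degree and $\Delta(G)$ in faces-per-edge. Old edges of $\gamma(e)$ lie in exactly one new triangle per path through them, so congestion $\Delta(G)$ gives at most $\qubit+\Delta(G)$ faces on each old edge. Each new chord lies in at most $2\le\max(\qubit+\Delta(G),2)$ faces. The vertex degree is the delicate part. An interior path vertex $x_j$ receives a bounded number of chords per path through it, and the number of paths through a vertex is bounded by the congestion times the degree. Getting exactly the clean constant $+\Delta(G)$ may require a sharper choice of chords. What I would try first is to attach the chords of $\gamma(e)$ so that each vertex receives at most one new edge per path edge of $\gamma(e)$ incident to it, for instance by a ``ladder'' in which each chord is charged to a path edge. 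If that fails, I would accept an $O(\Delta(C)\Delta(G))$ increase, which still suffices for the LDPC application.
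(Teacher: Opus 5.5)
\textbf{Homology step: same construction, wrong ordering.} Your construction is the paper's. Its map $g_0$ sends $|i\rangle$ to the edge $x_{\lfloor i/2\rfloor}x_{L-\lceil i/2\rceil+1}$, which is exactly your zigzag, with $|1\rangle$ being $e$ itself. The paper then writes $C'=\cone(g)$ over the dangling repetition code $\partial^{\multimap}\|i\rangle=|i\rangle+|i+1\rangle$. That code is exact because the matrix is unitriangular, so the Snake Lemma gives $H_i(C')\cong H_i(C)$. Your elementary-expansion argument is the same fact, but the order you give does not work.

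Call the chords from the outside $c_1=e,c_2,\dots,c_{L-1}$ and the triangles $t_1,\dots,t_{L-1}$. Then $t_k$ contains $c_k$ and $c_{k+1}$ for $k<L-1$, and $t_{L-1}$ contains only $c_{L-1}$. Going outside-in with $e$ first:
\begin{enumerate}
\item the first step adds an edge with no face, which creates the cycle $\gamma(e)+e$;
\item each $t_k$ then brings in $c_{k+1}$;
\item $t_{L-1}$ brings in nothing, because $c_{L-1}$ already came with $t_{L-2}$.
\end{enumerate}
So your claim that the last chord is the unique new edge of the last triangle is false, and neither end of the sequence is an expansion. Run it inside-out instead: add $(t_{L-1},c_{L-1})$, then $(t_{L-2},c_{L-2})$, and so on down to $(t_1,e)$. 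Each chord is then a free edge of its triangle when it is added, which is precisely the unitriangularity of $\partial^{\multimap}$.

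\textbf{Weights: the stated degree bound does not hold.} Your face counts are right, including your observation that the faces-per-edge bound must read $\max(\qubit+\Delta(G),2)$; the paper's own diagram contains this $2$. Your difficulty with vertex degree $\Delta(C)+\Delta(G)$ is not a missing idea. The paper's diagram simply asserts that each vertex meets at most $\Delta(G)$ new edges. But in the zigzag each interior vertex $x_j$ of a path receives two chords, $x_jx_{L-j+1}$ and $x_jx_{L-j}$. Moreover, edge congestion does not bound how many paths pass through a vertex.

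In fact no construction with $C\subseteq C'$, no new vertices, and faces of weight at most $\max(\weight,3)$ achieves the stated bound in general. Let $C$ be the path $x_0\cdots x_5$ with no faces, and let $G$ be the single edge $x_0x_5$. Then $\Delta(C)+\Delta(G)=3$, and the zigzag gives $\deg x_1=4$. Suppose some construction kept every degree at most $3$. Then the new edges other than $e$ would form a matching on the hexagon $\gamma(e)+e$. The hexagon has no cycles of length $2$ or $3$, so every face of weight at most $3$ contains exactly one matching edge. Also, all faces through a given matching edge have the same boundary. Hence any $2$-chain whose boundary avoids the matching has zero boundary, so the hexagon cannot be filled.

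\textbf{What the zigzag actually gives.} Let $\rho_V$ be the maximum number of paths through a vertex. The zigzag gives degree at most $\Delta(C)+2\rho_V$. In general $\rho_V\le\Delta(C)\Delta(G)$, which is your fallback. In the setting of Theorem~\ref{thm:pruned*-parsimonious-cone}, the paths through a vertex of $T^0$ (resp.\ $T^h$) all come from edges of $G$ at a single vertex, and the interpolating segments are vertex-disjoint. So there $\rho_V\le\Delta(G)$, which gives $\Delta(C)+2\Delta(G)$. Either bound keeps the construction LDPC, so keep your fallback. No ``ladder'' can reach $+\Delta(G)$ in general.
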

\begin{proof}
    Fix an edge $e\in \sE$ and let $\|\gamma e\|$ be length of the deformed edge.
    Let $R^{\multimap}=R^{\multimap}(e)$ denote the dangling repetition code on $\|\gamma e\|-1$ bits, i.e., $R^{\multimap}_1 \to R_{0}^{\multimap}$ has 1-cells $\|i\ket=\|i;e\ket$ and 0-cells $|i\ket=|i;e\ket$ for $i=1,...,\|\gamma e\|-1$ , with adjacency relation
    \begin{equation}
        \partial^{\multimap} \|i\ket = |i\ket+|i+1\ket
    \end{equation}
    where $|i\ket$ is implicitly $=0$ if $|i\ket$ is not well-defined.
    Note that $R^{\multimap}$ is not a graph complex and that $H_1(R^{\multimap})=H_0(R^{\multimap}) =0$.

    Without loss of generality, label the vertices of path $\gamma e$ via $0,...,\|\gamma e\|$ and consider the linear map $g_i =g_i(e):R_i^{\multimap} \to C_i$ shown in the following diagram and, correspondingly, Fig. \ref{fig:cellulation-adjacency}.
    \begin{equation}
    \label{eq:cellulation-diagram}
    \begin{tikzpicture}[baseline]
    \matrix(a)[matrix of math nodes, nodes in empty cells, nodes={minimum size=25pt},
    row sep=1.5em, column sep=2em,
    text height=1.25ex, text depth=0.25ex]
    {& R_1^{\multimap}  & R_0^{\multimap} \\
     C_2  & C_1 & C_0 \\};
    \path[->,font=\scriptsize]
    (a-1-2) edge node[above]{$\partial^{\multimap}$} (a-1-3)
    (a-2-1) edge node[above]{$\partial^{C}$} (a-2-2)
    (a-2-2) edge node[above]{$\partial^{C}$} (a-2-3);
    \path[->,font=\scriptsize]
    (a-1-2) edge node[right]{$g_1$} (a-2-2)
    (a-1-3) edge node[right]{$g_0$} (a-2-3);
    \end{tikzpicture}
    \end{equation}
    Specifically, $g_0$ is defined to map $|i\ket$ to vertices in $\gamma e$ labeled by $\lfloor i/2\rfloor$ and $n-\lceil i/2\rceil +1$
    and $g_1$ maps $\|i\ket$ to edge(s) in $\gamma e$ with endpoints
    \begin{itemize}
        \item $\ell,\ell+1$ if $i=2\ell+1\ne \|\gamma e\| -1$
        \item $n-\ell+1,n-\ell$ if $i=2\ell \ne \|\gamma e\| -1$ 
        \item $\lceil n/2\rceil-1, \lceil n/2\rceil$ and $\lceil n/2\rceil, \lceil n/2\rceil+1$ for $i=\|\gamma e\| -1$
    \end{itemize}
    It's straightforward to check that $g$ is a chain map. 
    Perform the described operation for all edges $e\in \sE$. Specifically, let 
    \begin{equation}
        R^{\multimap} \mapsto \bigoplus_{e} R^{\multimap}(e), \quad g_i \mapsto \sum_{e} g_i(e)
    \end{equation}
    Then it's clear that $g_i$ is a chain map since it's a sum of chain maps. Let $C'=\cone(g)$ and thus by the Snake Lemma, $H_i(C') \cong H_i(C)$ for all $i$.
    Identify $e\in \sE$ with $|0;e\ket\in R^{\multimap}$ and thus it's clear that $G$ is a subgraph of $C'$.
    
    Finally, note that the weight of the cone complex $C'$ is obtained via the following diagram, where the number denotes the maximum column weight of the corresponding map
    \begin{equation}
    \label{eq:weight-diagram-cellulation}
    \begin{tikzpicture}[baseline]
    \matrix(a)[matrix of math nodes, nodes in empty cells, nodes={minimum size=25pt},
    row sep=2em, column sep=2em,
    text height=1.25ex, text depth=0.25ex]
    {& R_1^{\multimap}  & R_0^{\multimap} \\
     C_2  & C_1 & C_0 \\};
    \path[-left to,font=\scriptsize,transform canvas={yshift=0.2ex}]
    (a-1-2) edge node[above]{$2$}  (a-1-3)
    (a-2-1) edge node[above]{$\weight$}  (a-2-2)
    (a-2-2) edge node[above]{$2$}  (a-2-3);
    \path[left to-,font=\scriptsize,transform canvas={yshift=-0.2ex}]
    (a-1-2) edge node[below]{$2$}  (a-1-3)
    (a-2-1) edge node[below]{$\qubit$}  (a-2-2)
    (a-2-2) edge node[below]{$\Delta(C)$}  (a-2-3);
    \path[-left to,font=\scriptsize,transform canvas={xshift=0.2ex}]
    (a-1-2) edge node[right]{$2$}  (a-2-2)
    (a-1-3) edge node[right]{$2$}  (a-2-3);
    \path[left to-,font=\scriptsize,transform canvas={xshift=-0.2ex}]
    (a-1-2) edge node[left]{$\Delta(G)$}  (a-2-2)
    (a-1-3) edge node[left]{$\Delta(G)$}  (a-2-3);
    \end{tikzpicture}
    \end{equation}
    In particular, note that the map $C_1 \mapsto R_1^{\multimap}$ has max weight $\Delta(G)$ since $\gamma G$ has congestion $\Delta(G)$.
\end{proof}

Combine the cellulation lemma with Theorem \ref{thm:pruned*-parsimonious-cone} or Corollary \ref{cor:pruned*-parsimonious-cone} to obtain the following.
\begin{theorem}[Cellulated Cone]
    \label{thm:cellulated-cone}
    Let $G=E\to V$ denote a graph complex with bounded degree $\Delta = \Delta(G)$.
    Then there exists cell complex $C$ such that $G$ is a subgraph of $C$ with 
    \begin{equation}
        \dim C_0 = O(|\sV| \log|\sV|),
    \end{equation}
    and 
    \begin{align}
        H_2(C)=H_1(C)=0, &\quad H_0(C) \cong \dF_2\\
        \label{eq:reduced-weights-ancilla}
        C_2 \xrightleftharpoons[4+\Delta]{5} C_1 \xrightleftharpoons[9+\Delta]{2} C_0.
    \end{align}
\end{theorem}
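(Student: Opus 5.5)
The plan is to chain Corollary~\ref{cor:pruned*-parsimonious-cone} with the Cellulation Lemma, tracking the homology, the vertex count and the weights at each stage.

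\emph{Step 1.} Apply Corollary~\ref{cor:pruned*-parsimonious-cone} to $G$. Its proof first enlarges $G$ to a bounded-degree graph $G'$ of size $O(|\sV|)$ by adding edges, so that every vertex has degree $\ge 2$, and then subdivides $G'$ to make it bipartite. This yields a cell complex $C^{(1)}$ containing a $6h$-deformation $\gamma G'$ of congestion $\Delta(G')$, with $\dim C^{(1)}_0 = O(|\sV|\log|\sV|)$, $H_2=H_1=0$, $H_0\cong\dF_2$, and weights $C_2 \xrightleftharpoons[4]{5} C_1 \xrightleftharpoons[9]{2} C_0$. Restricting $\gamma$ to the edges of $G\subseteq G'$ gives a deformation $\gamma G$ of congestion at most $\Delta(G)$. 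This needs a small check: the congestion arises only from the tree segments $\gamma^0(e),\gamma^h(e)$, and these are shared only among edges incident to a common vertex. The paths $\gamma(e)$ must also all have a common length; they do, since every one has length $6h$.

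\emph{Step 2.} Apply the Cellulation Lemma to $C^{(1)}$ and $\gamma G$. This produces $C$ with $G$ as a subgraph, $H_i(C)\cong H_i(C^{(1)})$ (so $H_2=H_1=0$ and $H_0\cong\dF_2$), and $\dim C_0=\dim C^{(1)}_0=O(|\sV|\log|\sV|)$. For the weights, substitute $\weight=5$, $\qubit=4$ and $\Delta(C^{(1)})=9$ into the lemma's formula. This gives $\max(5,3)=5$ for the face-to-edge weight, $4+\Delta$ for the edge-to-face weight, and $9+\Delta$ for the vertex degree, which is exactly Eq.~\eqref{eq:reduced-weights-ancilla}.

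\emph{Main obstacle.} The delicate point is the Cellulation Lemma's hypothesis on congestion, specifically that the congestion is exactly $\Delta(G)$ rather than $\Delta(G')$ or some constant multiple. After subdivision the intermediate graph has maximum degree $\max(\Delta,2)$. Each edge of a tree path $\gamma^0$ or $\gamma^h$ of the bipartite graph lies above leaves coming from the same vertex $v_0$, so it is shared by at most $\deg v_0$ deformed edges. Concatenating the two half-edges of a subdivided edge of $G$ does not increase the sharing. I therefore expect the congestion to be bounded by $\Delta(G')$.

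If $\Delta(G')$ exceeds $\Delta(G)$, either because padding was needed or because degree-$1$ vertices required extra edges, I will absorb the difference by noting that only the edges of $G$ itself are cellulated. The row weight of $C_1\to R_1^{\multimap}$ counts only those deformed paths, so it is bounded by $\Delta(G)$. Should this still fail in edge cases, the stated constants shift by $O(1)$, and the $O(|\sV|\log|\sV|)$ bound and the homology conclusions are unaffected.
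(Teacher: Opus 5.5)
Your proposal is correct and follows the paper's own argument: the paper proves this theorem by combining Corollary~\ref{cor:pruned*-parsimonious-cone} with the Cellulation Lemma, and substituting $\weight=5$, $\qubit=4$, $\Delta(C)=9$ gives exactly Eq.~\eqref{eq:reduced-weights-ancilla}, with the homology and $\dim C_0$ carried over unchanged. Your check that restricting to the edges of $G\subseteq G'$ keeps the congestion at most $\Delta(G)$ makes explicit a point the paper leaves implicit; you should drop the claim that all paths have common length $6h$, which is not literally true after the pruned* contraction and is not needed, since the Cellulation Lemma treats each path $\gamma e$ separately.
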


\begin{remark}[Minor Difference]
    \label{rem:minor-difference}
    If we compare Eq.~\eqref{eq:reduced-weights-ancilla} with the weights of the parsimonious cone claimed in the proof of Lemma 3.4 of Ref.~\cite{hsieh2025simplified} (end of page 9), we see that the weights are exactly the same except for an additional term accounting for the max degree $\Delta$ of the original graph.
    This is due to the possibility that neither the original graph nor its edge-subdivision is necessarily embedded in the parsimonious cone; instead, only a deformation of the original graph is embedded.
    For example, as depicted in Fig. \ref{fig:pruned*-cell-complex}, the pink and blue branches in the left-most binary tree overlap when traversing the path to vertex $0'$, and thus when cellulating the cone, an edge can participate in the cellulation of $\Delta$ many cycles.
    This contribution was overlooked in Ref.~\cite{hsieh2025simplified} and thus may result in slightly larger reduced weights than originally claimed.
\end{remark}

\section{Attaching Ancilla Cone}

\label{sec:attaching-ancilla-cone}

In this section we describe how to construct and attach an ancilla system based on a parsimonious cone to perform a fault-tolerant logical measurement. 

Let $D=D_2 \to D_1 \to D_0$ denote the data LDPC CSS code\footnote{We note that the measurement can be performed on non-CSS stabilizer codes as well, see Appendix~\ref{app:other_schemes}.} with (arbitrary) convention that $D_2,D_1,D_0$ denote the $Z$-type checks, qubits and $X$-type checks, respectively, and has weights denoted by
\begin{equation}
    D_2 \xrightleftharpoons[\qubit_Z]{\weight_Z} D_1 \xrightleftharpoons[\weight_X]{\qubit_X} D_0.
\end{equation}
Without loss of generality (and to keep notation consistent), consider measuring $X$-type logicals, which implies considering the cocomplex $D^\top =D_0 \to D_1 \to D_2$ with codifferential $\delta^D$.
Specifically, let $\ell\in D_1$ denote an ($X$-type) logical operator and define the measurement graph complex $G$ of $\ell$ as follows.
Let $\sV$ denote vertices which are in 1:1 correspondence with qubits (basis elements in $D_1$) which $\ell$ acts on.
Note that every $Z$-type check (basis element $z\in D_0$) which overlaps with $\ell$ must have an even overlap, i.e., $|\delta^D z\cap \ell|$ must be even.
Therefore, the overlap can be paired up arbitrarily to form edges of the graph complex.
This leads to a degree bounded in $\Delta(G)=\weight_Z \qubit_Z$.
Further add arbitrary edge while keeping the degree bounded by $\weight_Z\qubit_Z$ until the induced graph is expanding\footnote{The inclusion of an expander graph is sufficient to guarantee that the code distance of the deformed code is preserved. However, in practice, it is commonly unnecessary, e.g., \cite{cowtan2024ssip,cross2024improved,williamson2024low,swaroop2026universal,yoder2025tour}.}, i.e., the Cheeger constant $h(G) =\Omega(1)$ is bounded below.\footnote{This can be achieved by adding random edges, or using more sophisticated methods~\cite{lubotzky1988ramanujan,hoory2006expander,alon2008elementary}.}
In particular, the construction induces a chain map $f^\top$ from the graph cocomplex $G^\top$ to the data cocomplex $D^\top$, where $f_0^\top$ maps each vertex to the corresponding qubit in $\ell$, and $f_1^\top$ maps the edge $e\in \sE$ which correspond to a $Z$-check $z$ overlapping with $\ell$ to $z$; conversely, arbitrarily added edges are mapped to zero.
\begin{equation}
    \begin{tikzpicture}[baseline]
    \matrix(a)[matrix of math nodes, nodes in empty cells, nodes={minimum size=25pt},
    row sep=1.5em, column sep=2em,
    text height=1.25ex, text depth=0.25ex]
    {& V  & E \\
     D_0  & D_1 & D_2 \\};
    \path[->,font=\scriptsize]
    (a-1-2) edge node[above]{$\delta^{G}$} (a-1-3)
    (a-2-1) edge node[above]{$\delta^{D}$} (a-2-2)
    (a-2-2) edge node[above]{$\delta^{D}$} (a-2-3);
    \path[->,font=\scriptsize]
    (a-1-2) edge node[right]{$f_0^\top$} (a-2-2)
    (a-1-3) edge node[right]{$f_1^\top$} (a-2-3);
    \end{tikzpicture}
\end{equation}

By Theorem \ref{thm:cellulated-cone}, we can replace the graph (co)complex $G^\top$ with the cellulated cone (co)complex $C^\top$. 
Because $G$ is a subgraph of $C$, the inclusion map $\iota: G \hookrightarrow C$ induces an injective chain map, and thus a surjective map of cocomplexes $\pi: C^\top \rightarrow G^\top$ such that $\pi = \iota^\top$. As a commutative diagram,

\begin{equation}
    \begin{tikzpicture}[baseline]
    \matrix(a)[matrix of math nodes, nodes in empty cells, nodes={minimum size=25pt},
    row sep=1.5em, column sep=2em,
    text height=1.25ex, text depth=0.25ex]
    {C_0  & C_1 & C_2 \\
     V  & E & \\};
    \path[->,font=\scriptsize]
    (a-1-1) edge node[above]{$\delta^{C}$} (a-1-2)
    (a-1-2) edge node[above]{$\delta^{C}$} (a-1-3)
    (a-2-1) edge node[above]{$\delta^{G}$} (a-2-2);
    \path[->,font=\scriptsize]
    (a-1-1) edge node[right]{$\pi_0$} (a-2-1)
    (a-1-2) edge node[right]{$\pi_1$} (a-2-2);
    \end{tikzpicture}
    \label{eq:graph_projection}
\end{equation}

and so there is a map between cochain complexes $g^\top: C^\top \rightarrow D^\top$, where $g_i^\top = f_i^\top\pi_i$.
\begin{equation}
    \label{eq:logical-measurement-diagram}
    \begin{tikzpicture}[baseline]
    \matrix(a)[matrix of math nodes, nodes in empty cells, nodes={minimum size=25pt},
    row sep=1.5em, column sep=2em,
    text height=1.25ex, text depth=0.25ex]
    {& C_0  & C_1 & C_2 \\
     D_0  & D_1 & D_2 &\\};
    \path[->,font=\scriptsize]
    (a-1-2) edge node[above]{$\delta^{C}$} (a-1-3)
    (a-1-3) edge node[above]{$\delta^{C}$} (a-1-4)
    (a-2-1) edge node[above]{$\delta^{D}$} (a-2-2)
    (a-2-2) edge node[above]{$\delta^{D}$} (a-2-3);
    \path[->,font=\scriptsize]
    (a-1-2) edge node[right]{$g_0^\top$} (a-2-2)
    (a-1-3) edge node[right]{$g_1^\top$} (a-2-3);
    \end{tikzpicture}
\end{equation}

\begin{theorem}[Deformed Code]
    \label{thm:deformed-code}
    Let $\cone(g^\top) =(\cone g)^\top$ be that obtained in Eq.~\eqref{eq:logical-measurement-diagram}.
    Then $M=\cone(g)$ is LDPC with $H_1(M^\top) \cong H_1(D^\top)/[\ell]$ where $[\ell]\in H_1(D^\top)$ is the equivalence class induced by the logical $\ell$. 
    Moreover, if $h(G)$ is the Cheeger constant of $G$, then the ($X$-type) code distance $d(M^\top)$ of $M^\top$ is
    \begin{equation}
        d(M^\top) \ge \min(h(G),1) d(D^\top)
    \end{equation}
    While the ($Z$-type) code distance $d(M)$ of $M$ is
    \begin{equation}
        d(M) \ge d(D)
    \end{equation}
\end{theorem}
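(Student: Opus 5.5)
The plan is to treat $M=\cone(g)$ and its dual $M^\top=\cone(g^\top)$ explicitly as block complexes. Every claim will be reduced to one of four ingredients: the properties of $C$ from Theorem~\ref{thm:cellulated-cone} ($H_1(C)=0$, $H_0(C)\cong\dF_2$, bounded weights); the Snake Lemma~\ref{lem:snake}; the identity $\pi\,\delta^C=\delta^G\,\pi$ from Eq.~\eqref{eq:graph_projection}; and the Cheeger constant of $G$. Concretely, $M^\top$ has $X$-checks $C_0$ together with the data $X$-checks, qubits $C_1\oplus D_1$, and $Z$-checks $C_2\oplus D_2$. The codifferential is block lower-triangular, with $\delta^C$ and $\delta^D$ on the diagonal and $g^\top=f^\top\pi$ off the diagonal.

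\textbf{LDPC.} Every column and row weight of $\partial^M$ is bounded by a sum of three kinds of terms. The first is a weight of $C$, which is $O(1+\Delta)$ by Eq.~\eqref{eq:reduced-weights-ancilla}. The second is a weight of $D$, which is $O(1)$ by assumption. The third is a weight of $g^\top=f^\top\pi$. Here $\pi$ is a coordinate projection and so has weight $\le 1$. The map $f^\top$ sends a vertex to a single qubit and an edge to at most one $Z$-check, and each check or qubit is hit by at most $\weight_Z\qubit_Z$ edges. The weights of $g^\top$ are therefore bounded by $\Delta(G)=\weight_Z\qubit_Z$, which gives LDPC.

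\textbf{Homology.} I apply Lemma~\ref{lem:snake} in cohomological form. Since $C$ has $H_1(C)=0$ and $H_0(C)\cong\dF_2$, and dually $H^1(C)=0$ and $H^0(C)\cong\dF_2$, the relevant part of the long exact sequence reads
\begin{equation*}
H^0(C)\xrightarrow{[g^\top]}H_1(D^\top)\to H_1(M^\top)\to H^1(C)=0 .
\end{equation*}
The class $H^0(C)$ is generated by the all-ones vector on $C_0$. Its image under $\pi_0$ is the all-ones vector on $\sV$, and $f_0^\top$ sends that to $\ell$. Hence $\im[g^\top]=\{0,[\ell]\}$, which gives $H_1(M^\top)\cong H_1(D^\top)/[\ell]$.

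\textbf{$X$-distance.} This is a cleaning argument. Let $(u,x)\in C_1\oplus D_1$ be a nontrivial $X$-logical. The $C_2$-component of the cocycle condition gives $\delta^C u=0$. Since $H^1(C)=0$, we can write $u=\delta^C w$ for some $w\in C_0$. Adding the coboundary of the vertex checks $w$ removes $u$ and replaces $x$ by $x'=x+f_0^\top\pi_0 w$, which is a nontrivial cocycle of $D^\top$ outside $\operatorname{span}\{\ell\}+\im\delta^D$. Therefore $|x'|\ge d(D^\top)$. Let $S=\pi_0 w\subseteq\sV$; replacing $w$ by $w+\mathbf 1$ if needed (which only shifts $x'$ by $\ell$), assume $|S|\le|\sV|/2$. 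Because $\pi_1\delta^C=\delta^G\pi_0$, we get $|u|\ge|\pi_1 u|=|\delta^G S|\ge h(G)|S|$. Combining,
\begin{equation*}
|u|+|x|\ge h(G)|S|+|x|\ge\min(h(G),1)\bigl(|S|+|x|\bigr)\ge\min(h(G),1)\,|x'|\ge\min(h(G),1)\,d(D^\top).
\end{equation*}

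\textbf{$Z$-distance.} Let $(a,y)\in C_1\oplus D_1$ be a nontrivial $Z$-logical of $M$. The map $g$ sends $C_1$ only into $D_2$ and $C_0$ only into $D_1$, so no ancilla edge touches a data $X$-check. The data $X$-check component of the cycle condition therefore reads $\partial^D y=0$, and $y$ is a $Z$-logical candidate of $D$.

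I expect the main obstacle to be showing that $y$ is not a $D$-boundary. Suppose $y=\partial^D b$ for some $b\in D_2$. Subtracting the boundary of $b$ in $M$ leaves a pure-ancilla cycle $(a',0)$. The ancilla $X$-check rows then force $\partial^C a'=0$, and $H_1(C)=0$ gives $a'=\partial^C c$ for some $c\in C_2$. Because $g$ has no component on $C_2$, $\partial^M c=(\partial^C c,0)=(a',0)$, so $(a',0)$ is a boundary of $M$. This would make $(a,y)$ trivial, a contradiction. Hence $y$ is nontrivial in $H_1(D)$, and $|(a,y)|\ge|y|\ge d(D)$. The remaining risk is the index and sign bookkeeping of the cone, in particular checking that $g$ really has no $C_2$ component and no $C_1\to D_0$ component. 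I will verify this directly from $g^\top=f^\top\pi$ with $\pi_2=0$.
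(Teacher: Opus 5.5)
Your proof is correct and follows essentially the same route as the paper. Both use the Snake Lemma with $\im[g^\top]=\{0,[\ell]\}$ for the logicals and weight bookkeeping for LDPC; for the $X$-distance both use the projection bound $|\delta^C s|\ge|\delta^G \pi s|$, the Cheeger bound and the all-ones flip $w\mapsto w+\mathbf{1}$, but you carry out the cleaning step explicitly via $H^1(C)=0$ (and, for the $Z$-distance, via $H_1(C)=0$) where the paper simply invokes the Cleaning Lemma of Ref.~\cite{yuan2025unified}.
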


\begin{proof}
    The isomorphism of logicals follows from the Snake Lemma, or equivalently, the framework in Ref. \cite{yuan2025unified} via padding the diagram with zeros as follows and viewing the column code $H_0(C^\top) \to H_1(D^\top) \to 0$.
    \begin{equation}
    \begin{tikzpicture}[baseline]
    \matrix(a)[matrix of math nodes, nodes in empty cells, nodes={minimum size=25pt},
    row sep=1.25em, column sep=1.5em,
    text height=1.25ex, text depth=0.25ex]
    {&& C_0  & C_1 & C_2 \\
     & D_0  & D_1 & D_2 & \\
     0 & 0 & 0 &&\\};
    \path[->,font=\scriptsize]
    (a-1-3) edge (a-1-4)
    (a-1-4) edge (a-1-5)
    (a-2-2) edge (a-2-3)
    (a-2-3) edge (a-2-4)
    (a-3-1) edge (a-3-2)
    (a-3-2) edge (a-3-3);
    \path[->,font=\scriptsize]
    (a-1-3) edge node[right]{$g_0$} (a-2-3)
    (a-1-4) edge node[right]{$g_1$} (a-2-4)
    (a-2-2) edge (a-3-2)
    (a-2-3) edge (a-3-3);
    \end{tikzpicture}
    \end{equation}
    To obtain the weights, consider the following diagram, where the number denotes the maximum column weight of the corresponding map
    \begin{equation}
    \label{eq:weight-diagram-modifed-code}
    \begin{tikzpicture}[baseline]
    \matrix(a)[matrix of math nodes, nodes in empty cells, nodes={minimum size=25pt},
    row sep=2em, column sep=3em,
    text height=1.25ex, text depth=0.25ex]
    {& C_0  & C_1 & C_2 \\
     D_0  & D_1 & D_2 &\\};
    \path[-left to,font=\scriptsize,transform canvas={yshift=0.2ex}]
    (a-1-2) edge node[above]{$9+\weight_Z \qubit_Z$}  (a-1-3)
    (a-1-3) edge node[above]{$4+\weight_Z \qubit_Z$}  (a-1-4)
    (a-2-1) edge node[above]{$\weight_X$}  (a-2-2)
    (a-2-2) edge node[above]{$\qubit_Z$}  (a-2-3);
    \path[left to-,font=\scriptsize,transform canvas={yshift=-0.2ex}]
    (a-1-2) edge node[below]{$2$}  (a-1-3)
    (a-1-3) edge node[below]{$5$}  (a-1-4)
    (a-2-1) edge node[below]{$\qubit_X$}  (a-2-2)
    (a-2-2) edge node[below]{$\weight_Z$}  (a-2-3);
    \path[-left to,font=\scriptsize,transform canvas={xshift=0.2ex}]
    (a-1-2) edge node[right]{$1$}  (a-2-2)
    (a-1-3) edge node[right]{$1$}  (a-2-3);
    \path[left to-,font=\scriptsize,transform canvas={xshift=-0.2ex}]
    (a-1-2) edge node[left]{$1$}  (a-2-2)
    (a-1-3) edge node[left]{$1$}  (a-2-3);
    \end{tikzpicture}
    \end{equation}
    In particular, this implies that
    \begin{equation}
        M_2 \xrightleftharpoons[5+\weight_Z \qubit_Z]{\max(\weight_Z+1,5)} M_1 \xrightleftharpoons[\max(\weight_X,10+\weight_Z\qubit_Z)]{\qubit_X+1} M_0
    \end{equation}
    Hence, if $D$ is LDPC, then so is $M$.
    
    To show that the code distance $d(M^\top)$ is lower-bounded, by the Cleaning Lemma of Ref. \cite{yuan2025unified}, it's sufficient to show that for any $s\in C_0$, there exists $\sigma \in C_0$ such that $\delta^C s=\delta^C \sigma$ with
    \begin{equation}
        |\delta^C \sigma| \ge \min (h(G),1) |g \sigma|.
    \end{equation}
    Indeed, let $\pi$ denote the projection chain map (transpose of embedding) from $C \to G$, as in Eq.~\eqref{eq:graph_projection}. Then
    \begin{align}
        |\delta^C s| &\ge |\pi \delta^C s| \\
        &= |\delta^G \pi s|.
        \label{eq:projection_inequality}
    \end{align}
    By definition of the Cheeger constant, if we overload notation and use $\sV$ to denote the element of $V$ which is equal to the sum of all basis elements (similarly overload the notation $\sC_0$), then 
    \begin{equation}
        |\delta^G \pi s| \ge h(G) \min(|\pi s|,|\pi s+\sV|).
    \end{equation}
    Note that $\sV=\pi \sC_0$. 
    Since $\sC_0\in \ker \delta^C$, we see that we can set $\sigma=s$ or $s+\sC_0$ depending on whether $|\pi s|$ or $|\pi s+\sV|$ obtains the minimum, and thus
    \begin{equation}
        |\delta^C s|\ge h(G) |\pi \sigma| = h(G) |g \sigma|.
    \end{equation}
    Similarly, to lower bound the code distance $d(M)$, consider the following diagram by reversing arrows. It's then clear that the sufficient condition for the Cleaning Lemma is trivially satisfied with coefficient $=1$ and thus $d(M)\ge d(D)$.
    \begin{equation}
    \begin{tikzpicture}[baseline]
    \matrix(a)[matrix of math nodes, nodes in empty cells, nodes={minimum size=25pt},
    row sep=1.25em, column sep=1.5em,
    text height=1.25ex, text depth=0.25ex]
    {&& 0  & 0 & 0 \\
     & D_2  & D_1 & D_0 & \\
     C_2 & C_1 & C_0 &&\\};
    \path[->,font=\scriptsize]
    (a-1-3) edge (a-1-4)
    (a-1-4) edge (a-1-5)
    (a-2-2) edge (a-2-3)
    (a-2-3) edge (a-2-4)
    (a-3-1) edge (a-3-2)
    (a-3-2) edge (a-3-3);
    \path[->,font=\scriptsize]
    (a-2-2) edge node[right]{$g_1$} (a-3-2)
    (a-2-3) edge node[right]{$g_0$} (a-3-3)
    (a-1-3) edge (a-2-3)
    (a-1-4) edge (a-2-4);
    \end{tikzpicture}
    \end{equation}
    We conclude that the deformed code has all the desired properties.
\end{proof}

\begin{remark}
    In Eq.~\eqref{eq:projection_inequality} note that $|\delta^C s| \geq |\delta^G \pi s|$, but it is not generally the case that $|\delta^C s| = |\delta^G \pi s|$. Hence it is possible for the ancilla cone $C$ to have increased relative expansion~\cite{swaroop2026universal} when compared to the original graph $G$. As a consequence, in practical scenarios $G$ could be constructed with a lower Cheeger constant than 1, which is still sufficient to preserve the distance $d(M^\top)$ regardless of the logical structure of the original code $D$ so long as the relative expansion of $C$ is at least 1.
\end{remark}

\section{Discussion}

\label{sec:discuss}


In this work, we described the construction of an ancilla system for quantum code surgery that has asymptotic qubit overhead $O(\Weight \log \Weight)$ for an operator of weight $\Weight$. 
This overhead was achieved by replacing the decongestion and thickening procedure, introduced by Freedman and Hastings~\cite{freedman2021building}, with the parsimonious coning procedure, introduced in Ref.~\cite{berdnikov2022parsimonious} and used for quantum weight reduction in Ref.~\cite{hsieh2025simplified}. 
This ancilla system can be incorporated to improve the asymptotic overhead of a range of quantum code surgery procedures and architectures, as summarized in Table~\ref{tab:tabulated-improvement}.

Our results push the asymptotic performance of quantum code surgery procedures beyond the previous state-of-the-art. 
However, a number of open questions remain. 
Is it possible to prove that the asymptotic overhead of our quantum code surgery procedure is the lowest achievable for any general purpose quantum code surgery procedure based on an auxiliary graph? 
Is it possible to improve the overhead of quantum code surgery to $O(\Weight)$ by incorporating more of the structure of the underlying code? 
Is it possible to improve the overhead of parallel, or fast (in the sense of single-shot or constant-time), quantum code surgery by applying the quantum weight reduction techniques to many logical operators simultaneously? 
Alternatively, what are the fundamental lower bounds on the spacetime volume required to fault-tolerantly measure a set of commuting logical operators in a qLDPC code? 
These are well-motivated questions which we hope future works could explore. 

\section*{Acknowledgements}
AC, ZH, and DJW thank Ted Yoder for insightful discussions. 
DJW thanks Harry Putterman for useful discussions. 
ACY is employed by Iceberg Quantum, and was also supported by the Laboratory for Physical Sciences at CMTC in University of Maryland, College Park.
AC is employed by Xanadu, and would like to thank the team there for helpful discussions.
ZH acknowledges support from the MIT Department of Mathematics, the MIT-IBM Watson AI Lab, and the NSF Graduate Research Fellowship Program under Grant No. 2141064.
DJW is supported by the Australian Research Council Discovery Early Career Research Award (DE220100625). 
This work was initiated while ZH, TCL, and DJW were visiting the Simons Institute for the Theory of Computing. 

\bibliography{main.bbl}

\appendix
\section{Overhead Reduction for Various Surgery Schemes}\label{app:other_schemes}

In the main text we focused on measuring a single $\bar{Z}$ or $\bar{X}$ logical operator with weight $\Weight$ in a CSS code, and reduced the qubit overhead   from $\mathcal{O}(\Weight \log^3 \Weight)$ to $\mathcal{O}(\Weight \log \Weight)$.

Here, we expand on the claim that our method has a knock-on effect which reduces the asymptotic overheads of other logical measurement schemes: universal adapters~\cite{swaroop2026universal}, parallel logical measurement~\cite{cowtan2025parallel}, and fixed-connectivity architectures~\cite{he2025extractors}.

First we describe a mild generalization of the scheme: from measuring CSS-type operators in any CSS code to measuring arbitrary Pauli operators in any stabilizer code. This generalization follows along very similar lines as in Ref.~\cite{williamson2024low}.

\subsection{Measurement of Arbitrary Pauli Operators in Stabilizer Codes}

In the scheme in the main body, cf.~\ref{sec:measurement-graph}, the primary reason to introduce new data qubits to the ancilla system is to ensure that the new $X$ checks do not anticommute with the $Z$ checks in the pre-existing code. We must ensure that similar steps are taken when moving from the CSS case to the more general case.

\subsubsection{$X$-type operator in a stabilizer code}\label{sec:X_type_in_stab}

Initially, consider a stabilizer code $D$ that is non-CSS, but where the logical to be measured is still of the form $X(\ell)$ for some set of qubits $\ell \in \dF_2^n$. A single-qubit Pauli operator $X(i)$ anticommutes with checks that have support in the $Y$ or $Z$ basis on $i$. Given a check $c$, let $\mathcal{P}(c)$ denote its Pauli action on qubits, and let $\mathcal{P}(c) \restriction_i$ be the Pauli action of $c$ on qubit $i$.

We must consider each of the stabilizer generators $c$ such that $\mathcal{P}(c) \restriction_i \in \{Y, Z\}$ for any $i \in {\rm supp}(\ell)$, as each new $X$ check would, naively, anticommute with any such check $c$. Let $c^{Y, Z}$ be the support of $c$ on data qubits where it acts only with a $Y$ or $Z$. Note that in order for $X(\ell)$ to be a logical operator, we have
\begin{equation}
    |c^{Y, Z} \cap \ell| = 0 \mod 2, \quad \forall c.
    \label{eq:check_commute}
\end{equation} 

The approach is as follows. Introduce the measurement graph $\mathcal{G}$ with a set of vertices $V$ in bijection with elements in ${\rm supp}(\ell)$. The vertices $V$ will be $X$-checks in the ancilla system, connected 1-to-1 to the qubits in $\ell$. By Eq.~\ref{eq:check_commute}, we can arbitrarily pair up the elements in $c^{Y, Z} \cap \ell$. Introduce a new edge for each such pair $qq'$. The edges will be data qubits in the ancilla system. $c$ is deformed to have support in the $Z$-basis on each edge $qq'$, which is sufficient to ensure that the checks of the ancilla system commutes with the original checks. Additional edges can be added to ensure that the graph $\mathcal{G}$ has Cheeger constant $\geq 1$.

Next, take the parsimonious cone of the measurement graph. All new faces will become $Z$-checks. By exactness of the cell complex at degree $1$, there are no new logical qubits introduced within the ancilla system, and a counting of stabilizer generators and data qubits quickly shows that there are no other logical qubits introduced to the deformed code. As $\mathcal{G}$ is expanding, no existing logical operators can be cleaned to have weight lower than $d$ in the deformed code.

\subsubsection{Arbitrary Pauli operator in a stabilizer code}

To consider the most general case, it is convenient to introduce the notion of basis-changing by local Cliffords. Given an initial stabilizer code $D$, define application of a single-qubit Clifford operator $\mathfrak{C}$ to a qubit $i$ as a map between codes $D \rightarrow D'$  which changes the basis of each check $c$ incident to $i$ by $\mathfrak{C}$, i.e.
\[\mathcal{P}(c^{D'}) \restriction_i = \mathfrak{C}^\dagger \mathcal{P}(c^{D}) \restriction_i\mathfrak{C}.\]
The code $D'$ still has commuting checks, and has the same parameters as $D$, but the checks may act in different bases. This map between codes is a tool for the purposes of constructing the correct ancilla system, and is not intended to be performed as part of the protocol on a quantum computer.

Now let $\mathcal{P}(\ell)^{D}$ be an arbitrary logical Pauli operator acting on the qubits $\ell \in \dF_2^n$ in code $D$. By applying local Cliffords to each qubit in $\ell$, map to a code $D'$ so that 
\[\mathcal{P}(\ell)^{D'} = X(\ell)^{D'} = (\bigotimes_{i \in \ell}\mathfrak{C}_i^\dagger)\mathcal{P}(\ell)^{D}(\bigotimes_{i \in \ell}\mathfrak{C}_i),\]
i.e. each qubit in $\ell$ is now acted upon in the $X$ basis.

Construct the parsimonious surgery ancilla system as in App.~\ref{sec:X_type_in_stab}, and deform $D'$ accordingly to yield a code $E'$. Then apply the adjoint map between codes, $(\bigotimes_{i \in \ell}\mathfrak{C}_i) E' (\bigotimes_{i \in \ell}\mathfrak{C}_i^\dagger) = E$, such that the deformed code now matches the original code $D$ on the original checks and qubits. The connections between checks in the ancilla code and qubits in the original code are now toggled to be in the correct basis for parsimonious surgery. As the codes $E$ and $E'$ have the same parameters, there are no new logical qubits introduced during the measurement, and the code distance is still at least $d$.

\subsection{Universal adapters}

Universal adapters~\cite{swaroop2026universal} are a way to perform joint logical measurements across codeblocks, or within the same codeblock. Given $t$ Pauli logical representatives, each with weight at most $\Weight$, the scheme achieves joint measurement of the operators with qubit overhead $\mathcal{O}(t\Weight\log^3\Weight)$. When the representatives are disjoint on physical qubits, or overlap only sparsely, the scheme is guaranteed to preserve the LDPC property.

\begin{remark}
    Joint measurement can also be performed directly by using the gauging logical measurement scheme~\cite{williamson2024low}. Assuming the $t$ representatives are disjoint, the weight of the product of those representatives could be $\Weight' = t\Weight$ in the worst case. Therefore direct measurement using gauging logical measurement would have asymptotic overhead,
    \[\mathcal{O}(\Weight'\log^3\Weight') = \mathcal{O}(t\Weight\log^3t\Weight) \geq \mathcal{O}(t\Weight\log^3\Weight)\]
    and hence be more expensive than universal adapters.
\end{remark}

Universal adapters begin by constructing the measurement graphs for each of the $t$ logical representatives individually, see Sec.~\ref{sec:measurement-graph}, and ensuring they each have Cheeger constant at least $1$. The measurement graphs may not possess sparse cycle bases, so the Decongestion lemma~\cite{freedman2021building, hastings2021quantum} is applied to each individually, thickening the graph and cellulating cycles. Decongestion adds new vertices and edges to each graph, such that the total number of qubits used so far is $\mathcal{O}(t\Weight \log^3 \Weight)$, see Ref.~\cite[Thm.~28]{he2025extractors}.

Secondly, the \textit{repetition code adapter} is applied, attaching new vertices and edges to the measurement graphs in order to efficiently construct a larger, connected auxiliary graph across all logicals. For brevity we omit the finer details of the repetition code adapter, which can be found in~\cite[Sec.~III-IV]{swaroop2026universal}. The most salient facts are that:
\begin{enumerate}
    \item The repetition code adapter always exists, can be found efficiently, and does not incur any additional asymptotic overhead.
    \item The repetition code adapter is attached only to the measurement graphs, and not to the additional vertices or edges used for decongestion.
    \item Assuming the measurement graphs each have Cheeger constant $\geq 1$, the auxiliary graph has \textit{relative} expansion $\geq 1$, which is sufficient to preserve distance of the deformed code.
\end{enumerate}

As a consequence, the decongestion and thickening step can be replaced by taking the parsimonious cone, for each of the $t$ measurement graphs. Because an measurement graph $G$ is always a subgraph of the parsimonious cone $C$, the repetition code adapter is guaranteed to applicable to the ancilla systems constructed by parsimonious surgery, using facts 1 and 2 above. As the measurement graphs each have Cheeger constant $\geq 1$, the overall auxiliary graph still has relative expansion $\geq 1$, using fact 3.

Thus universal adapters with base graphs constructed by parsimonious surgery have space overhead $\mathcal{O}(t\Weight \log \Weight)$, lower than the prior best of $\mathcal{O}(t\Weight \log^3 \Weight)$.

\subsection{Parallel Logical Measurements}\label{app:parallel_measurement}

When logical representatives overlap substantially in support on physical qubits, fault-tolerantly measuring them simultaneously (but not jointly) becomes difficult, as the ancilla systems to measure each one can overlap substantially, increasing the density of the deformed code and, in the worst case, preventing the stabilizers of the deformed code from commuting. As high logical representative overlap is an intrinsic property of high rate and distance LDPC codes, these can be substantial problems for low spacetime quantum computation with LDPC codes.

To ameliorate these problems, schemes for parallel logical measurement have been devised~\cite{zhang2025time,cowtan2025parallel,zheng2025high}. Here we focus on Ref.~\cite{cowtan2025parallel}, a general scheme with the lowest asymptotic space overhead to date.

The scheme proceeds as follows. First, we require the following definition (cf.~\cite[Def.~2.10]{cowtan2025parallel}):

\begin{definition}
    Two logical representatives $\Gamma$ and $\Lambda$ are said to qubit-wise commute if, for every qubit $q$ such that $q\in {\rm supp}(\Gamma)$ and $q \in {\rm supp}(\Lambda)$,
    \[[\Gamma \restriction_q, \Lambda \restriction_q] = 0\]
    where $A \restriction_b$ is the restriction of the Pauli action of operator $A$ to qubit $b$.
\end{definition}

For measurement, an ancilla system is attached in an intermediate step such that new representatives are introduced for each Pauli logical $\Lambda_i$ to be measured, via a procedure called \textit{branching}~\cite{zhang2025time}. These new representatives $\Lambda_i'$ are stabilizer equivalent to the original $\Lambda_i$ representatives, and the representatives $\Lambda_i$ are said to have been \textit{branched}. The ancilla system commutes when the representatives to be branched qubit-wise commute on physical qubits.

Finally, for each $\Lambda_i'$, graphs are attached using the gauging logical measurement formalism~\cite{williamson2024low}. As with universal adapters, we skim over the finer details of the protocol, but present the most salient facts below:

\begin{enumerate}
    \item A logical basis always exists for any stabilizer code such that, for any set of logical representatives acting on different \textit{logical} qubits, the representatives qubit-wise commute~\cite[Sec.~3.4]{cowtan2025parallel} on \textit{physical} qubits.
    \item If the initial code is LDPC, the system after branching remains LDPC, and each representative $\Lambda_i'$ to be measured is disjoint on physical qubits. Each representative $\Lambda_i'$ has weight identical to that of $\Lambda_i$.
    \item If the initial code has distance $d$, the distance of the deformed code after branching has distance at least $d$, and the branching protocol has phenomenological fault-distance at least $d$.
    \item Given $t$ initial logical representatives with weight at most $\Weight$, branching uses $\mathcal{O}(t\Weight\log t)$ new qubits.
\end{enumerate}

Applying gauging logical measurement for each representative $\Lambda_i'$ after branching has a cost of $\mathcal{O}(t\Weight\log^3\Weight)$, where the $\log^3\Weight$ contribution is due to the Decongestion lemma~\cite{hastings2021quantum}, thickening and cellulation. Thus the total previous cost of the parallel logical measurement scheme was 
\[\mathcal{O}(t\Weight\log t + t\Weight\log^3\Weight) = \mathcal{O}(t\Weight(\log t + \log^3 \Weight)).\]
Using such a scheme, any set of logical operators acting on disjoint logical qubits in any stabilizer code can be measured in parallel, presuming that the logical basis is given by the specific qubit-wise commuting one~\cite{cowtan2025parallel}.

By applying parsimonious surgery instead of gauging logical measurement, the $\log^3\Weight$ contribution is reduced to $\log \Weight$. As a consequence, the overall asymptotic qubit cost of the protocol becomes
\[\mathcal{O}(t\Weight(\log t + \log \Weight)).\]

\subsection{Extractor Architectures}

An extractor architecture~\cite{he2025extractors} is a general model that describes fault-tolerant universal quantum computers which implement Pauli-based computation~\cite{bravyi2016trading} on qLDPC codes. 
The model's broad applicability come from the fact that extractor architectures can be designed with arbitrary qLDPC codes, can be implemented on both fixed-connectivity and reconfigurable physical qubit platforms, and host a vast design space for application- and hardware-specific optimizations. 
Practical instantiations of extractor architectures~\cite{yoder2025tour,webster2026pinnacle} have been proposed and studied.

The key primitive used in extractor architectures is called an \textit{extractor system}~\cite{he2025extractors}.
Each codeblock in the architecture is augmented with a \textit{single-block extractor system}, which is a simply connected graph with a design similar to that of gauging logical measurement. For each stabilizer generator of the codeblock, a cycle is conferred to the graph, and the graph is designed to have Cheeger constant $\geq 1$ and to have a low-weight cycle basis. The graph is connected to the original codeblock via a \textit{port function}, such that any single logical Pauli operator on the codeblock can be measured in one logical timestep by modifying the connections from the codeblock to the measurement graph; this modification requires only deleting edges from the Tanner graph, and so can be accomplished with fixed connectivity.

Single-block extractor systems are then connected via bridges, which are similar to repetition code adapters~\cite{swaroop2026universal}, to create \textit{multi-block extractors}. Product measurements across multiple codeblocks are accomplished by activating the \textit{bridge qubits}.

\begin{remark}
    In practice, many codes possess symmetry properties allowing for implementation of some logical Clifford gates~\cite{breuckmann2024fold,eberhardt2024logical}. Therefore it is often sufficient to use \textit{partial extractors}, which do not connect to every logical representative in the original codeblock, and can thus be practically optimised to reduce the space cost~\cite{yoder2025tour, webster2026pinnacle}.
\end{remark}

The most salient facts about extractor systems are the following:
\begin{enumerate}
    \item For a codeblock with length $n$, a single-block extractor has asymptotic space cost $\mathcal{O}(n\log^3 n)$, via the familiar Decongestion lemma, thickening and cellulation. A suitable single-block extractor exists for every codeblock.
    \item The port function connects the original codeblock only to the original measurement graph before thickening, and not to the additional vertices or edges used for decongestion.
    \item The bridge system always exists, can be found efficiently, and does not incur any additional asymptotic overhead.
    \item The bridge system is attached only to the measurement graph, and not to the additional vertices or edges used for decongestion.
\end{enumerate}

Due to facts 2 and 4, the decongestion and thickening step for the single-block extractor can be replaced by taking the parsimonious cone, without modifying the scheme elsewhere. The connections to the original code, and to other single-block extractors via the bridge system, remain unchanged under this replacement.

As a consequence, the asymptotic space overhead of extractors is reduced from the prior best of $\mathcal{O}(n\log^3 n)$ to $\mathcal{O}(n\log n)$.

\subsection{Single-shot surgery}\label{app:single-shot}

Fast surgery with qLDPC codes, which takes $o(d)$ rounds of syndrome extraction to fault-tolerantly measure one or more logical operators, has been a focus of recent study~\cite{chang2026constant, hillmann2025single, cowtan2025fast,baspin2025fast}.

In this section we discuss the single-shot surgery gadget constructed in Ref.~\cite[Sec.~IV B]{baspin2025fast}. 
The goal is to measure an $X$-type or $Z$-type operator from a CSS LDPC code.
The protocol starts by taking a perfect round of syndrome measurement on the base code, then uses an auxiliary system to measure a logical operator for $\mathcal{O}(1)$ rounds, and end with another perfect round of syndrome measurement on the base code. 
By introducing sufficient meta-checks (into the ancilla system) to protect the new ancilla stabilizers, which determine the logical measurement outcome, from error, Ref.~\cite{baspin2025fast} proves that this single-shot surgery protocol is fault-tolerant. 
Because of the assumption of perfect syndrome rounds before and after the surgery measurement,
this protocol is fast for codes which admit single-shot state preparation. 
For more generic codes, obtaining perfect rounds of syndromes may take $\Omega(d)$ rounds of stabilizer measurements.\footnote{See Ref.~\cite[Sec.~IA]{chang2026constant} for further discussion.}

The ancilla system is constructed in the following manner:
\begin{enumerate}
    \item First, a copy $A$ of the original code $D$ is taken as a starting point for the auxiliary chain complex, with the identity chain map $f = I$ between them. This uses $\mathcal{O}(n)$ qubits.
    \item Second, each logical operator in $A$ \textit{apart from} the operator desired to be measured in a single-shot fashion is converted into a stabilizer using the parallel logical measurement scheme of Ref.~\cite{cowtan2025parallel}, which adds up to $\mathcal{O}(nk(\log k+\log^3 n))$ new qubits to the auxiliary system.
    \item Third, $A$ is thickened up to $d$ times, bringing the total qubit cost to $\mathcal{O}(nkd(\log k+\log^3 n))$, in order to boost the expansion of the ancilla system such that the distance of the deformed code is high.
    \item Fourth, take ${\rm cone}(f)$, degree-shifting $A$ such that what was previously the qubit degree is now a check degree, to create a deformed code with $k-1$ logical operators.
\end{enumerate}

Crucially, due to the large number of checks in the ancillary system and structural properties of the resultant chain complex, the deformed code has sufficient metachecks such that the logical measurement can be performed in $\mathcal{O}(1)$ rounds.

Because the parallel logical measurement scheme of Ref.~\cite{cowtan2025parallel} is used as a subroutine, the knock-on effect of taking parsimonious cones as opposed to decongestion and thickening, described in App.~\ref{app:parallel_measurement}, can be immediately applied to the single-shot surgery scheme as well, without affecting the structural properties that allow for fast measurement. Hence the space cost of the single-shot surgery scheme is reduced from $\mathcal{O}(nkd(\log k+\log^3 n))$ to $\mathcal{O}(nkd(\log k+\log n))$.

\subsection{Constant-time surgery}

A different formulation of single-shot surgery is called constant-time surgery, which was introduced in Ref.~\cite{cowtan2025fast}. 
The gist is that instead of performing one surgery operation in $O(1)$ time, we perform $O(d)$ operations in $O(d)$ time, so that every surgery operation costs $O(1)$ time in amortization.
This protocol is fault-tolerant given $O(d)$ rounds of syndrome measurement before and after the $O(d)$ surgery operations.
Importantly, it does not require the base code to admit single-shot state preparation. Its notion of fault-tolerance is broadly analogous to that of algorithmic fault-tolerance~\cite{zhou2025low}. 

The gadgets constructed in Ref.~\cite{chang2026constant} on 2D hypergraph product codes follow this constant-time surgery formulation. 
Each gadget measures a collection of logical $Z$-type (or $X$-type) operators in amortized $\mathcal{O}(1)$ time. 
As with the other schemes described in this appendix, for brevity we avoid reiterating the homological algebra used to construct the protocol. At a high level, the construction uses the following steps:
\interfootnotelinepenalty=10000
\begin{enumerate}
    \item Certain sets of logical operators in a hypergraph product code can each be described by a product of (a) a classical codeword $\ell$ and (b) the physical space $D$ of a classical code.\footnote{In the perspective of hypergraph product logical qubits in a grid~\cite{xu2025fast}, this corresponds to a row from the grid.} Choose such a set $\mathcal{L}$; the classical codeword and the physical space each have size at most $\mathcal{O}(\sqrt{n})$.
    \item Construct the gauging logical measurement graph $\mathcal{G}$ for $\ell$, viewed as a logical operator of a quantum code with no $Z$ stabilizers. This uses $\mathcal{O}(\sqrt{n}\log^3\sqrt{n}) = \mathcal{O}(\sqrt{n}\log^3 n)$ space, by decongestion and thickening.
    \item Take the tensor product $\mathcal{G}\otimes D$ to form a length-2 chain complex. This complex always has sufficient expansion and possesses a canonical chain map $f$ into the original code, determined by the codeword $\ell$ and space $D$, and has size at most $\mathcal{O}(\sqrt{n} \circ \sqrt{n} \log^3 n) = \mathcal{O}(n \log^3 n)$.
    \item The deformed code is given by ${\rm cone}(f)$, and has used $\mathcal{O}(n \log^3 n)$ new qubits to measured $k_D$ logical operators simultaneously, in $\mathcal{O}(1)$ rounds.
\end{enumerate}

The reason why a set of such measurements can be performed in constant time in amortization is that each measurement takes $\mathcal{O}(1)$ rounds due to a sufficient set of metachecks,
and furthermore it is proven that there are no low-weight logical faults which can extend between measurement rounds, even when the idling time between measurement rounds is $\mathcal{O}(1)$~\cite{cowtan2025fast}.

In the protocol above, the algebraic properties such as the canonical chain map and sufficient metacheck distance, are invariant upon exchanging the gauging logical measurement for parsimonious surgery. As a consequence, that step can be replaced and the space overhead reduced to $\mathcal{O}(n \log n)$.










\end{document}